\newcommand{\OMIT}[1]{}
\newcommand{\R}{\mathbb{R}}
\newcommand{\RealTh}{\ensuremath{\struct{\R,+,0,1}}}
\newcommand{\hi}[1]{\ensuremath{|#1|}}
\newcommand{\size}[1]{\ensuremath{\|#1\|}}
\newcommand{\shapet}[1]{\ensuremath{\langle #1 \rangle}}
\newcommand{\listc}[2]{\ensuremath{\textsf{concat}(#1,#2)}}
\newcommand{\hide}[1]{}
\newcommand{\fname}[1]{\ensuremath{\textsc{#1}}}
\newcommand{\lbow}[1]{\ensuremath{_{#1}\!{\bowtie}}}
\newcommand{\rbow}[1]{\ensuremath{\bowtie_{#1}}}
\def \defi {\ensuremath{\stackrel{\text{def}}{=}}}
\def \sleaf {\ensuremath{\ast}}
\def \bmodel {\ensuremath{\mathcal{A}}}
\def \lmodel {\ensuremath{\mathcal{C}}}
\def \tmodel {\ensuremath{\mathcal{K}}}
\def \treedom {\ensuremath{\mathbb{T}}}
\newcommand \tree[2] {\ensuremath{\Tree [ $#1$ $#2$ ]}}
\newcommand\fth[1]{\ensuremath{\mathsf{Th}(#1)}}
\def \elist {\ensuremath{\mathsf{nil}}}
\def \blmodel {\ensuremath{\mathcal{L}}}
\def \brmodel {\ensuremath{\mathcal{B}}}
\def \ltree {\ensuremath{\Tree [ $\bullet$ $\circ$ ]}}
\def \rtree {\ensuremath{\Tree [ $\circ$ $\bullet$ ]}}
\def \embed {\ensuremath{g}}
\def \utree {\ensuremath{\mathcal{U}(\mathbb{T})}}
\def \shapedom {\ensuremath{\mathbb{S}}}
\newcommand \ndex[1] {\ensuremath{#1{\mathsf{EXP}}}}
\newcommand{\overbar}[1]{\mkern 2.3mu\overline{\mkern-2.3mu#1\mkern-2.3mu}\mkern 2.3mu}
\newcommand{\struct}[1]{\ensuremath{\langle #1 \rangle}}
\begin{document}

\title{Complexity Analysis of Tree Share Structure}

\author{Xuan-Bach Le$^{\dagger}$ \quad Aquinas Hobor$^{\ddagger,*}$ \quad Anthony W. Lin$^{\dagger}$}

\institute{$^{\dagger}$University of Oxford \quad $^{\ddagger}$Yale-NUS College \quad $^{*}$National University of Singapore}
\maketitle

\begin{abstract}
The tree share structure proposed by Dockins et al. is an
elegant model for tracking disjoint ownership in
concurrent separation logic, but decision procedures for tree shares
are hard to implement due to a lack of a systematic theoretical study.
We show that the first-order theory of the full Boolean algebra
of tree shares (that is, with all tree-share constants) is decidable and has
the same complexity as of the first-order theory of Countable Atomless
Boolean Algebras.  We prove that combining this additive structure with
a constant-restricted unary multiplicative ``relativization'' operator has a
non-elementary lower bound.  We examine the consequences of this lower bound and prove
that it comes from the combination of both theories by proving an upper bound
on a generalization of the restricted multiplicative theory in isolation.
\end{abstract}

\section{Introduction}
\label{sec:intro}

One general challenge in concurrent program verification is how to specify the ownership of shared resources among threads. A common solution is to tag shared resources with \emph{fractional shares}
that track ``how much'' of a resource is owned by an actor.  A \emph{policy} maps ownership quanta
with permitted behaviour.  For example, a memory cell can be ``fully owned'' by a thread, permitting 
both reading and writing; ``partially owned'', permitting only reading; or ``unowned'', permitting 
nothing; the initial model of fractional shares~\cite{Boyland:2003} was rationals in $[0,1]$.  Since their introduction, many program logics have used a variety of flavors of fractional 
permissions to verify programs~\cite{Boyland:2003,bornat05,parkinson:nbs,HoborG11,MSL,villard:phd,appel:programlogics,Le:2015,Caper:2016,Doko:2017,le17:logic,Dohrau18}.

Rationals do not mix cleanly with concurrent separation logic~\cite{OHearn:2007} because they do not preserve the ``disjointness'' property of separation logic~\cite{parkinson:phd}.  Dockins \emph{et al.}~\cite{dockins09:sa} proposed a ``tree share'' model that do preserve this property, and so a number of program logics have incorporated them~\cite{hobor08:phd,HoborG11,villard:phd,appel:programlogics,le17:logic}.

In addition to their good metatheoretic properties, tree shares have desirable computational
properties, which has enabled several highly-automated verification tools to incorporate them~\cite{villard:phd,hobor12:barriertool} via heuristics and decision procedures~\cite{le12:sharedec,le17:certproc}.  As we shall explain in \S\ref{sub:overview}, tree shares have both ``additive'' and ``multiplicative'' substructures.  All of the verification
tools used only a restricted fragment of the additive substructure (in particular, with only
one quantifier alternation) because the general theory's computational structure was not
well-understood.  These structures are worthy of further study both because even short programs 
can require hundreds of tree share entailment queries in the permitted formalism~\cite[Ch4:\S2,\S6.4,\S6.6]{cris:thesis}, and because recent program logics have
shown how the multiplicative structures aid program verification~\cite{appel:programlogics,le17:logic}.

Recently, Le \emph{et al.} did a more systematic analysis of the computational complexity of
certain classes of tree share formulae~\cite{le16:complex}; briefly:
\begin{itemize}
\item the additive structure forms a Countable Atomless Boolean Algebra, giving a well-understood complexity for all first-order formulae \emph{so long as they only use the distinguished constants} ``empty'' \textbf{0} and ``full'' \textbf{1};
\item the multiplicative structure has a decidable existential theory but an undecidable first-order theory; and
\item the additive theory in conjunction with a weakened version of the multiplicative theory---in particular, only permitting multiplication by constants on the
    right-hand side---regained first-order decidability.
\end{itemize}

\textbf{Contributions.} We address significant gaps in our theoretical understanding of tree 
shares that deter their use in automated tools for more sophisticated tasks.
\begin{itemize}
\item[\S\ref{sec:boolean}] Moving from a restricted fragment of a
first-order additive theory to the more general setting of unrestricted first-order formulae 
over Boolean operations is intuitively appealing due to the increased expressibility of the 
logic.  This expressibility even has computational consequences, as we demonstrate by using it
to remove a common source of quantifier alternations.  However, verifications in practice
often require formulae that incorporate more general constants than \textbf{0} and \textbf{1}, limiting the application of the analysis from~\cite{le16:complex} in practice.
This is unsurprising since it is true in other settings: many Presburger 
formulae that arise in engineering contexts, for example, are littered with application-specific
constants, \emph{e.g.}, $\forall x.(\exists y.x + y = 7) \Rightarrow (x + 13 < 21)$.  
A recent benchmark using tree shares for program verification~\cite{le17:certproc} supports this intuition: it made 16k calls in the supported first-order additive fragment, and 21.1\% (71k/335k) of the constants used in practice were neither \textbf{0} nor \textbf{1}.  Our main contribution
on the additive side is to give a polynomial-time algorithm that reduces first-order additive
formulae with arbitrary tree-share constants to first-order formulae using only \textbf{0} and \textbf{1}, demonstrating that the additive structure's exact complexity is $\mathsf{STA}(\ast,2^{n^{O(1)}},n)$-complete and closing the theory/practice gap between~\cite{le16:complex} and~\cite{le17:certproc}.
\item[\S\ref{sec:combresult}] We examine the combined additive/restricted multiplicative theory
proved decidable in~\cite{le16:complex}.  We prove a nonelementary lower bound
for this theory, via a reduction from the combined theory into the string structure with suffix 
successors and a prefix relation, closing the complexity gap in the theory. 
\item[\S\ref{sec:bowresult}] We investigate the reasons
for, and mitigants to, the above nonelementary lower bound.  First, we show that the first-order
restricted-multiplicative theory on its own (\emph{i.e.}, without the Boolean operators) has
elementary complexity via an efficient isomorphism with strings equipped with prefix and suffix successors.  Thus, the nonelementary behavior comes precisely
from the combination of both theories.  Lastly, we examine the kinds of formulae that we expect
in practice---for example, those coming from biabduction problems discussed in~\cite{le17:logic}---and notice that they have elementary complexity.  
\end{itemize}
The other sections of our paper support our contributions by (\S\ref{sec:prelim}) overviewing tree shares, related work, and several basic complexity results; and by (\S\ref{sec:conclude}) discussing
directions for future work and concluding.

\hide{
as follows. In , we overview tree shares. Finally, in
In \S\ref{sec:boolean}, we propose a polynomial time algorithm that can reduce BA tree share constraints into equivalent standard atomless BA formulae, which gives justification for the claimed complexity. In \S\ref{sec:bowresult}, the complexity is proved through the establishment of . In \S\ref{sec:combresult}, the nonelementary proof is achieved

. Finally, we give our conclusion and mention the future work in .

 even before the application of powerful heuristics such as those used in MONA~\cite{monamanual2001}, or newer techniques like antichain/simulation
\cite{simulation}..  

,
and compare them to the known complexity 

 by
examining the kinds of formulae that we expect in practice.  Although the worst case

    causes

the complexity
of this theory 

Interestingly, when combining these two elementary theories together, the obtained theory has a non-elementary complexity lower bound. This result closes the unknown complexity gap in~\cite{le16:complex}. It also implies that any algorithm that attempts to solve the combined theory has to make a trade-off between efficiency and completeness.

\item

$\bullet$ nor $\circ$.

improve a formula's computational
behaviour by remove a common source of quantifier alternations.

 increase,
but we can often restructure formulae to simplify them computationally; for example, in \S\ref{sec:baresult} we show how to

     to only permit multiplication by a constant on the right-hand side,

    that only contain the dis

, and most of the program logics, used only the additive substructure, partly because
the multiplicative substructure's 

as a result, and verification tools have

 (see), which has been adopted in several verification frameworks~\cite{Dohrau18,Le:2015,Doko:2017,Caper:2016}. However, rational model is not a suitable candidate for separation logic~\cite{Ishtiaq:2001} as it fails to preserve the disjointness property of the logic (as first pointed out by~\cite{parkinson:phd}) which is the heart for modular reasoning. 
 
 Recently, 
 
 that can overcome the shortcoming of rational shares and then Le and Hobor~\cite{le17:logic} showed how to integrate tree shares into concurrent separation logic~\cite{OHearn:2007} to reason about ownership over arbitrary heap predicates. Due to their good metatheoretic properties, a number of program logics~\cite{
 
 } have utilized tree shares for complex ownership reasoning.

 To the best of our knowledge, all of the automated tools for tree shares to date have solely  focused on the additive structure, starting from various heuristics~\cite{hobor12:barriertool,villard:phd} through decision procedures~\cite{le12:sharedec,le17:certproc} that can only handle restricted fragments of the first-order theory. Moreover, even

 .  In contrast, the computational aspects of the multiplicative structure were unknown until recently~\cite{le16:complex}, only heavily human-guided tools~\cite{appel:programlogics} have utilized it to date. Furthermore, the automation of inference framework in~\cite{le17:logic} would require a solver that can handle tree share constraints with both addition $\oplus$ and multiplication $\bowtie$. As a result, we hope our complexity results will help establish the necessary theoretical foundation for the development of sophisticated tree share solvers.

\textbf{Contributions.}
We study the precise first-order complexity of tree share substructures
when formulas contain arbitrary tree constants and/or when certain
tree operands are constant.

\begin{itemize}
\item[\S\ref{sec:baresult}] On the BA side, we show that
the first-order theory of BA structure $\struct{\treedom,\sqcup,\sqcap,\bar{\cdot}}$, with arbitrary tree constants, is $\mathsf{STA}(\ast,2^{n^{O(1)}},n)$-complete. Our result is strictly more general than the result in~\cite{le16:complex} which proved the same complexity but only for tree share formulae where $\{\bullet,\circ\}$ are the only constants.
\item[\S\ref{sec:bowresult}] On the multiplicative side of $\bowtie$, we show that
the first-order theory of $\struct{\treedom,\lbow{\tau},\rbow{\tau}}$,
which allows multiplication by constants on both the left and the right, is $\mathsf{STA}(\ast,2^{O(n)},n)$-complete. In addition, our constructive proof yields an efficient algorithm that can transform tree share constraints into string constraints that can be solved by existing string solvers such as Z3~\cite{Z3} and CVC4~\cite{Liang:2014}.
\item[\S\ref{sec:combresult}]
Interestingly, when combining these two elementary theories together, the obtained theory has a non-elementary complexity lower bound. This result closes the unknown complexity gap in~\cite{le16:complex}. It also implies that any algorithm that attempts to solve the combined theory has to make a trade-off between efficiency and completeness.
\end{itemize}

Our paper is structured as follows. In \S\ref{sec:prelim}, we provide the preliminaries over tree shares as well as several basic complexity results. In \S\ref{sec:boolean}, we propose a polynomial time algorithm that can reduce BA tree share constraints into equivalent standard atomless BA formulae, which gives justification for the claimed complexity. In \S\ref{sec:bowresult}, the complexity is proved through the establishment of an efficient isomorphism between tree share  structure with multiplicative operator and the string structure with prefix and suffix successors. In \S\ref{sec:combresult}, the nonelementary proof is achieved via a reduction from the combined theory into the string structure with suffix successors and prefix relation. Finally, we give our conclusion and mention the future work in \S\ref{sec:conclude}.
}
\hide{

One of the main applications of tree shares is to model fractional permissions in separation logic (SL), a popular formal specification language for program verification. Traditionally, SL contains the pointer predicate $x \mapsto v$ which indicates $v$ is the value pointed by address $x$. When dealing with concurrency, the pointer predicate is upgraded into $x \xmapsto{\pi} v$ where $\pi$ is the fractional share for read, write or deallocation permission. Share addition $\oplus$ is defined in term of Boolean operator as disjoint union, \emph{i.e.}
$$
\pi_1 \oplus \pi_2 = \pi_3 \defi \pi_1 \sqcup \pi_2 = \pi_3 \wedge \pi_1 \sqcap \pi_2 = \circ .
$$

Using addition, one can split or combine permissions via the bi-entailment
$$
x \xmapsto{\pi_1 \oplus \pi_2} v \dashv \vdash x \xmapsto{\pi_1} v \star x \xmapsto{\pi_2} v.
$$
where $\star$ is the signature disjoint conjunction in SL. The other well-known
permission model is rationals with addition
$\struct{(0,1],+}$~\cite{Boyland:2003} but several
observations~\cite{parkinson:phd,le17:certproc} pointed out that disjoint
conjunction $\star$ misbehaves under the notion of rational permissions. One of
the severe consequences is the shape deformation of recursive predicates such as lists or trees, \emph{e.g.}, a dag satisfies the definition of tree. As tree share structure $\struct{\treedom,\oplus}$ can avoid such issue, it is used as permissions in several verification tools~\cite{Appel:11,MSL,cris:thesis} and requires a specialized solver to handle~\cite{le17:certproc,HoborG11,le12:sharedec}. However, these solvers are either incomplete~\cite{HoborG11} or can only solve tree share constraints in restricted forms~\cite{le12:sharedec,le17:certproc}.

Recently, Le and Hobor~\cite{le17:logic} used tree shares $\struct{\treedom,\oplus,\bowtie}$ to develop a proof system to reason about disjoint permissions. In their framework, an arbitrary predicate $P$ with permission $\pi$ can be specified as $\pi \cdot P$ and permission sharing can be achieved using the bi-entailment
$$
\pi \cdot P \dashv \vdash (\pi \bowtie \Tree[ $\bullet$ $\circ$ ]) \cdot P \star (\pi \bowtie \Tree[ $\bullet$ $\circ$ ]) \cdot P.
$$
The above bi-entailment requires the following property of tree shares to hold
$$
\forall \pi. ~\pi = (\pi \bowtie \Tree[ $\bullet$ $\circ$ ]) \oplus(\pi \bowtie \Tree[ $\circ$ $\bullet$ ]).
$$

\textbf{Motivation}. We provide a simple example of a program together with its
formal verification proof using tree shares in the
appendix \ref{sec:appendix}. Note that the above constraint contains both $\oplus$ and $\bowtie$ which cannot be handled by previous share solvers. Furthermore, the example suggests that we may require arbitrary tree share constants in the formula to express complicated properties. As a result, our main motivation is to investigate the complexity of tree share structure so that existing tree share solvers like~\cite{le17:certproc,HoborG11,le12:sharedec} can benefit from it as they can solve tree constraints in more general forms, \emph{e.g.}, with both $\bowtie$, $\oplus$ and arbitrary tree constants.

\textbf{Contributions.}
We study the first-order complexity of tree share substructures
when formulas contain arbitrary tree constants and/or when certain
tree operands are constant.

\begin{itemize}
\item[\S\ref{sec:baresult}] On the Boolean Algebra side, we show that
the first-order theory of $\bmodel$ with arbitrary tree constants
is $\mathsf{STA}(\ast,2^{n^{O(1)}},n)$-complete which is computationally equivalent to the first-order theory of atomless BAs\footnote{$\mathsf{STA}(\ast,t(n),a(n))$ is the class of alternating Turing machines~\cite{Chandra:1981} that use $t(n)$ time and $a(n)$ alternations.}.
\item[\S\ref{sec:bowresult}] On the multiplicative side of $\bowtie$, we show that
the first-order theory of $\brmodel = \struct{\treedom,\lbow{\tau},\rbow{\tau}}$,
which allows multiplication by constants on both the left and the right, is $\mathsf{STA}(\ast,2^{O(n)},n)$-complete.
\item[\S\ref{sec:combresult}] 
Interestingly, when combining these two theories with elementary complexity together, we show that we cannot do better
than the existing non-elementary upper bound.
In fact, this non-elementary lower bound already holds for the first-order
theory of $\lmodel$.
\end{itemize}

\subsection{Tree shares and their complexity}

}

\section{Preliminaries}\label{sec:prelim}
Here we document the preliminaries for our result.  Some are standard (\S\ref{sub:comp}) while others are specific to the domain of tree shares (\S\ref{sub:overview}--\S\ref{sub:treepre}).

\subsection{Complexity preliminaries}
\label{sub:comp}

We assume that the readers are familiar with basic concepts in computational
complexity such as Turing machine, many-one reduction, space and time
complexity classes such as $\mathsf{NP}$ and $\mathsf{PSPACE}$. A problem is
\emph{nonelementary} if it cannot be solved by any deterministic Turing machine that can be time-bounded by one of the exponent functions $\mathsf{exp(1)} = 2^n, \mathsf{exp(n+1) = 2^{\mathsf{exp}(n)}}$. Let $\mathsf{A}$, $\mathsf{R}$ be complexity classes,
a problem $P$ is $\leq_{\mathsf{R}}$-complete for $\mathsf{A}$ iff $P$ is in
$\mathsf{A}$ and every problem in $\mathsf{A}$ is many-one reduced into $P$
via Turing machines in $\mathsf{R}$. In addition, we use
$\le_{\mathsf{R}\text{-lin}}$ to assert \emph{linear reduction} that belongs
to $\mathsf{R}$ and only uses linear space with respect to the problem's
size. In particular, $\le_{\text{log-lin}}$ is linear log-space reduction.
Furthermore, we denote $\mathsf{STA}(p(n),t(n),a(n))$ the class of
alternating Turing machine~\cite{Chandra:1981} that uses at most $p(n)$
space, $t(n)$ time and $a(n)$ alternations between universal states and
existential states or vice versa for input of length $n$. If any of the three
bounds is not specified, we replace it with the symbol $\ast$, \emph{e.g.}
$\mathsf{STA}(\ast,2^{n^{O(1)}},n)$ is the class of alternating Turing machines
that have exponential time complexity and use at most $n$ alternations.

\subsection{Overview of tree share structure}\label{sub:overview}

A tree share is a binary tree with Boolean leaves $\circ$ (white leaf) and
$\bullet$ (black leaf).  Full ownership is represented by $\bullet$ and no
ownership by $\circ$.  For fractional ownership, one can use, \emph{e.g.}
$\Tree [ $\bullet$ $\circ$ ] $, to represent the left half-owned resource.
Importantly and usefully, $\Tree [ $\circ$ $\bullet$ ] $ is a distinct tree
share representing the other right half. We require tree shares are in
canonical form, that is, any subtree $\Tree [ $\tau$ $\tau$ ]$ where $\tau
\in \{\bullet,\circ\}$ needs to be rewritten into $\tau$. For example, both
$\Tree [ $\bullet$ $\circ$ ]$ and $\Tree [ [ $\bullet$ $\bullet$ ] [ $\circ$
[ $\circ$ $\circ$ ] ] ]$ represent the same tree share but only the former
tree is canonical and thus valid. As a result, the set of tree shares
$\mathbb{T}$ is a strict subset of the set of all Boolean binary trees. Tree
shares are equipped with Boolean operators $\sqcup$ (union), $\sqcap$
(intersection) and $\bar{\cdot}$ (complement). When applied to tree shares of
height zero, \emph{i.e.} $\{\bullet,\circ\}$,  these operators  give the same
results as in the case of binary BA. Otherwise, our tree shares need to be
unfolded and folded accordingly before and after applying the operators
leaf-wise, \emph{e.g.}

$$
\overline{\Tree [ $\bullet$ $\circ$ ] } = \Tree [ $\circ$ $\bullet$ ] \quad\quad \Tree [ [ $\bullet$ $\circ$ ] $\bullet$ ] \sqcup \Tree [ $\circ$ [ $\circ$ $\bullet$ ] ] \cong  \Tree [ [ $\bullet$ $\circ$ ] [ $\bullet$ $\bullet$ ] ] \sqcup \Tree [ [ $\circ$ $\circ$ ] [ $\circ$ $\bullet$ ] ] = \Tree [ [ $\bullet$ $\circ$ ] [ $\bullet$ $\bullet$ ] ] \cong \Tree [ [ $\bullet$ $\circ$ ] $\bullet$ ]~.
$$
The additive operator $\oplus$ can be defined using $\sqcup$ and $\sqcap$, \emph{i.e.} disjoint union:

$$
a \oplus b = c ~\defi~ a \sqcup b = c \wedge a \sqcap b = \circ .
$$
Tree shares also have a multiplicative operator $\bowtie$ called ``bowtie'', where $\tau_1 \bowtie \tau_2$ is defined by replacing each black leaf $\bullet$ of $\tau_1$ with an instance of $\tau_2$, \emph{e.g.}

$$
\Tree [ [ $\bullet$ $\circ$ ] [ $\circ$ $\bullet$ ] ] ~ \bowtie  ~ \tree{\circ}{\bullet} ~~ = ~~ \Tree [ [ [ $\circ$ $\bullet$ ] $\circ$ ] [ $\circ$ [ $\circ$ $\bullet$ ] ] ]~.
$$

While the $\oplus$ operator has standard additive properties such as
commutativity, associativity and cancellativity, the $\bowtie$ operator
enjoys the unit $\bullet$, is associative, injective over non-$\circ$
arguments, and distributes over $\{\sqcup,\sqcap,\oplus\}$ on the
left~\cite{dockins09:sa}.
However, $\bowtie$ is not commutative, \emph{e.g.}:
\[
\Tree [ $\bullet$ $\circ$ ] \bowtie \Tree [ $\circ$ $\bullet$ ] ~ = ~
\Tree [ [ $\circ$ $\bullet$ ] $\circ$ ] ~ \neq ~
\Tree [ $\circ$ [ $\bullet$ $\circ$ ] ] ~ = ~
\Tree [ $\circ$ $\bullet$ ] \bowtie \Tree [ $\bullet$ $\circ$ ]
\]
The formalism of these binary operators can all be found
in~\cite{dockins09:sa}.

\subsection{Tree shares in program verification}\label{sub:program}

Fractional permissions in general, or tree shares in particular, are integrated into separation logic to reason about ownership. In detail, the mapsto predicate $x \mapsto v$ is enhanced with the permission $\pi$, denoted as $x \xmapsto{\pi} v$, to assert that $\pi$ is assigned to the address $x$ associated with the value $v$. This notation of fractional mapsto predicate allows us to split and combine permissions conveniently using the additive operator $\oplus$ and disjoint conjunction $\star$:

\begin{equation}
x \xmapsto{\pi_1 \oplus \pi_2} v \dashv \vdash x \xmapsto{\pi_1} v \star x \xmapsto{\pi_2} v.
\label{eq:fmap}
\end{equation}

The key difference between tree share model $\struct{\mathbb{T},\oplus}$ and rational model $\struct{\mathbb{Q},+}$ is that the latter fails to preserve the disjointness property of separation logic. For instance, while the predicate $x \mapsto 1 \star x \mapsto 1$ is unsatisfiable, its rational version $x \xmapsto{0.5} 1 \star x \xmapsto{0.5} 1$, which is equivalent to $x \xmapsto{1} 1$ by~(\ref{eq:fmap}), is satisfiable. On the other hand, the tree share version $x \xmapsto{\Tree [ $\bullet$ $\circ$ ]} \star x \xmapsto{\Tree [ $\bullet$ $\circ$ ]}$ remains unsatisfiable as the sum $\Tree [ $\bullet$ $\circ$ ] \oplus \Tree [ $\bullet$ $\circ$ ]$ is undefined. Such defect of the rational model gives rise  to the deformation of recursive structures or elevates the difficulties of modular reasoning, as first pointed out by~\cite{parkinson:phd}.

Recently, Le and Hobor~\cite{le17:logic} proposed a proof system for disjoint permissions using the structure $\struct{\treedom,\oplus,\bowtie}$. Their system introduces the notion of predicate multiplication where $\pi \cdot P$ asserts that the permission $\pi$ is associated with the predicate $P$. To split the permission, one can apply the following bi-entailment:

$$
\pi \cdot P \dashv \vdash (\pi \bowtie \Tree[ $\bullet$ $\circ$ ]) \cdot P \star (\pi \bowtie \Tree[ $\circ$ $\bullet$ ]) \cdot P.
$$

which requires the following property of tree shares to hold:
\begin{equation}
\label{eq:fragmentex}
\forall \pi. ~\pi = (\pi \bowtie \Tree[ $\bullet$ $\circ$ ]) \oplus(\pi \bowtie \Tree[ $\circ$ $\bullet$ ]).
\end{equation}

Note that the above property demands a combined reasoning of both $\oplus$
and $\bowtie$. While such property can be manually proved in theorem provers
such as Coq~\cite{Coq} using inductive argument, it cannot be handled
automatically by known tree share solvers~\cite{le12:sharedec,le17:certproc}
due to the shortness of theoretical insights.

\subsection{Previous results on the computational behavior of tree
shares}\label{sub:treepre}

The first sophisticated analysis of the computational properties of tree
shares were done by Le \emph{et al.}~\cite{le16:complex}. They showed that
the structure $\struct{\treedom, \sqcup, \sqcap, \bar{\cdot}}$ is a Countable
Atomless BA and thus is complete for the Berman complexity class
$\mathsf{STA}(\ast,2^{n^{O(1)}},n)$---problems solved by alternating
exponential-time Turing machines with unrestricted space and $n$
alternations---\emph{i.e.} the same complexity as the first-order theory over
the reals $\RealTh$ with addition but no multiplication~\cite{Berman}.
However, this result is restrictive in the sense that the formula class only
contains $\{\bullet,\circ\}$ as constants, whereas in practice it is
desirable to permit arbitrary tree constants, \emph{e.g.} $\exists a\exists
b.~a \sqcup b = \Tree [ $\bullet$ $\circ$ ]$.

When the multiplication operator $\bowtie$ is incorporated, the computational
nature of the language becomes harder.  The structure $\struct{\treedom, \bowtie}$---without
the Boolean operators---is isomorphic to word equations~\cite{le16:complex}.
Accordingly, its first-order theory is undecidable while its existential theory is decidable
with continuously improved complexity bounds currently at
$\mathsf{PSPACE}$ and $\mathsf{NP}$-hard (starting from Makanin's
argument \cite{Makanin} in 1977 and continuing with \emph{e.g.} \cite{Jez}).

Inspired by the notion of ``semiautomatic structures''~\cite{jain14:semi},
Le \emph{et al.}~\cite{le16:complex} restricted
$\bowtie$ to take only constants on the right-hand side, \emph{i.e.} to a
family of unary operators indexed by constants $\rbow{\tau}(x) \defi x \bowtie \tau$.
Le \emph{et al.} then examined
$ \lmodel \defi \struct{\treedom,\sqcup,\sqcap,\bar{\cdot},\rbow{\tau}}$.
Note that the verification-sourced sentence~\eqref{eq:fragmentex}
from \S\ref{sub:program} fits perfectly into $\lmodel$:
$\forall \pi.~ \pi ~ = ~ \rbow{ \Tree[ $\bullet$ $\circ$ ] } \! (\pi) ~ \oplus ~ \rbow{ \Tree[ $\circ$ $\bullet$ ] } \! (\pi)$.
Le \emph{et al.} encoded $\lmodel$ into \emph{tree-automatic structures}
\cite{blumensath04:as}, i.e., logical structures whose constants can be encoded
as trees, and domains and predicates finitely represented by tree automata. As 
a result, its first-order theory---with
arbitrary tree constants---is
decidable~\cite{blumensath04:as,blumensath99:as,anthony-thesis}, but until our 
results in
\S\ref{sec:combresult} the true complexity of $\lmodel$ was unknown.

\hide{
Let $\mathsf{R}$, $\mathsf{C}$ be complexity classes, a problem $P$ is $\le_{\mathsf{R}}$-hard for $\mathsf{C}$ if each problem in $\mathsf{C}$ can be reduced to $P$ by a reduction in $\mathsf{R}$. Similarly, $P$ is $\le_{\mathsf{R}}$-complete for $\mathsf{C}$ if it is in $\mathsf{C}$ and $\le_{\mathsf{R}}$-hard for $\mathsf{C}$.

 Recall some classical complexity results for Boolean Algebras that we will need in subsequent sections:

\begin{proposition}[\cite{kim96:nba}]\label{prop:np}
The $\exists$-theory of infinite BAs is $\le_{\text{log}}$-complete for $\mathsf{NP}$.
\end{proposition}

\begin{proposition}[\cite{kozen79:comba}]\label{prop:kozen}
The first-order theory of atomless BAs is $\mathsf{STA}(\ast,2^{O(n)},n)$. Furthermore, it is $\le_{\text{log}}$-complete for $\mathsf{STA}(\ast,2^{n^{O(1)}},n)$~\footnote{This is a simple derivation from the result in the paper that states the theory is $\le_{\text{log}}$-complete for $\mathsf{STA}(\ast,2^{O(n)},n)$.}.
\end{proposition}

Let the \emph{exponent function} $\mathsf{exp}:\mathbb{N}^2 \mapsto \mathbb{N}$ be defined as $\mathsf{exp}(n,0) = n$ and $\mathsf{exp}(n,k+1) = 2^{\mathsf{exp}(n,k)}$, then the complexity class $\ndex{k}$ contains problems which can be decided by a halted deterministic Turing machine of time complexity $\mathsf{exp}(O(n),k)$ for input of length $n$. A problem is \emph{elementary} if it is in $\ndex{k}$ for some $k$, otherwise it is called \emph{non-elementary}.
}
\hide{
 Hence the main motivation of this paper
is to explore the algorithmic consequences of introducing
explicitly given constants in tree share formulas. Generally speaking, the
ability to directly talk about constants in a logical formula is natural, \emph{e.g.},
in Presburger Arithmetic, it is natural to write a formula like
$\forall x,y.\exists z. x + y + 7 \geq z $ wherein integer constants (in this
case 7) are allowed in the formula.
}

\hide{
One of the main applications of tree shares is to model fractional permissions in separation logic (SL)~\cite{Ishtiaq:2001}, a popular formal specification language for program verification. Traditionally, SL contains the pointer predicate $x \mapsto v$ which indicates $v$ is the value pointed by address $x$. When dealing with concurrency, the pointer predicate is upgraded into $x \xmapsto{\pi} v$ where $\pi$ is the fractional share for read, write or deallocation permission. Share addition $\oplus$ is defined in term of Boolean operator as disjoint union, \emph{i.e.}
$$
\pi_1 \oplus \pi_2 = \pi_3 \defi \pi_1 \sqcup \pi_2 = \pi_3 \wedge \pi_1 \sqcap \pi_2 = \circ .
$$

Using addition, one can split or combine permissions via the bi-entailment

where $\star$ is the signature disjoint conjunction in SL. The other well-known
permission model is rationals with addition
$\struct{(0,1],+}$~\cite{Boyland:2003} but several
observations~\cite{parkinson:phd,le17:certproc} pointed out that disjoint
conjunction $\star$ misbehaves under the notion of rational permissions. One of
the severe consequences is the shape deformation of recursive predicates such as lists or trees, \emph{e.g.}, a dag satisfies the definition of tree. As tree share structure $\struct{\treedom,\oplus}$ can avoid such issue, it is used as permissions in several verification tools~\cite{Appel:11,MSL,cris:thesis} and requires a specialized solver to handle~\cite{le17:certproc,HoborG11,le12:sharedec}. However, these solvers are either incomplete~\cite{HoborG11} or can only solve tree share constraints in restricted forms~\cite{le12:sharedec,le17:certproc}.
}

\hide{

When the multiplication operator $\bowtie$ is incorporated, the computational nature of the language becomes harder.  The structure $\struct{\treedom, \bowtie}$---without the Boolean operators---is isomorphic to word equations~\cite{le16:complex}.
Accordingly, its first-order theory is undecidable while its existential theory is decidable (starting from Makanin's
intricate argument \cite{Makanin} in 1977, which is still continuously being simplified \emph{e.g.} see \cite{Jez}). To recover decidability, Le \emph{et al.}~\cite{le16:complex} examined $\lmodel = \struct{\treedom,\sqcup,\sqcap,\bar{\cdot},\rbow{\tau}}$, in which
they only allow multiplication by a constant on the right hand side, \emph{i.e.}
$$\rbow{\tau}(x) = x \bowtie \tau.$$
The first-order theory over this structure is decidable by an encoding into automatic structure using tree
automata~\cite{le16:complex}, a well-known framework to prove decidability and complexity of tree-like structures~\cite{Kiefer:2015,Blume:2013,Bojanczyk:2014,Bojanczyk:2010,Barany:2012,Filiot:2015}. Although this result implies that first-order theory of $\lmodel$ with arbitrary tree constants is decidable, it only gives a nonelementary upper bound (that is, this upper bound cannot be bounded by any fixed tower of exponentials).
Because $\bmodel$ is a substructure of $\lmodel$, this yields an nonelementary upper bound for the
first-order Boolean theory with constants and an elementary lower bound with only $\bullet$/$\circ$ constants (via the CABA argument).  Also, tree automaton construction yields a nonelementary upper bound for the first-order theory of the substructure $\struct{\treedom, \rbow{\tau}}$.

\subsection{Tree shares in program verification}\label{sub:program}

One of the main applications of tree shares is to model fractional permissions in separation logic (SL)~\cite{Ishtiaq:2001}, a popular formal specification language for program verification. Traditionally, SL contains the pointer predicate $x \mapsto v$ which indicates $v$ is the value pointed by address $x$. When dealing with concurrency, the pointer predicate is upgraded into $x \xmapsto{\pi} v$ where $\pi$ is the fractional share for read, write or deallocation permission. Share addition $\oplus$ is defined in term of Boolean operator as disjoint union, \emph{i.e.}
$$
\pi_1 \oplus \pi_2 = \pi_3 \defi \pi_1 \sqcup \pi_2 = \pi_3 \wedge \pi_1 \sqcap \pi_2 = \circ .
$$

Using addition, one can split or combine permissions via the bi-entailment
$$
x \xmapsto{\pi_1 \oplus \pi_2} v \dashv \vdash x \xmapsto{\pi_1} v \star x \xmapsto{\pi_2} v.
$$
where $\star$ is the signature disjoint conjunction in SL. The other well-known
permission model is rationals with addition
$\struct{(0,1],+}$~\cite{Boyland:2003} but several
observations~\cite{parkinson:phd,le17:certproc} pointed out that disjoint
conjunction $\star$ misbehaves under the notion of rational permissions. One of
the severe consequences is the shape deformation of recursive predicates such as lists or trees, \emph{e.g.}, a dag satisfies the definition of tree. As tree share structure $\struct{\treedom,\oplus}$ can avoid such issue, it is used as permissions in several verification tools~\cite{Appel:11,MSL,cris:thesis} and requires a specialized solver to handle~\cite{le17:certproc,HoborG11,le12:sharedec}. However, these solvers are either incomplete~\cite{HoborG11} or can only solve tree share constraints in restricted forms~\cite{le12:sharedec,le17:certproc}.

Recently, Le and Hobor~\cite{le17:logic} used tree shares $\struct{\treedom,\oplus,\bowtie}$ to develop a proof system to reason about disjoint permissions. In their framework, an arbitrary predicate $P$ with permission $\pi$ can be specified as $\pi \cdot P$ and permission sharing can be achieved using the bi-entailment
$$
\pi \cdot P \dashv \vdash (\pi \bowtie \Tree[ $\bullet$ $\circ$ ]) \cdot P \star (\pi \bowtie \Tree[ $\bullet$ $\circ$ ]) \cdot P.
$$
The above bi-entailment requires the following property of tree shares to hold
$$
\forall \pi. ~\pi = (\pi \bowtie \Tree[ $\bullet$ $\circ$ ]) \oplus(\pi \bowtie \Tree[ $\circ$ $\bullet$ ]).
$$

\textbf{Motivation}. We provide a simple example of a program together with its
formal verification proof using tree shares in the
appendix \ref{sec:appendix}. Note that the above constraint contains both $\oplus$ and $\bowtie$ which cannot be handled by previous share solvers. Furthermore, the example suggests that we may require arbitrary tree share constants in the formula to express complicated properties. As a result, our main motivation is to investigate the complexity of tree share structure so that existing tree share solvers like~\cite{le17:certproc,HoborG11,le12:sharedec} can benefit from it as they can solve tree constraints in more general forms, \emph{e.g.}, with both $\bowtie$, $\oplus$ and arbitrary tree constants.

\subsection{Boolean Algebra}
\label{ssec:lang}
Let $\mathsf{BA} = (\cap,\cup,\bar{\cdot},0,1)$ be the signature for Boolean Algebra (BA), we define a strict partial order over $\mathsf{BA}$ as $a \subset b \defi a \cup b = b \wedge a \neq b$. A Boolean Algebra model $\mathcal{A} = \struct{\mathcal{D},\cap,\cup,\bar{\cdot},0,1}$ is \emph{countable atomless} if its domain $\mathcal{D}$ is countable and satisfies the atomless property $\forall a.~ 0 \subset a \rightarrow (\exists b.~0 \subset b \wedge b \subset a)$. We recall a well-known fact about atomless BA:

\begin{proposition}[\emph{e.g.}~\cite{boolean1974}]\label{prop:acomplete}
The first-order theory of atomless BAs is complete and $\omega$-categorical, \emph{i.e.}, any two models are elementarily equivalent and there is unique countably infinite model up to isomorphism.
\end{proposition}
\subsection{Complexity}
\label{ssec:comp}
Let the \emph{exponent function} $\mathsf{exp}:\mathbb{N}^2 \mapsto \mathbb{N}$ be defined as $\mathsf{exp}(n,0) = n$ and $\mathsf{exp}(n,k+1) = 2^{\mathsf{exp}(n,k)}$, then the complexity class $\ndex{k}$ contains problems which can be decided by a halted deterministic Turing machine of time complexity $\mathsf{exp}(O(n),k)$ for input of length $n$. A problem is \emph{elementary} if it is in $\ndex{k}$ for some $k$, otherwise it is called \emph{non-elementary}.

Let $\mathsf{R}$, $\mathsf{C}$ be complexity classes, a problem $P$ is $\le_{\mathsf{R}}$-hard for $\mathsf{C}$ if each problem in $\mathsf{C}$ can be reduced to $P$ by a reduction in $\mathsf{R}$. Similarly, $P$ is $\le_{\mathsf{R}}$-complete for $\mathsf{C}$ if it is in $\mathsf{C}$ and $\le_{\mathsf{R}}$-hard for $\mathsf{C}$. In addition, we use $\le_{\mathsf{R}\text{-lin}}$ to assert \emph{linear reduction} that is in $\mathsf{R}$ and only changes the problem's size by a constant factor. In particular, $\le_{\text{log-lin}}$ is linear log-space reduction.

Furthermore, let $\mathsf{STA}(p(n),t(n),a(n))$ denote the class of problems decided by an alternating Turing machine~\cite{Chandra:1981} that uses at most $p(n)$ space, $t(n)$ time and $a(n)$ alternations for input of length $n$. If one of the three bounds is not specified, we can use $\ast$. Recall some classical complexity results for Boolean Algebras that we will need in subsequent sections:

\begin{proposition}[\cite{kim96:nba}]\label{prop:np}
The $\exists$-theory of infinite BAs is $\le_{\text{log}}$-complete for $\mathsf{NP}$.
\end{proposition}

\begin{proposition}[\cite{kozen79:comba}]\label{prop:kozen}
The first-order theory of atomless BAs is $\mathsf{STA}(\ast,2^{O(n)},n)$. Furthermore, it is $\le_{\text{log}}$-complete for $\mathsf{STA}(\ast,2^{n^{O(1)}},n)$\footnote{This is a generalization from the result in the paper that states the theory is $\le_{\text{log}}$-complete for $\mathsf{STA}(\ast,2^{O(n)},n)$. However, this complexity class is bad for completeness because it is not robust under log-space reduction as input's size can increase polynomially.}.
\end{proposition}

\hide{
\subsection{Boolean binary tree structure}
\label{sec:treeba}

Here we summarize additional details of tree shares and their associated properties from Dockins \emph{et al.}~\cite{dockins09:sa}.

\noindent \textbf{Canonical forms.} A tree share is either $\bullet$, $\circ$ or $\mathsf{Node}(\tau_1,\tau_2)$ in which $\tau_1,\tau_2$ are tree shares and $\mathsf{Node}$ is a binary function. To make the representation more appealing, we will refer $\mathsf{Node}(\tau_1,\tau_2)$ as $\Tree [ $\tau_1$ $\tau_2$ ]$. Additionally, we require that tree shares are in \emph{canonical form}, \emph{i.e.}, it is in its most compact representation under the inductively-defined equivalence relation $\cong$:
$$
\infer{\circ \cong \circ}{} \qquad \qquad \infer{\bullet\cong \bullet}{} \qquad \qquad
\infer{\circ\cong \tree{\circ}{\circ}}{} \qquad \qquad \infer{\bullet\cong \tree{\bullet}{\bullet}}{} \qquad \qquad
\infer{\tree{\tau_1}{\tau_2}~\cong~\tree{\tau'_1}{\tau'_2}}{\tau_1 \cong \tau'_1 ~~~~ \tau_2 \cong \tau'_2}{}
$$
For example, $\Tree [ [ $\bullet$ $\bullet$ ] $\circ$ ]$ is not canonical whereas $\Tree [ $\bullet$ $\circ$ ]$ is canonical. As we will see, operations on tree shares sometimes need to fold/unfold trees to/from canonical form, a practice we will indicate using the symbol $\cong$.  Canonicity is needed to guarantee some of the algebraic properties of tree shares; managing it requires a little care in the proofs but does not pose any fundamental difficulties. 

\noindent \textbf{Boolean algebra operations.}
The connectives $\sqcup$ and $\sqcap$ first unfold both trees to the same shape; then calculate leafwise using the rules $\circ \sqcup \tau = \tau \sqcup \circ = \tau$, $\bullet \sqcup \tau = \tau \sqcup \bullet = \bullet$, $\circ \sqcap \tau = \tau \sqcap \circ = \circ$, and $\bullet \sqcap \tau = \tau \sqcap \bullet = \tau$; and finally refold back into canonical form, \emph{e.g}
\[
\begin{array}{@{}ccccccc@{}}
\Tree [ [ $\bullet$ $\circ$ ] $\circ$ ] \sqcup \Tree [ [ $\circ$ $\bullet$ ] [ $\bullet$ $\circ$ ] ] & \cong &
\Tree [ [ $\bullet$ $\circ$ ] [ $\circ$ $\circ$ ] ] \sqcup \Tree [ [ $\circ$ $\bullet$ ] [ $\bullet$ $\circ$ ] ] & = &
\Tree [ [ $\bullet$ $\bullet$ ] [ $\bullet$ $\circ$ ] ] & \cong &
\Tree [ $\bullet$ [ $\bullet$ $\circ$ ] ] \\
[20pt]
\Tree [ [ $\bullet$ $\circ$ ] $\circ$ ] \sqcap \Tree [ [ $\circ$ $\bullet$ ] [ $\bullet$ $\circ$ ] ] & \cong &
\Tree [ [ $\bullet$ $\circ$ ] [ $\circ$ $\circ$ ] ] \sqcap \Tree [ [ $\circ$ $\bullet$ ] [ $\bullet$ $\circ$ ] ] & = &
\Tree [ [ $\circ$ $\circ$ ] [ $\circ$ $\circ$ ] ] & \cong &
\circ
\end{array}
\]
Complementation is simpler, since flipping leaves between $\circ$ and $\bullet$ does not affect whether a tree is in canonical form, \emph{e.g.}
$
\overbar{ \Tree [ [ $\bullet$ $\circ$ ] $\circ$ ] } ~~ = ~~ \Tree [ [ $\circ$ $\bullet$ ] $\bullet$ ]
$.  Using these definitions we get all of the usual properties for Boolean algebras, \emph{e.g.} $\overbar{\tau_1 \sqcap \tau_2} = \overbar{\tau_1} \sqcup \overbar{\tau_2}$.  Moreover, we can define a partial ordering between trees using intersection in the usual way, \emph{i.e.} $\tau_1 \sqsubseteq \tau_2 \, \defi \, \tau_1 \sqcap \tau_2 = \tau_1$.  We can enjoy a strict partial order as well: $\tau_1 \sqsubset \tau_2 \, \defi \, \tau_1 \sqsubseteq \tau_2 \wedge \tau_1 \neq \tau_2$. We recall a useful result from \cite{le16:complex}:

\begin{proposition}[\cite{le16:complex}]\label{prop:atomless}
The structure $\bmodel = \struct{\mathbb{T},\sqcup,\sqcap,\bar{\cdot},\bullet,\circ}$ is CABA.
\end{proposition}

\noindent \textbf{Properties of tree multiplication $\bowtie$.}
Since it is nonstandard, the ``tree multiplication'' operator $\bowtie$ deserves some additional attention.  The good news first: $\bowtie$ is associative, has an identity $\bullet$, and is injective for non-$\circ$ elements, \emph{i.e.} $\bmodel^+ = \struct{\treedom \setminus \{\circ\}, \bowtie}$ forms a cancellative monoid.  Somewhat unsurprisingly, multiplication by the ``additive identity'' $\circ$ reduces to $\circ$.  Unfortunately, $\bowtie$ is not commutative ($ \Tree [ $\bullet$ $\circ$ ] \bowtie  \Tree [ $\circ$ $\bullet$ ]  = \Tree [ [ $\circ$ $\bullet$ ] $\circ$ ] \neq \Tree [ $\circ$ [ $\bullet$ $\circ$ ] ] = \Tree [ $\circ$ $\bullet$ ] \bowtie \Tree [ $\bullet$ $\circ$ ] $), although we do enjoy a distributive property over $\sqcup$ and $\sqcap$ on the right hand side.
\hide{
\begin{proposition}[Properties of $\bowtie$] \label{prop:bowtieprops}
\begin{align*}
&\text{Associativity}: &&\tau_1 \bowtie (\tau_2 \bowtie \tau_3) = (\tau_1 \bowtie \tau_2) \bowtie \tau_3\\
&\text{Identity element}: &&\tau \bowtie \bullet = \bullet \bowtie \tau =
\tau\\
&\text{Zero element}: && \tau \bowtie \circ = \circ \bowtie \tau = \circ\\
&\text{Left cancellation}: &&\tau \neq \circ ~ \Rightarrow ~ \tau \bowtie \tau_1 = \tau \bowtie \tau_2 ~ \Rightarrow ~ \tau_1 = \tau_2\\
&\text{Right cancellation}: &&\tau \neq \circ ~ \Rightarrow ~ \tau_1 \bowtie \tau = \tau_2 \bowtie \tau ~ \Rightarrow ~ \tau_1 =\tau_2 \\
&\text{Right distributivity over } \sqcap: &&\tau_1 \bowtie (\tau_2 \sqcap \tau_3) = (\tau_1 \bowtie \tau_2) \sqcap (\tau_1 \bowtie \tau_3)\\
&\text{Right distributivity over } \sqcup: &&\tau_1 \bowtie (\tau_2 \sqcup \tau_3) = (\tau_1 \bowtie \tau_2) \sqcup (\tau_1 \bowtie \tau_3)
\end{align*}
\end{proposition}
}

We adopt some notations hereafter: the tree domain is denoted as $\treedom$. We use the symbol $\elist$ for empty list, $[e_1,\ldots,e_n]$ for list of $n$ elements $e_1,\ldots,e_n$ and $\listc{l_1}{l_2}$ for the list concatenation. We let $\hi{\tau}$ be the \emph{height} of the tree $\tau$, starting from 0 and override $\hi{\Phi}$ to be the height of the formula $\Phi$, which is the largest tree height in $\Phi$ (0 otherwise).
}
}

\section{Complexity of Boolean structure $\bmodel \defi \struct{\mathbb{T},\sqcup,\sqcap,\bar{\cdot}}$}\label{sec:boolean}
\label{sec:baresult}
Existing tree share
solvers~\cite{le12:sharedec,le17:certproc} only utilize the additive operator
$\oplus$ in certain restrictive first-order segments. Given the fact that
$\oplus$ is defined from the Boolean structure $\bmodel =
\struct{\mathbb{T},\sqcup,\sqcap,\bar{\cdot}}$, it is compelling to establish
the decidability and complexity results over the general structure $\bmodel$.
More importantly, operators in $\bmodel$ can help reduce the complexity of a
given formula. For example, consider the following separation logic
entailment:

$$
a \xmapsto{\tau} 1 \star a \xmapsto{\Tree [ $\bullet$ [ $\bullet$ $\circ$ ] ]} ~\vdash~ a \xmapsto{\Tree [ $\bullet$ $\circ$ ] } 1 \star \top .
$$

To check the above assertion, entailment solvers have to extract and verify
the following corresponding tree share formula by grouping shares from same
heap addresses using $\oplus$ and then applying equality checks:

$$
\forall \tau \forall \tau'. \tau \oplus \Tree [ $\bullet$ [ $\bullet$ $\circ$ ] ] = \tau' \rightarrow \exists \tau''. \tau'' \oplus \Tree [ $\bullet$ $\circ$ ] = \tau' .
$$

By using Boolean operators, the above $\forall \exists$ formula can be
simplified into a $\forall$ formula by specifying that
either the share in the antecedent is not possible, or the share in the
consequent is a `sub-share' of the share in the antecedent:

$$
\forall \tau.~ \neg (\tau \sqcap \Tree [ $\bullet$ [ $\bullet$ $\circ$ ] ] = \circ) ~\vee~ ( \Tree [ $\bullet$ $\circ$ ] \sqsubseteq \tau \oplus \Tree [ $\bullet$ [ $\bullet$ $\circ$ ] ] ) .
$$

where the `sub-share' relation $\sqsubseteq$ is defined using Boolean union:

$$
a \sqsubseteq b ~\defi~ a \sqcup b = b .
$$

In this section, we will prove the following precise complexity of $\bmodel$:

\begin{theorem}\label{thm:forder}
The first-order theory of $\bmodel$ is $\le_{log}$-complete for $\mathsf{STA}(\ast,2^{n^{O(1)}},n)$, even if we allow arbitrary tree constants in the formulae.
\end{theorem}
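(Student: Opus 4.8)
The plan is to prove the two halves of completeness separately, with essentially all of the effort going into membership. For $\le_{log}$-hardness it suffices to recall that $\bmodel$ is a Countable Atomless Boolean Algebra~\cite{le16:complex}, so already its first-order theory restricted to the constants $\{\bullet,\circ\}$ coincides with the first-order theory of atomless BAs, which is $\le_{log}$-hard for $\mathsf{STA}(\ast,2^{n^{O(1)}},n)$~\cite{Berman}. Since every sentence using only $\bullet,\circ$ is a legal sentence of the richer theory with arbitrary tree constants and is interpreted identically, the inclusion map is a trivial log-space reduction witnessing hardness of the general theory.

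For membership the strategy is a constant-elimination reduction: given a sentence $\Phi$ whose tree constants are $c_1,\dots,c_k$, I would produce in polynomial time an equivalent sentence $\Phi'$ over $\bmodel$ that uses only $\bullet$ and $\circ$ as constants, and then invoke the known bound for the $\{\bullet,\circ\}$-theory. The crucial structural observation is that tree-share constants are mutually aligned inside the infinite binary tree: writing $I_i$ for the (prefix-closed) set of internal nodes of the canonical tree $c_i$, the common refinement has internal-node set $\bigcup_i I_i$ and hence only $m \le 1 + \sum_i |c_i|$ leaf regions $r_1,\dots,r_m$. On each region every $c_i$ is constantly $\bullet$ or $\circ$, so the entire constant configuration is captured by a polynomial-size table recording, for each region $r_j$, the set $P_j$ of indices $i$ with $c_i = \bullet$ on $r_j$. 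This is exactly the step that keeps the reduction polynomial: a naive ``cell by cell'' description of the subalgebra generated by the $c_i$ would range over exponentially many Venn cells.

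I would then replace the constants by fresh free variables $y_1,\dots,y_k$ and pin down their isomorphism type with a guard
\[
\chi(y_1,\dots,y_k) \;\defi\; \exists z_1\cdots\exists z_m.\ \mathrm{Part}(z_1,\dots,z_m)\ \wedge\ \bigwedge_{i=1}^{k}\Bigl(y_i = \bigsqcup_{j:\, i\in P_j} z_j\Bigr),
\]
where $\mathrm{Part}(z_1,\dots,z_m)$ asserts that the $z_j$ are nonzero, pairwise disjoint, and join to $\bullet$. The reduced sentence is $\Phi' \defi \exists y_1\cdots\exists y_k.\ \chi(y_1,\dots,y_k) \wedge \Phi[c_1,\dots,c_k \mapsto y_1,\dots,y_k]$, which uses only the constants $\circ$ and $\bullet$, has size polynomial in $|\Phi|$, and raises the quantifier-alternation depth by only $O(1)$. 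Correctness rests on homogeneity of the countable atomless BA: any partition of $\bullet$ into $m$ nonzero pieces is carried onto any other by an automorphism (each principal ideal $z_j{\downarrow}$ is itself atomless, hence isomorphic to the whole), so all tuples satisfying $\chi$ are automorphic; since the genuine constants satisfy $\chi$ via $z_j = r_j$, and automorphisms preserve first-order truth, $\bmodel \models \Phi$ iff $\bmodel \models \Phi'$. Deciding $\Phi'$ with the $\{\bullet,\circ\}$-procedure then costs $2^{|\Phi'|^{O(1)}} = 2^{n^{O(1)}}$ time and $O(n)$ alternations, placing the general theory in $\mathsf{STA}(\ast,2^{n^{O(1)}},n)$.

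The main obstacle I anticipate is the polynomiality/correctness tension in the constant-elimination step, which must simultaneously establish three things: that the common refinement yields only polynomially many regions rather than exponentially many cells; that the polynomial-size guard $\chi$ genuinely determines the $y_i$ up to automorphism, which is precisely where $\omega$-categoricity and the atomlessness of each ideal $z_j{\downarrow}$ are indispensable; and that the translation increases the alternation depth by only a constant, so that the linear-alternation budget of $\mathsf{STA}(\ast,2^{n^{O(1)}},n)$ is respected under the polynomial size blow-up. The remaining details---the tree bookkeeping for computing $\bigcup_i I_i$ and the regions, and the purely syntactic substitution into $\Phi$---are routine.
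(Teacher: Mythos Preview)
Your proposal is correct, and the hardness half matches the paper exactly. For membership you take a genuinely different route. The paper performs a \emph{syntactic product decomposition}: having computed the common shape $s$ with $m$ leaves (your common refinement), it exploits the isomorphism $\bmodel \cong \bmodel^m$ given by a splitting map $\texttt{split}_s$ to replace every variable $v$ by an $m$-tuple $(v_1,\dots,v_m)$, every quantifier $Qv$ by the block $Qv_1\cdots Qv_m$, and every atomic formula by a conjunction of $m$ leaf-wise copies; the constants then dissolve into tuples of $\bullet$'s and $\circ$'s. This produces an $O(n^2)$-size output with \emph{exactly} the same alternation depth, and its correctness needs only the elementary fact that $\texttt{split}_s$ is a BA isomorphism onto the direct power. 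Your approach is instead \emph{model-theoretic constant elimination}: you leave the original variables alone, replace each constant $c_i$ by a fresh existentially quantified $y_i$, and pin down the orbit of $(y_1,\dots,y_k)$ with the partition guard $\chi$, relying on the homogeneity of the countable atomless BA (any two $m$-partitions of $\bullet$ into nonzero pieces are automorphic). Both yield polynomial reductions into the $\{\bullet,\circ\}$-theory; the paper's argument is more elementary and self-contained (no appeal to automorphisms), and its exact preservation of alternation depth makes the $\mathsf{NP}$-completeness of the existential fragment an immediate corollary, whereas your added leading $\exists$-block is harmless here but needs a one-line remark. Your argument, on the other hand, is shorter and would transfer unchanged to any $\omega$-categorical ultrahomogeneous structure in which the relevant constants are definable up to orbit by a quantifier-free type.
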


One important implication of the above result is that the same complexity result still holds even if the additive operator $\oplus$ is included into the structure:

\begin{corollary}
The Boolean tree share structure with addition $\bmodel_{\oplus} = \struct{\mathbb{T},\oplus,\sqcup,\sqcap,\bar{\cdot}}$ is $\le_{log}$-complete for $\mathsf{STA}(\ast,2^{n^{O(1)}},n)$, even with arbitrary tree constants in the formulae.
\end{corollary}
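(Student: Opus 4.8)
The plan is to obtain the corollary directly from Theorem~\ref{thm:forder} by exploiting that the additive operator $\oplus$ is first-order definable from the Boolean operators, so that $\fth{\bmodel_{\oplus}}$ and $\fth{\bmodel}$ are mutually $\le_{log}$-reducible. Concretely, recall the defining equivalence $a \oplus b = c \defi a \sqcup b = c \wedge a \sqcap b = \circ$, which expresses the graph of $\oplus$ by a quantifier-free formula using only $\sqcup$, $\sqcap$, and the tree constant $\circ$.

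For $\le_{log}$-hardness I would note that $\bmodel$ is a reduct of $\bmodel_{\oplus}$: every sentence over the signature $\struct{\sqcup,\sqcap,\bar{\cdot}}$ is verbatim a sentence over $\struct{\oplus,\sqcup,\sqcap,\bar{\cdot}}$ with the same truth value and the same (arbitrary) tree constants. Thus the identity map is a trivial log-space reduction from $\fth{\bmodel}$ to $\fth{\bmodel_{\oplus}}$, and Theorem~\ref{thm:forder} transfers $\le_{log}$-hardness for $\mathsf{STA}(\ast,2^{n^{O(1)}},n)$ --- in its arbitrary-constant form --- to $\bmodel_{\oplus}$.

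For membership I would exhibit a log-space, linearly size-bounded, truth-preserving translation of any $\bmodel_{\oplus}$-sentence $\Phi$ into a $\bmodel$-sentence $\Phi'$. Treating $\oplus$ through its graph relation sidesteps its partiality: each maximal $\oplus$-subterm $s \oplus t$ occurring in term position (for instance inside an iterated sum $a \oplus b \oplus c$) is first flattened by introducing a fresh variable $w$ together with the atom ``$w = s \oplus t$'', after which every such atom is replaced by its quantifier-free Boolean definition above. This introduces only linearly many fresh variables and the single extra constant $\circ$, and is clearly computable in logarithmic space. Since $\fth{\bmodel} \in \mathsf{STA}(\ast,2^{n^{O(1)}},n)$ by Theorem~\ref{thm:forder} and this class is, by design, closed under $\le_{log}$-reductions, we conclude $\fth{\bmodel_{\oplus}} \in \mathsf{STA}(\ast,2^{n^{O(1)}},n)$, completing the argument.

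The main obstacle I anticipate is bookkeeping rather than conceptual: ensuring that eliminating $\oplus$ from nested term positions does not distort the quantifier structure in a way that matters. The key observations are that $\oplus$ is functionally determined (its graph has at most one witness $w$ for each $s,t$), so the naming quantifiers introduced by flattening can be placed without creating spurious alternations, and that the target class $\mathsf{STA}(\ast,2^{n^{O(1)}},n)$ is robust under log-space reductions precisely so that the linear overhead of this translation is absorbed. With these in hand the reduction is routine.
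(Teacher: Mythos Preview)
Your proposal is correct and follows the same route as the paper: both arguments eliminate $\oplus$ via its quantifier-free Boolean definition $a \oplus b = c \iff a \sqcup b = c \wedge a \sqcap b = \circ$ and then invoke Theorem~\ref{thm:forder}. You are in fact a bit more careful than the paper, which simply asserts a linear-time translation without discussing nested $\oplus$-terms or the placement of the naming quantifiers; your remark that functionality of $\oplus$ lets the fresh variables be quantified with matching polarity (hence no new alternations) fills exactly the gap the paper leaves implicit.
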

\begin{proof}
Recall that $\oplus$ can be defined in term of $\sqcup$ and $\sqcap$ without additional quantifier variable:
$$
a \oplus b = c ~~\defi~~ a \sqcup b = c ~\wedge~ a \sqcap b = \circ .
$$

As a result, one can transform, in linear time, any additive constraint into Boolean constraint using the above definition. Hence the result follows. \qed
\end{proof}

Theorem~\ref{thm:forder} is stronger than the result in~\cite{le16:complex} which proved the same complexity but for restricted tree share constants in the formulae:

\begin{proposition}[\cite{le16:complex}]\label{prop:caba}
The first-order theory of $\bmodel$, where tree share constants are $\{\bullet,\circ\}$, is $\le_{log}$-complete for $\mathsf{STA}(\ast,2^{n^{O(1)}},n)$.
\end{proposition}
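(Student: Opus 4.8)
The plan is to recognise $\bmodel$, with $\bullet$ and $\circ$ playing the roles of the Boolean constants $\mathbf{1}$ and $\mathbf{0}$, as a Countable Atomless Boolean Algebra (CABA), and then to import the known complexity of the first-order theory of atomless Boolean algebras across a trivial signature-relabelling that acts as a $\le_{\log}$-reduction in both directions. The only mathematically substantive ingredient I must supply is atomlessness; the Boolean-algebra laws are a routine leafwise check, the constants $\mathbf{0}=x\sqcap\bar{x}$ and $\mathbf{1}=x\sqcup\bar{x}$ are definable, and the complexity itself is inherited wholesale from the classical atomless-BA theory.

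First I would confirm that $\struct{\mathbb{T},\sqcup,\sqcap,\bar{\cdot}}$ is a CABA (as already asserted in \S\ref{sub:treepre} and \cite{le16:complex}). Since $\sqcup$, $\sqcap$ and $\bar{\cdot}$ are computed by unfolding two trees to a common shape, applying the two-element Boolean operations leafwise, and refolding into canonical form, every equational law valid in the two-element algebra $\{\bullet,\circ\}$ lifts leafwise to $\mathbb{T}$, with $\cong$ respecting all three operations; countability is immediate because $\mathbb{T}$ consists of finite binary trees. The crux is atomlessness: given any $\tau\neq\circ$, I would select a black leaf of $\tau$ and replace it by a two-leaf subtree carrying a black left leaf and a white right leaf, producing a share $\tau'$ with $\circ\sqsubset\tau'\sqsubset\tau$. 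Canonicalisation never collapses such a split, so $\bmodel$ has no atoms, which is exactly where the argument needs care even though it is not difficult.

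Finally I would appeal to the classical theory of atomless Boolean algebras. By completeness and $\omega$-categoricity of its first-order theory, all atomless BAs are elementarily equivalent, so $\fth{\bmodel}$ in the signature $\{\sqcup,\sqcap,\bar{\cdot},\bullet,\circ\}$ agrees, sentence for sentence, with the atomless-BA theory under the relabelling $\sqcup\mapsto\cup$, $\sqcap\mapsto\cap$, $\bar{\cdot}\mapsto\bar{\cdot}$, $\bullet\mapsto\mathbf{1}$, $\circ\mapsto\mathbf{0}$. This map is purely syntactic, length-preserving, computable in logarithmic space, and its own inverse, so it yields $\le_{\log}$-reductions in both directions. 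Combining it with Kozen's result (recalled in \cite{le16:complex}) that the atomless-BA theory is $\le_{\log}$-complete for $\mathsf{STA}(\ast,2^{n^{O(1)}},n)$ then gives membership (relabel, then run the atomless-BA decision procedure, using closure of the class under log-space reductions) and $\le_{\log}$-hardness (relabel the hard atomless-BA instances into $\fth{\bmodel}$), hence $\le_{\log}$-completeness.
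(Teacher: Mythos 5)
Your proposal is correct and takes essentially the same route as the paper, which imports Proposition~\ref{prop:caba} from~\cite{le16:complex}: there, as in your argument, one shows $\bmodel$ is a Countable Atomless Boolean Algebra (with atomlessness witnessed by splitting a black leaf, exactly your construction) and then transfers Kozen's $\le_{\log}$-completeness of the atomless-BA first-order theory for $\mathsf{STA}(\ast,2^{n^{O(1)}},n)$ across the trivial signature relabelling, using completeness and $\omega$-categoricity to identify the theories. No gaps.
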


The hardness proof for lower bound of Theorem~\ref{thm:forder} is obtained directly from Prop.~\ref{prop:caba}. To show that the same complexity holds for upper bound, we construct an $O(n^2)$ algorithm $\mathtt{flatten}$ (Alg.~\ref{al:flatten}) that transforms arbitrary tree share formula into an equivalent tree share formula whose constants are $\{\bullet,\circ\}$:

\begin{lemma}\label{lm:flat}
Suppose $\mathtt{flatten}(\Phi) = \Phi'$. Then:
\begin{enumerate}
\item $\Phi'$ only contains $\{\bullet,\circ\}$ as constants.
\item $\Phi$ and $\Phi'$ have the same number of quantifier alternations.
\item $\Phi$ and $\Phi'$ are equivalent with respect to $\bmodel$.
\item $\mathtt{flatten}$ is $O(n^2)$. In particular, if the size of $\Phi$ is $n$ then $\Phi'$ has size $O(n^2)$.
\end{enumerate}
\end{lemma}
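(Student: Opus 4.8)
The plan is to remove the arbitrary constants by replacing each of them with a join of fresh \emph{quantified} variables that are constrained to range over a fixed partition of the top element $\bullet$, exploiting the fact that $\bmodel$ is a countable atomless Boolean algebra and is therefore homogeneous. First I would collect the tree constants $\tau_1,\dots,\tau_m$ occurring in the sentence $\Phi$, so that $m=O(n)$ and their tree representations have total size $O(n)$, and compute their common refinement tree $T^\ast$, whose internal nodes are exactly the union of the internal nodes of the $\tau_i$. The $k$ leaves of $T^\ast$ name a partition $a_1,\dots,a_k$ of $\bullet$ into nonempty, pairwise-disjoint tree shares (the leaf-cell at a node is the share that is $\bullet$ there and $\circ$ elsewhere), and because $T^\ast$ refines every $\tau_i$ we recover each constant as a join $\tau_i=\bigsqcup_{j\in S_i}a_j$, where $S_i$ is read off from the leaf colours of $\tau_i$.

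Then $\mathtt{flatten}$ introduces $k$ fresh variables $z_1,\dots,z_k$, builds the partition constraint $\chi(z_1,\dots,z_k)$ stating that the $z_j$ are nonempty, pairwise disjoint, and join to $\bullet$ (this uses only $\bullet,\circ$), replaces every occurrence of each $\tau_i$ by the term $\bigsqcup_{j\in S_i}z_j$ to get $\Phi^{\mathrm{subst}}$, and prepends a single block of $k$ quantifiers: $\Phi'\defi\exists z_1\cdots\exists z_k\,(\chi\wedge\Phi^{\mathrm{subst}})$ when the leading quantifier of the prenex form of $\Phi$ is $\exists$, and $\Phi'\defi\forall z_1\cdots\forall z_k\,(\chi\rightarrow\Phi^{\mathrm{subst}})$ when it is $\forall$. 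Claim (1) is then immediate, since $\chi$ and the new terms mention only $\bullet$ and $\circ$. For Claim (4) I would bound $k$: the internal nodes of $T^\ast$ are the union of the internal nodes of the $\tau_i$, so $T^\ast$ has $O(n)$ internal nodes and hence $k=O(n)$ leaves; then $\chi$ has size $O(k^2)=O(n^2)$, each of the $O(n)$ constant occurrences expands to a join of at most $k$ variables, and everything is computable in $O(n^2)$ time, giving $|\Phi'|=O(n^2)$.

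The heart is Claim (3), and it rests on homogeneity: any two ordered partitions of $\bullet$ into $k$ nonempty parts are interchanged by an automorphism of $\treedom$, since the map sending the parts of one to the parts of the other is an isomorphism of the finite subalgebras they generate and therefore extends to an automorphism of the countable atomless Boolean algebra. Hence for any assignment $z_j\mapsto b_j$ satisfying $\chi$ there is an automorphism $\pi$ with $\pi(a_j)=b_j$, whence $\pi(\tau_i)=\bigsqcup_{j\in S_i}b_j$ and $\bmodel\models\Phi\iff\bmodel\models\Phi^{\mathrm{subst}}[z_j\mapsto b_j]$; in particular the truth value of $\Phi^{\mathrm{subst}}$ is constant across all satisfying assignments of $\chi$, and instantiating the $z_j$ by the actual parts $a_j$ turns $\Phi^{\mathrm{subst}}$ back into $\Phi$. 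Since $\chi$ is satisfiable (the $a_j$ witness it), the $\exists$- and $\forall$-guarded forms are equivalent to one another and to $\Phi$, proving Claim (3). The same interchangeability yields Claim (2): the new block may be taken existential or universal at will, so I choose it to match the leading block of $\Phi$'s prenex form, it merges into that block, and the alternation count is preserved.

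The step I expect to be the main obstacle is the homogeneity argument itself---pinning down precisely that equally-sized partitions of $\bullet$ into nonempty parts are related by an automorphism of $\treedom$, which is exactly where atomlessness (equivalently $\omega$-categoricity) of $\bmodel$ is used---together with checking that this single fact simultaneously justifies the constant-for-join substitution, the invariance of $\Phi^{\mathrm{subst}}$ over satisfying assignments, and the freedom in Claim (2) to choose the quantifier kind of the prepended block. By comparison, the common-refinement size bound giving $k=O(n)$ and the quadratic accounting for Claim (4) are routine.
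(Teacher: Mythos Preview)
Your argument is correct, but it is a genuinely different construction from the paper's. The paper's \texttt{flatten} does not introduce fresh partition variables or a guard $\chi$; instead it computes the common shape $s$ (your $T^\ast$), and then \emph{splits every term}---variables and constants alike---into a tuple of $k$ components, one per leaf of $s$. Each constant $\tau$ splits into a list of leaves in $\{\bullet,\circ\}$, each variable $v$ splits into fresh variables $v_1,\dots,v_k$, each atomic formula becomes the conjunction of its $k$ coordinate-wise instances, and each quantifier $Qv$ becomes the block $Qv_1\cdots Qv_k$. The correctness argument is then the elementary observation that $\texttt{split}_s:\bmodel\to\bmodel^k$ is a Boolean-algebra isomorphism (a direct product decomposition), so the coordinate-wise rewriting is equivalence-preserving; no appeal to homogeneity or automorphisms is needed. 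Your route trades this explicit product decomposition for the model-theoretic fact that in a countable atomless BA any two $k$-partitions of $\bullet$ are automorphic, which lets you leave the original variables untouched and absorb all the work into a single prepended quantifier block plus $\chi$. Both give the same $O(n^2)$ bound; the paper's version is more elementary and self-contained (it never invokes $\omega$-categoricity), while yours is arguably cleaner conceptually and would transfer to any homogeneous structure in which the constants sit inside a definable finite partition.
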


\noindent \textbf{Proof of Theorem \ref{thm:forder}.} The lower bound follows from Prop.~\ref{prop:caba}. By Lemma~\ref{lm:flat}, we can use $\texttt{flatten}$ in Alg.~\ref{al:flatten} to transform a tree formula $\Phi$ into an equivalent formula $\Phi'$ of size $O(n^2)$ that only contains $\{\bullet,\circ\}$ as constants and has the same number of quantifier alternations as in $\Phi$. By Prop.~\ref{prop:caba}, $\Phi'$ can be solved in $\mathsf{STA}(\ast,2^{n^{O(1)}},n)$. This proves the upper bound and thus the result follows. \qed \par

\algrenewcommand\textproc{}
\begin{algorithm}[t]
\begin{algorithmic}[1]
\caption{Flattening a Boolean tree share formula}\label{al:flatten}
\Function{\texttt{flatten}}{$\Phi$}
\Require{$\Phi$ is a Boolean tree sentence}
\Ensure{Return an equivalent formula of height zero}
\If {$\texttt{height}(\Phi) = 0$} \Return $\Phi$
\Else
\State let $s$ be the shape of $\Phi$
\For {each atomic formula $\Psi$ in $\Phi$: $t^1 = t^2$ or $t^1 \textsf{ op } t^2 = t^3$, $\textsf{op} \in \{\sqcup,\sqcap\}$ }
		\State $[t_1^i,\ldots t_n^i] \leftarrow$ \Call{\texttt{split}}{$t^i,s$} for $i=1 \ldots n$ \Comment{$n$ is the number of leaves in $s$}
		\State $\Psi_i \leftarrow t^i_1 = t^i_2$ or $t^i_1 \textsf{ op } t^i_2 = t^i_3$ for $i=1 \ldots n$
		\State $\Phi \leftarrow$ replace $\Psi$ with $\bigwedge_{i=1}^n \Psi_i$
	  \EndFor
	  \For {each quantifier $Qv$ in $\Phi$}
	  	\State $[v_1,\ldots,v_n] \leftarrow$ \Call{\texttt{split}}{$v,s$}
	  	\State $\Phi \leftarrow $ replace $Qv$ with $Qv_1\ldots Qv_n$
	  \EndFor
	\State \Return $\Phi$
\EndIf
\EndFunction

\State
\Function{\texttt{split}}{$t,s$}
\Require $t$ is either a variable or a constant, $s$ is a shape
\Ensure Return a list of decomposing components of $t$ according to shape $s$
\If {$s = \sleaf$} {\Return $[t]$}
\Else { let $s = \Tree [ $s_0$ $s_1$ ]$ in}
  \If {$t$ is $\bullet$ or $\circ$}
  \Return \listc{\Call{\texttt{split}}{t,s_0}}{\Call{\texttt{split}}{t,s_1}}
    \ElsIf { let $t = \Tree [ $t_1$ $t_2$ ]$ in}
  \Return \listc{\Call{\texttt{split}}{t_0,s_0}}{\Call{\texttt{split}}{t_1,s_1}}
  \Else {$t$ is a variable}
  \Return \listc{\Call{\texttt{split}}{t_0,s_0}}{\Call{\texttt{split}}{t_1,s_1}}
  \EndIf
\EndIf
\EndFunction
\end{algorithmic}
\end{algorithm}

It remains to prove the correctness of Lemma~\ref{lm:flat}. But first, we will provide a descriptive explanation for the control flow of $\texttt{flatten}$ in Alg.~\ref{al:flatten}. On line 2, it checks whether the height of $\Phi$, which is defined to be the height of the highest tree constant in $\Phi$, is zero.  If it is the case then no further computation is needed as $\Phi$ only contains $\{\bullet,\circ\}$ as constants. Otherwise, the shape $s$ (Definition~\ref{def:shape}) is computed on line 4 to guide the subsequent decompositions. On lines 5-9, each atomic sub-formula $\Psi$ is decomposed into sub-components according to the shape $s$ by the function $\texttt{split}$ described on lines 18-26. Intuitively, $\texttt{split}$ decomposes a tree $\tau$ into subtrees (line 21-22) or a variables $v$ into new variables with appropriate binary subscripts (line 23). On line 8, the formula $\Psi$ is replaced with the conjunction of its sub-components $\bigwedge_{i=1}^n \Psi_i$. Next, each quantifier variable $Qv$ in $\Phi$ is also replaced with a sequence of quantifier variables $Qv_1 \ldots Qv_n$ (lines 10-13). Finally, the modified formula $\Phi$ is returned as the result on line 14. The following example demonstrates the algorithm in action:

\begin{example}
Let $\Phi:~\forall a \exists b.~a \sqcup b = \Tree [ [ $\bullet$ $\circ$ ] $\circ$ ] \vee \neg (\bar{a} = \Tree [ $\circ$ [ $\bullet$ $\circ$ ] ])$. Then $\texttt{height}(\Phi) = 2 > 0$ and its shape $s$ is $\Tree [ [ $\sleaf$ $\sleaf$ ] [ $\sleaf$ $\sleaf$ ] ]$. Also, $\Phi$ contains the following atomic sub-formulae:
$$
\Psi:~ a \sqcup b = \Tree [ [ $\bullet$ $\circ$ ] $\circ$ ] ~~~\text{and}~~~
\Psi':~ \bar{a} = \Tree [ $\circ$ [ $\bullet$ $\circ$ ] ]~.
$$

After applying the $\texttt{split}$ function to $\Psi$ and $\Psi'$ with shape $s$, we acquire the following components:
\begin{enumerate}
\item $\Psi_1:~ a_{00} \sqcup b_{00} = \bullet$, $\Psi_2:~ a_{01} \sqcup b_{01} = \circ$, $\Psi_3:~ a_{10} \sqcup b_{10} = \circ$, $\Psi_4:~ a_{11} \sqcup b_{11} = \circ$.
\item $\Psi'_1:~ \overbar{a_{00}} = \circ$, $\Psi'_2:~ \overbar{a_{01}} = \circ$, $\Psi'_3:~ \overbar{a_{10}} = \bullet$, $\Psi'_4:~ \overbar{a_{11}} = \circ$.
\end{enumerate}

The following result formula is obtained by replacing $\Psi$ with $\bigwedge_{i=1}^4 \Psi_i$, $\Psi'$ with $\bigwedge_{i=1}^4 \Psi'_i$, $\forall a$ with $\forall a_{00}\forall a_{01}\forall a_{10}\forall a_{11}$, and $\exists b$ with $\exists b_{00}\exists b_{01}\exists b_{10}\exists b_{11}$:

$$
\forall a_{00}\forall a_{01}\forall a_{10}\forall a_{11} \exists b_{00}\exists b_{01}\exists b_{10}\exists b_{11}.~\bigwedge_{i=1}^4 \Psi_i \vee \neg (\bigwedge_{i=1}^4 \Psi'_i).
$$

\end{example}

\begin{definition}[Tree shape]\label{def:shape}
A shape of a tree $\tau$, denoted by $\shapet{\tau}$, is obtained by replacing its leaves with $\sleaf$, \emph{e.g.} $\shapet{\Tree [ $\bullet$ [ $\bullet$ $\circ$ ] ]} = \Tree [ $\sleaf$ [ $\sleaf$ $\sleaf$ ] ]$. The combined shape $s_1 \sqcup s_2$ is defined by overlapping $s_1$ and $s_2$, \emph{e.g.} $\Tree [ $\sleaf$ [ $\sleaf$ $\sleaf$ ] ] \sqcup \Tree [ [ $\sleaf$ $\sleaf$ ] $\sleaf$ ] =  \Tree [ [ $\sleaf$ $\sleaf$ ] [ $\sleaf$ $\sleaf$ ] ]$. The shape of a formula $\Phi$, denoted by $\shapet{\Phi}$, is the combined shape of its tree constants and $\sleaf$.
\end{definition}

Note that tree shapes are not canonical, otherwise all shapes are collapsed into a single shape $\sleaf$. We are now ready to prove the first three claims of Lemma~\ref{lm:flat}:
\par
\noindent \textbf{Proof of Lemma~\ref{lm:flat}.1, \ref{lm:flat}.2 and \ref{lm:flat}.3}. Observe that the shape of each atomic sub-formula $\Psi$ is `smaller' than the shape of $\Phi$, \emph{i.e.} $\shapet{\Psi} \sqcup \shapet{\Phi} = \shapet{\Phi}$. As a result, each formula in the decomposition of $\texttt{split}(\Psi,\shapet{\Phi})$ always has height zero, \emph{i.e.} its only constants are $\{\bullet,\circ\}$. This proves claim 1.

Next, recall that the number of quantifier alternations is the number of times where quantifiers are switched from $\forall$ to $\exists$ or vice versa. The only place that $\texttt{flatten}$ modifies quantifiers is on line 12 in which the invariant for quantifier alternations is preserved. As a result, claim 2 is also justified.

We are left with the claim that $\texttt{flatten}$ is $O(n^2)$ where $n$ is the size of the input formula $\Phi$. By a simple analysis of $\texttt{flatten}$, it is essentially equivalent to show that the result formula has size $O(n^2)$. First, observe that the formula shape $\shapet{\Phi}$ has size $O(n)$ and thus we need $O(n)$ decompositions for each atomic sub-formula $\Psi$ and each quantifier variable $Qv$ of $\Phi$. Also, each component in the decomposition of $\Psi$ (or $Qv$) has size at most the size of $\Psi$ (or $Qv$). As a result, the size of the formula $\Phi'$ only increases by a factor of $O(n)$ compared to the size of $\Phi$. Hence $\Phi'$ has size $O(n^2)$. \qed

To prove claim 4, we first establish the following result about the $\texttt{split}$ function. Intuitively, this lemma asserts that one can use $\texttt{split}$ together with some tree shape $s$ to construct an isomorphic Boolean structure whose elements are lists of tree shares:

\begin{lemma}\label{lm:split}

Let \emph{$\texttt{split}_s$}$\defi \lambda \tau.$ \emph{\texttt{split}}$(\tau,s)$, \emph{e.g.} \emph{$\texttt{split}$}$_{\Tree [ $\sleaf$ [ $\sleaf$ $\sleaf$ ] ]}(\Tree [ [ $\bullet$ $\circ$ ] $\bullet$ ]) = [\Tree [ $\bullet$ $\circ$ ], \bullet,\bullet]$. Then \emph{$\texttt{split}_s$} is an isomorphism from $\bmodel$ to $\bmodel' = \struct{\mathbb{T}^n,\sqcup',\sqcap',\bar{\cdot}'}$ where $n$ is the number of leaves in $s$ and each operator in $\mathcal{M}'$ is defined component-wise from the corresponding operator in $\bmodel$, \emph{e.g.} $[a_1,a_2] \sqcup' [b_1,b_2] = [a_1 \sqcup a_2,b_1 \sqcup b_2]$.
\end{lemma}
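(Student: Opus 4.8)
The plan is to verify the two defining ingredients of a structure isomorphism separately: that $\texttt{split}_s$ is a bijection between $\mathbb{T}$ and $\mathbb{T}^n$, and that it commutes with each of $\sqcup$, $\sqcap$, and $\bar{\cdot}$. Both facts I would prove by induction on the shape $s$, the base case $s=\sleaf$ being immediate (there $n=1$, $\texttt{split}_s(\tau)=[\tau]$, and $\bmodel'$ is literally $\bmodel$). The engine of the whole argument is a single recursive characterization of the Boolean operations that follows directly from their definition in \S\ref{sub:overview}: after unfolding each argument one level (using $\bullet \cong \tree{\bullet}{\bullet}$ and $\circ \cong \tree{\circ}{\circ}$ when an argument is a leaf), one has $\tree{\alpha_0}{\alpha_1} \sqcup \tree{\beta_0}{\beta_1} \cong \tree{\alpha_0 \sqcup \beta_0}{\alpha_1 \sqcup \beta_1}$, the analogous identity for $\sqcap$, and $\overline{\tree{\alpha_0}{\alpha_1}} \cong \tree{\overline{\alpha_0}}{\overline{\alpha_1}}$. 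Crucially, these identities hold whether or not the subtrees are canonical, which is exactly what lets the induction pass through the fold/unfold steps hidden inside $\texttt{split}$.

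For bijectivity I would exhibit the inverse explicitly. Writing $\ell_1,\dots,\ell_n$ for the leaves of $s$ in left-to-right order, define the map $\rho_s \colon \mathbb{T}^n \to \mathbb{T}$ that plants $\tau_i$ at leaf $\ell_i$ of $s$ and refolds the result into canonical form. Since $\texttt{split}_s$ reads off, at each leaf of $s$, the subtree of $\tau$ reached along that leaf's path (unfolding a leaf of $\tau$ into two copies whenever $s$ branches below it), a straightforward induction on $s$ shows $\rho_s \circ \texttt{split}_s = \mathrm{id}$. Conversely, splitting a freshly substituted tree recovers the tuple, using that every component $\tau_i$ is already canonical and that subtrees of canonical trees are canonical, so that canonicalization never collapses a component; this gives $\texttt{split}_s \circ \rho_s = \mathrm{id}$. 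Hence $\texttt{split}_s$ is a bijection onto $\mathbb{T}^n$.

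For the homomorphism property I would induct on $s$ for each operation, treating $\sqcup$ in detail since $\sqcap$ and $\bar{\cdot}$ are identical. In the inductive step $s=\tree{s_0}{s_1}$, unfold the arguments $\tau,\tau'$ one level to $\tree{\alpha_0}{\alpha_1}$ and $\tree{\beta_0}{\beta_1}$. By the definition of $\texttt{split}$ (Alg.~\ref{al:flatten}, lines 18--26) we have $\texttt{split}_s(\tau)=\listc{\texttt{split}_{s_0}(\alpha_0)}{\texttt{split}_{s_1}(\alpha_1)}$, and likewise for $\tau'$ and for $\tau \sqcup \tau'$; by the recursive characterization above, the top-level children of $\tau \sqcup \tau'$ are $\alpha_0 \sqcup \beta_0$ and $\alpha_1 \sqcup \beta_1$. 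Applying the induction hypotheses for $s_0$ and $s_1$ to these children turns each half into the component-wise join of the corresponding sub-tuples, and concatenating the halves yields exactly $\texttt{split}_s(\tau) \sqcup' \texttt{split}_s(\tau')$, as required.

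The main obstacle is the bookkeeping around canonical forms: $\texttt{split}$ implicitly unfolds leaves to match the branching of $s$, the Boolean operations unfold to a common shape and refold, and one must check that these unfoldings are mutually compatible so that no canonicalization collapses information differently on the two sides of the claimed equalities. Isolating the recursive characterization of $\sqcup$, $\sqcap$, $\bar{\cdot}$ stated above---and observing that it is insensitive to canonicality---dissolves this difficulty and reduces both halves of the lemma to routine structural inductions on $s$.
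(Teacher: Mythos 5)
Your proposal is correct and follows essentially the same route as the paper's proof, which likewise establishes bijectivity of $\texttt{split}_s$ by induction and then verifies that it commutes with $\sqcup$, $\sqcap$, and $\bar{\cdot}$ (the paper treats only $s = \Tree [ $\sleaf$ $\sleaf$ ]$ and declares the general case similar). Your version merely supplies the details the paper elides---the explicit inverse $\rho_s$ and the canonicality-insensitive recursive characterization of the Boolean operations---so it is a faithful, more fully worked-out rendering of the same argument.
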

\begin{proof}
W.l.o.g. we will only prove the case $s = \Tree [ $\sleaf$ $\sleaf$ ]$ as similar argument can be obtained for the general case. By inductive arguments, we can prove that $\texttt{split}_s$ is a bijection from $\mathbb{T}$ to $\mathbb{T} \times \mathbb{T}$. Furthermore:
\begin{enumerate}
\item $\texttt{split}_s(a)~\diamond~\texttt{split}_s(b) = \texttt{split}_s(c)$ iff $a~\diamond~b = c$ for $\diamond \in \{\sqcup,\sqcap\}$.
\item $\texttt{split}_s(\bar{\tau}) = \overbar{\texttt{split}_s(\tau)}$ .
\end{enumerate}

Hence $\mathsf{split}_s$ is an isomorphism from $\bmodel$ to $\bmodel' = \struct{\mathbb{T}\times\mathbb{T},\sqcup',\sqcap',\bar{\cdot}'}$. \qed
\end{proof}

\noindent{\textbf{Proof of Lemma~\ref{lm:flat}.4}}. By Lemma~\ref{lm:split}, the function $\texttt{split}_s$ allows us to transform formulae in $\bmodel$ into equivalent formulaes over tree share lists in $\bmodel' = \struct{\mathbb{T}^n,\sqcup',\sqcap',\bar{\cdot}'}$. On the other hand, observe that formulae in $\bmodel'$ can be rewritten into equivalent formulae in $\bmodel$ using conjunctions and extra quantifier variables, \emph{e.g.} $\exists a \forall b.~a \sqcup' b = [\Tree [ $\circ$ $\bullet$ ], \bullet]$ is equivalent to $\exists a_1 \exists a_2 \forall b_1 \forall b_2.~ a_1 \sqcup b_1 = \Tree [ $\circ$ $\bullet$ ] \wedge a_2 \sqcup b_2 = \bullet$.  Hence the result follows. \qed

The correctness of Lemma~\ref{lm:flat} is now fully justified. We end this section by pointing out a refined complexity result for the existential theory of $\bmodel$, which corresponds to the satisfiability problem of quantifier-free formulae. Note that the number of quantifier alternations for this fragment is zero, and thus Theorem~\ref{thm:forder} only gives us an upper bound $\mathsf{STA}(\ast,2^{n^{O(1)}},0)$, which is exponential time complexity. Instead, we can use Lemma~\ref{lm:flat} to acquire the precise complexity:

\begin{corollary}
The existential theory of $\bmodel$, with arbitrary tree share constants, is $\mathsf{NP}$-complete.
\end{corollary}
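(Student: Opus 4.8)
The plan is to reduce the arbitrary-constant case to the $\{\bullet,\circ\}$-constant case using the flattening algorithm of Lemma~\ref{lm:flat}, and then to invoke the classical $\mathsf{NP}$-completeness of the existential theory of atomless Boolean algebras. For the upper bound I would proceed as follows. An instance of the existential theory is a quantifier-free formula $\Phi$ of size $n$ over $\bmodel$ with arbitrary tree constants, and the question is whether $\Phi$ is satisfiable. Running $\texttt{flatten}(\Phi)$ produces, in time $O(n^2)$, a formula $\Phi'$ of size $O(n^2)$ whose only constants are $\{\bullet,\circ\}$ and which is equivalent to $\Phi$ over $\bmodel$ (Lemma~\ref{lm:flat}.1, \ref{lm:flat}.3, \ref{lm:flat}.4). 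Since $\texttt{flatten}$ preserves the number of quantifier alternations (Lemma~\ref{lm:flat}.2), $\Phi'$ is again quantifier-free; moreover the subscripted variables it introduces range over $\mathbb{T}$ by the $\texttt{split}_s$ isomorphism of Lemma~\ref{lm:split}, so $\Phi$ and $\Phi'$ are equisatisfiable. It then suffices to decide satisfiability of the $\{\bullet,\circ\}$-constant formula $\Phi'$ over $\bmodel$, which, since $\bmodel$ is a countable atomless Boolean algebra (\S\ref{sub:treepre}), is precisely the existential theory of the countable atomless BA. That theory is in $\mathsf{NP}$ by classical results, and composing the polynomial-time reduction $\texttt{flatten}$ with an $\mathsf{NP}$ decision procedure stays in $\mathsf{NP}$; the $O(n^2)$ blow-up is harmless because any polynomial increase is absorbed by the $\mathsf{NP}$ bound.

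For the lower bound I would show $\mathsf{NP}$-hardness already for the sub-case whose only constants are $\{\bullet,\circ\}$, which is a special case of the theory in question. The reduction is from propositional \textsc{Sat}: given a propositional formula $\varphi$ over variables $p_1,\dots,p_k$, replace each $p_i$ by a fresh tree variable $x_i$ and each connective $\wedge,\vee,\neg$ by $\sqcap,\sqcup,\bar{\cdot}$ to obtain a Boolean term $\hat\varphi$, and output the quantifier-free formula $\neg(\hat\varphi = \circ)$. Because $\bmodel$ is nontrivial, $\hat\varphi$ can be made different from $\circ$ by a $\{\bullet,\circ\}$-valued assignment to the $x_i$ exactly when $\varphi$ is propositionally satisfiable (a contradiction forces $\hat\varphi = \circ$ identically, while a satisfying assignment yields $\hat\varphi = \bullet$). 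Hence $\varphi \in \textsc{Sat}$ iff $\neg(\hat\varphi = \circ)$ is satisfiable in $\bmodel$, and this is a log-space reduction, giving $\mathsf{NP}$-hardness of the full theory.

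The step I expect to require the most care is the transfer of the $\mathsf{NP}$ upper bound from the \emph{class} of infinite/atomless Boolean algebras to the \emph{particular} structure $\bmodel$. Here I would rely on $\bmodel$ (restricted to the constants $\{\bullet,\circ\}$) being a model of the complete, $\omega$-categorical theory of atomless Boolean algebras, so that satisfiability of a quantifier-free formula in $\bmodel$ coincides with the membership of its existential closure in that theory; the $\mathsf{NP}$ bound then follows from the known complexity of this existential theory. Everything else is routine: the correctness and complexity of $\texttt{flatten}$ are already established in Lemmas~\ref{lm:flat} and~\ref{lm:split}, and the hardness reduction is standard.
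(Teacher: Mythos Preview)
Your proposal is correct and follows essentially the same approach as the paper: the upper bound uses \texttt{flatten} (Lemma~\ref{lm:flat}) to reduce to the $\{\bullet,\circ\}$-constant case and then invokes the classical $\mathsf{NP}$ bound for the existential theory of (countable) atomless Boolean algebras, exactly as the paper does. The only difference is in the lower bound: the paper simply cites the $\mathsf{NP}$-completeness of the existential theory of infinite/atomless BAs (since $\bmodel$ is one such algebra), whereas you spell out an explicit \textsc{Sat} reduction; both are valid, and your version is just a self-contained reproof of the cited fact.
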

\begin{proof}
Recall a classic result that existential theory of Countably Atomless BAs is $\mathsf{NP}$-complete~\cite{kim96:nba}. As $\bmodel$ belongs to this class, the lower bound is justified. To see why the upper bound holds, we use the function $\texttt{flatten}$ to transform the input formula into standard BA formula and thus the result follows from Lemma~\ref{lm:flat}. \qed
\end{proof}

\hide{
The heart of our transformation is the procedure $\fname{Flatten}$ described in Alg.~\ref{al:flatten}. Briefly speaking, $\fname{Flatten}$ transforms a tree formula $\Phi$ into an equivalent tree formula $\Phi'$ whose only constants are the two trees $\bullet$ and $\circ$. Consequently, the complexity of CABA described in Prop.~\ref{prop:np} and \ref{prop:kozen} becomes applicable to $\Phi'$. Before explaining the details of $\fname{Flatten}$, we introduce several simple concepts that will help capture the meaning and correctness of the procedure. First, we start with \emph{tree shape} which is intuitively the tree skeleton without leaves:

\begin{algorithm}
\begin{algorithmic}[1]
\Function{Flatten}{$\Phi$}
\Require $\Phi$ is a sentence
\Ensure Return an equivalent sentence $\Phi'$ s.t. $\circ$ and $\bullet$ are the only constants.
\If {\hi{\Phi} = 0} \Return $\Phi$
\Else
\State $s \leftarrow \shapet{\Phi}$
\For {each atomic formula $\Psi$ in $\Phi$: $t_1 = t_2$ or $t_1 \textsf{ op } t_2 = t_3$, $\textsf{op} \in \{\sqcup,\sqcap\}$ }
		\State $[t^1_i,\ldots t^{\size{s}}_i] \leftarrow$ \Call{Shape\_decom}{$t_i,s$}
		\State $\Psi' \leftarrow \bigwedge_{j=1}^{\size{s}} \Psi_j$ where $\Psi_j \leftarrow t^j_1 = t^j_2$ or $t^j_1 \textsf{ op } t^j_2 = t^j_3$
		\State $\Phi \leftarrow$ replace $\Psi$ with $\Psi'$
	  \EndFor
	  \For {each quantifier $Qv$ in $\Phi$}
	  	\State $[v_1,\ldots,v_n] \leftarrow$ \Call{Shape\_decom}{$v,s$}
	  	\State $\Phi \leftarrow $ replace $Qv$ in $\Phi$ with $Qv_1\ldots Qv_n$
	  \EndFor
	\State \Return $\Phi$
\EndIf
\EndFunction

\State
\Function{Shape\_decom}{$t,s$}
\Require $t$ is either a variable or a tree constant and $s$ is a shape
\Ensure Return a list of subtrees of $t$ by decomposing $t$ according to shape $s$
\If {$s = \sleaf$} {\Return $[t]$}
\Else { let $s = \Tree [ $s_1$ $s_2$ ]$ in}
  \If {$t$ is a variable ($v$ or $\bar{v}$)}
  \State \Return \listc{\Call{Shape\_decom}{t_0,s_1}}{\Call{Shape\_decom}{t_1,s_2}}
  \ElsIf {$t$ is $\bullet$ or $\circ$}
  \State \Return \listc{\Call{Shape\_decom}{t,s_1}}{\Call{Shape\_decom}{t,s_2}}
  \Else { let $t = \Tree [ $t_1$ $t_2$ ]$ in}
  \State \Return \listc{\Call{Shape\_decom}{t_1,s_1}}{\Call{Shape\_decom}{t_2,s_2}}
  \EndIf
\EndIf
\EndFunction
\end{algorithmic}
\caption{Flatten a formula into an equivalent formula of height zero}\label{al:flatten}
\end{algorithm}

\begin{definition}
Let $\shapet{\tau}$ be the shape of $\tau$ which is obtained by replacing its leaves with $\sleaf$, \emph{e.g.}, $\shapet{\Tree [ $\bullet$ [ $\bullet$ $\circ$ ] ]} = \Tree [ $\sleaf$ [ $\sleaf$ $\sleaf$ ] ]$.

For two tree shapes $s_1$ and $s_2$, we let $s_1 \sqcup s_2$ to be the their combined shape\footnote{Note that shapes are not folded into any canonical form (if they were then the only one would be $\sleaf$).}:
$$
\sleaf \sqcup s_2 ~\defi~ s_2 \quad\quad
s_1 \sqcup \sleaf ~\defi~ s_1 \quad\quad
\Tree [ $s_1^l$ $s_1^r$ ] \sqcup \Tree [ $s_2^l$ $s_2^r$ ] ~\defi~ \Tree [ $s_1^l~{\sqcup}~s_2^l~$  $~s_1^r~{\sqcup}~s_2^r$ ]
$$

We override the shape function $\shapet{~}$ over tree formulas as the combination all of its tree shapes, \emph{i.e.}
$$
\shapet{\Phi} ~\defi~ \sleaf \sqcup \bigsqcup_{\tau \in \Phi} \shapet{\tau}.
$$
\end{definition}

Next, we use tree shape as the unit to measure the \emph{size} of trees, from which we will obtain the precise size for tree formulas:

\begin{definition}
The size of a tree shape $s$ (or a tree $\tau$), denoted by $\size{s}$ (or $\size{\tau}$), is the number of its leaves and internal nodes. We override the size function over formulas as the accumulation of the sizes of its sub-formulas:
$$
\size{v} \defi 1 \quad\quad
\size{\tau_1 = \tau_2} \defi \size{\tau_1} + \size{\tau_2} \quad\quad
\size{\tau_1 \textsf{ op } \tau_2 = \tau_3} \defi \sum_{i=1}^3 \size{\tau_i}, \textsf{op} \in \{\sqcup,\sqcap\}
$$
$$
\size{\overbar{\Phi}} = \size{\neg \Phi} = \size{\forall v.~\Phi} = \size{\exists v.~\Phi} \defi \size{\Phi} + 1
$$
$$
\size{\Phi_1 \textsf{ op } \Phi_2} \defi \size{\Phi_1} + \size{\Phi_2} + 1, \textsf{op} \in \{\wedge,\vee,\rightarrow\}.
$$
\end{definition}

We now explain in detail the transformation of a tree formula $\Phi$ with size $n = \size{\Phi}$ using $\fname{Flatten}$ in Alg.~\ref{al:flatten}:
\begin{enumerate}
\item On line 4, we compute the formula shape $s = \shapet{\Phi}$ by collectively combining all the tree shapes in $\Phi$.
\item Between lines 5-9, for each atomic sub-formula $t_1 = t_2$ or $t_1 \textsf{ op } t_2 = t_3$ where $\textsf{op} \in \{\sqcup,\sqcap\}$, we replace it with the conjunction $\bigwedge_{i=1}^{\size{s}}t^j_1 = t^j_2$ or $\bigwedge_{i=1}^{\size{s}}t^j_1 \textsf{ op } t^j_2 = t^j_3$ in which $[t_i^1,\ldots,t_i^{\size{s}}] = \fname{Shape\_decom}(t_i,s)$ is the decomposition list of $t_i$ using subroutine $\fname{Shape\_decom}$.
\item   On lines $10-13$, we replace quantifier variables with their decomposed counterparts and return the modified formula as result.
\end{enumerate}

It remains to analyze the complexity of $\fname{Flatten}$ itself in Algorithm~\ref{al:flatten}. In particular, given a formula $\Phi$ of size $n$, it suffices to show that the transformed formula $\Phi'$ computed by $\fname{Flatten}$ has size $O(n^2)$ with the same quantifier alternations as in $\Phi$. Before proving the main result, we will provide several intermediate lemmas. First, we show that the size of the combined shape is limited by the sum of individual sizes:

\begin{lemma}\label{lm:size}
Let $s_1,\ldots,s_n$ be tree shapes then the size of their combined shape is at most the sum of their sizes, \emph{i.e.},
$\size{\bigsqcup_{i=1}^n s_i} ~\le~ \sum_{i=1}^{n} \size{s_i}$.
\end{lemma}
\begin{proof}
When $n=1$ the result is trivial.
The general case follows from induction and the base case when $n=2$, \emph{i.e.}, $\size{s_1 \sqcup s_2} < \size{s_1} + \size{s_2}$ which can be done by induction on $s_1$.
\end{proof}

We generalize the above result to obtain the upper bound for tree formulae:

\begin{lemma}\label{lm:shapelt}
The size of a formula's shape is at most its size, \emph{i.e.},
$\size{\shapet{\Phi}} ~\le~ \size{\Phi}$.
\end{lemma}
\begin{proof}
If $\Phi$ contains no constant except $\{\circ,\bullet\}$ then $\size{\shapet{\Phi}} = \size{\sleaf} = 1$ and thus the inequality trivially holds given the fact that $\size{\Phi} \geq 2$. Otherwise, let $T = \{\tau_1,\ldots,\tau_n\}$ be tree constants in $\Phi$ then by Lemma~\ref{lm:size}, we have
$$\size{\shapet{\Phi}} ~=~ \size{\bigsqcup_{i=1}^n \shapet{\tau_i}} ~\le~ \sum_{i=1}^{n} \size{ \shapet{\tau_i}} ~\leq~ \size{\Phi}.~\qed$$
\end{proof}

We recall that the co-domain of $\fname{Shape\_decom}$ is the set of $n$-dimensional lists in $\treedom^n$. Accordingly, we extend the model $\mathcal{M}$ into the $n$-dimensional model $\bmodel^n = \struct{\treedom^n,\sqcup_n,\sqcap_n,\bar{\cdot}_n}$ in which $\sqcup_n,\sqcap_n,\bar{\cdot}_n$ are defined by applying $\{\sqcup,\sqcap,\bar{\cdot}\}$ component-wise. It is straightforward to verify that $\bmodel^n$ is also a CABA. Thus by Prop.~\ref{prop:acomplete} of the unique isomorphism, two CABA models $\bmodel$ and $\bmodel^n$ are isomorphic. Additionally, we can construct an effective isomorphism between them using the procedure $\fname{Shape\_decom}$:

\begin{lemma}\label{lm:decom}
Let $s$ a shape such that $\size{s} = n$ then
$$
\fname{Shape\_decom}(\_,s) ~\defi~ \lambda \tau.~\fname{Shape\_decom}(\tau,s)
$$

is an isomorphism from $\mathcal{M}$ to $\mathcal{M}^n$.
\end{lemma}
\begin{proof}
By induction on the structure of $s$. \qed
\end{proof}

Since the result list of $l = \fname{Shape\_decom}(\tau,s)$ contains only subtrees of $\tau$, their heights are strictly smaller than $\hi{\tau}$ for $\hi{\tau} > 0$ and $s \neq \sleaf$. In the next lemma, we show that if the shape $s$ is sufficiently large then $l$ will contain only subtrees of height zero. For convenience, we say tree shape $s_1$ is in $s_2$, denoted by $s_1 \sqsubseteq s_2$, if $s_1 \sqcup s_2 = s_2$:

\begin{lemma}\label{lm:shape}
 Let $\tau$ be a tree and $s$ a shape such that $\shapet{\tau} \sqsubseteq s$ then the output list of $\fname{Shape\_decom}(\tau,s)$ contains only trees of height zero in $\{\bullet,\circ\}$.
\end{lemma}
\begin{proof}
By induction on the structure of $s$. \qed
\end{proof}

Finally, the soundness of $\fname{Flatten}$ follows from Lemmas \ref{lm:shapelt}, \ref{lm:decom} and \ref{lm:shape}:

\begin{lemma}\label{lm:flatten}
Let $\Phi$ be a tree formula then $\Phi' = \fname{Flatten}(\Phi)$ has height zero and is equivalent to $\Phi$. Furthermore, let $n = \size{\Phi}$ be the size of $\Phi$ then the the size $\size{\Phi'}$ is $O(n^2)$ and $\Phi'$ has the same number of quantifier alternations as in $\Phi$.
\end{lemma}
\begin{proof}
The only non-trivial claim is the size of $\Phi'$. Let $s = \shapet{\Phi}$ be the shape of $\Phi$ and $n = \size{s}$ its size then for each variable $v$, $\fname{Shape\_decom}(v,s)$ is a list of $n$ variables. For atomic formula $\Psi$, $\fname{Flatten}$ decomposes it into $\Psi' = \bigwedge_{i=1}^n\Psi_i$ such that $\size{\Psi_i} = O(\size{\Psi})$ and thus $\size{\Psi'} = O(\Sigma_{i=1}^n \size{\Psi_i}) = O(n \size{\Psi})$, \emph{i.e.}, their sizes differ by a factor of $O(n)$. Generally, two formulas $\Phi$ and $\Phi'$ also have sizes that differ by a factor of $O(n)$, \emph{i.e.}, $\size{\Phi'} = O(n\size{\Phi})$. By Lemma~\ref{lm:shapelt}, we have $n = \size{\shapet{\Phi}} \leq \size{\Phi}$ and hence $\size{\Phi'} = O(\size{\Phi}^2)$. \qed
\end{proof}

\begin{corollary}\label{co:np}
The $\exists$-theory of $\mathcal{M}$ is $\le_{\text{log}}$-complete for $\mathsf{NP}$.
\end{corollary}
\begin{proof}
Notice that $\fname{Flatten}$ does not increase the number of quantifier alternations in the formula. Thus by Prop.~\ref{prop:np}, the result follows. \qed
\end{proof}

\hide{

In the next lemma, we obtain upper bound for the size of the combined shape:

\begin{lemma}\label{lm:size}
Let $s_1,\ldots,s_n$ be tree shapes then the size of their combined shape is at most the sum of their sizes:
$$
\size{\bigsqcup_{i=1}^n s_i} ~\le~ \sum_{i=1}^{n} \size{s_i}.
$$
\end{lemma}

Using the above result, a similar upper bound for formulae can be deduced:

\begin{lemma}\label{lm:shapelt}
Let $\Phi$ be a formula then the size of its shape is at most its size, \emph{i.e.}:
$$
\size{\shapet{\Phi}} ~\le~ \size{\Phi}.
$$
\end{lemma}

We give an overview of the procedure $\fname{Shape\_decom}: \treedom \times \shapedom \mapsto \mathcal{L}(\treedom)$  described in Alg.~\ref{al:flatten}. In short, it takes a tree $\tau$ (or a variable $v$) with a shape $s$ and then decomposes $\tau$ into sub-trees (or makes many copies of $v$ with different names) as guided by the shape $s$. For instance, let $\tau = \Tree [ $\bullet$  [ $\circ$ $\bullet$ ] ]$ and $s = \Tree [ [ $\sleaf$ $\sleaf$ ] $\sleaf$ ]$ then:
$$\fname{Shape\_decom}(\tau,s) = [\bullet,\bullet,\Tree [ $\circ$ $\bullet$ ]]
~~\wedge~~\fname{Shape\_decom}(v,s) = [v_{00},v_{01},v_{1}]
$$

The co-domain of $\fname{Shape\_decom}$ is the $n$-dimensional $\treedom^n$. As a result, we extend the model $\mathcal{M}$ into the $n$-dimensional model $\bmodel^n = \struct{\treedom^n,\sqcup_n,\sqcap_n,\overbar{\cdot}_n}$ in which $\sqcup_n,\sqcap_n,\overbar{\cdot}_n$ are defined by applying $\sqcup,\sqcap,\overbar{\cdot}$ component-wise. It follows that $\bmodel^n$ is also a countable atomless Boolean Algebra. Thus by Prop.~\ref{prop:acomplete} about the uniqueness of isomorphism, two CABA models $\bmodel$ and $\bmodel^n$ are isomorphic. Additionally, we can construct an effective isomorphism between them using the procedure $\fname{Shape\_decom}$:
\begin{lemma}\label{lm:decom}
For a shape $s$ such that $\size{s} = n$, the function
$$
\fname{Shape\_decom}(\_,s) ~\defi~ \lambda \tau.~\fname{Shape\_decom}(\tau,s)
$$
is an isomorphism from $\mathcal{M}$ to $\mathcal{M}^n$.
\end{lemma}

Since the result list of $l = \fname{Shape\_decom}(\tau,s)$ contains subtrees of $\tau$, their heights are strictly smaller than $\hi{\tau}$ if $\hi{\tau} > 0$ and $s \neq \sleaf$. Moreover, if we choose $s$ sufficiently large then $l$ will contain only subtrees of height zero:

\begin{lemma}\label{lm:shape}
Let $\tau$ be a tree and $s$ a shape such that $\shapet{\tau} \sqsubseteq s$ then all trees in the output of $\fname{Shape\_decom}(\tau,s)$ have height zero.
\end{lemma}

\begin{algorithm}[t]
\begin{algorithmic}[1]
\Function{Flatten}{$\Phi$}
\Require $\Phi$ is a sentence
\Ensure Return an equivalent sentence $\Phi'$ s.t. $\circ$ and $\bullet$ are the only constants.
\If {\hi{\Phi} = 0} \Return $\Phi$
\Else
\State $s \leftarrow \shapet{\Phi}$
\For {each atomic sub-formula $\Psi: t_1 = t_2$ or $t_1 \star t_2 = t_3$, $\star \in \{\sqcup,\sqcap\}$ in $\Phi$}
		\State $[t^1_i,\ldots t^{\size{s}}_i] \leftarrow$ \Call{Shape\_decom}{$t_i,s$}
		\State $\Psi' \leftarrow \bigwedge_{j=1}^{\size{s}} \Psi_j$ where $\Psi_j \leftarrow t^j_1 = t^j_2$ or $t^j_1 \star t^j_2 = t^j_3$
		\State $\Phi \leftarrow$ replace $\Psi$ with $\Psi'$
	  \EndFor
	  \For {each quantifier $Qv$ in $\Phi$}
	  	\State $[v_1,\ldots,v_n] \leftarrow$ \Call{Shape\_decom}{$v,s$}
	  	\State $\Phi \leftarrow $ replace $Qv$ in $\Phi$ with $Qv_1\ldots Qv_n$
	  \EndFor
	\State \Return $\Phi$
\EndIf
\EndFunction

\State
\Function{Shape\_decom}{$t,s$}
\Require $t$ is either a variable or a tree constant and $s$ is a shape
\Ensure Return a list of subtrees of $t$ by decomposing $t$ according to shape $s$
\If {$s = \sleaf$} {\Return $[t]$}
\Else { let $s = \Tree [ $s_1$ $s_2$ ]$ in}
  \If {$t$ is a variable ($v$ or $\bar{v}$)}
  \State \Return \listc{\Call{Shape\_decom}{t_0,s_1}}{\Call{Shape\_decom}{t_1,s_2}}
  \ElsIf {$t$ is $\bullet$ or $\circ$}
  \State \Return \listc{\Call{Shape\_decom}{t,s_1}}{\Call{Shape\_decom}{t,s_2}}
  \Else { let $t = \Tree [ $t_1$ $t_2$ ]$ in}
  \State \Return \listc{\Call{Shape\_decom}{t_1,s_1}}{\Call{Shape\_decom}{t_2,s_2}}
  \EndIf
\EndIf
\EndFunction
\end{algorithmic}
\caption{Flatten a formula into an equivalent formula of height zero}\label{al:flatten}
\end{algorithm}

We now explain in detail how to decompose a formula $\Phi$ of size $n = \size{\Phi}$ using the procedure $\fname{Flatten}$ in Alg.~\ref{al:flatten}. First, we compute the formula shape $s = \shapet{\Phi}$ by collectively combining all the tree shapes in $\Phi$. This step takes $O(n)$ as the shape computation is linear time in term of formula size. Next, for each atomic sub-formula $t_1 = t_2$ or $t_1 \star t_2 = t_3$ where $\star \in \{\sqcup,\sqcap\}$ of $\Phi$, we replace it with the conjunction $\bigwedge_{i=1}^{\size{s}}t^j_1 = t^j_2$ or $t^j_1 \star t^j_2 = t^j_3$ in which $\fname{Shape\_decom}(t_i,s) = [t_i^1,\ldots,t_i^{\size{s}}]$ and $\fname{Shape\_decom}(t',s) = [t'_1,\ldots,t'_{\size{s}}]$. This work is done by making use of the subroutine $\fname{Shape\_decom}$. Finally, the soundness of $\fname{Flatten}$ follows from Lemmas \ref{lm:shapelt}, \ref{lm:decom} and \ref{lm:shape}:

\begin{lemma}\label{lm:flatten}
Let $\Phi$ be a tree formula then $\Phi' = \fname{Flatten}(\Phi)$ has height zero and is equivalent to $\Phi$.

Furthermore, let $n = \size{\Phi}$ be the size of $\Phi$ then the time complexity of $\fname{Flatten}$ is $O(n^2)$ and it preserves the number of quantifier alternations in $\Phi$.
\end{lemma}
\begin{example}
Let $\Phi ~\defi~ \forall a \exists b.~a \sqcup b = \Tree [ $\bullet$ [ $\bullet$ $\circ$ ] ] ~\vee~ \neg (a = \Tree [ [ $\bullet$ $\circ$ ] $\circ$ ])$ then:
$$
\shapet{\Phi} ~=~ \Tree [ $\sleaf$ [ $\sleaf$ $\sleaf$ ] ] ~\sqcup~ \Tree [ [ $\sleaf$ $\sleaf$ ] $\sleaf$ ] = \Tree [ [ $\sleaf$ $\sleaf$ ] [ $\sleaf$ $\sleaf$ ] ]
$$
There are two atomic sub-formulas, $\Psi_1: a \sqcup b = \Tree [ $\bullet$ [ $\bullet$ $\circ$ ] ]$ and $\Psi_2: a = \Tree [ [ $\bullet$ $\circ$ ] $\circ$ ]$. Thus:
\begin{align*}
\Psi_1' &\quad\defi\quad \fname{Atomic\_decompose}(\Psi_1,\shapet{\Phi})\\
&\quad=\quad a_{00} \sqcup b_{00} = \bullet \wedge a_{01} \sqcup b_{01} = \bullet \wedge a_{10} \sqcup b_{10} = \bullet \wedge a_{11} \sqcup b_{11} = \circ\\
\Psi_2' &\quad\defi\quad \fname{Atomic\_decompose}(\Psi_2,\shapet{\Phi})\\
&\quad=\quad a_{00} = \bullet \wedge a_{01} = \circ \wedge a_{10} = \circ \wedge a_{11} = \circ
\end{align*}

As a result, the transformed formula of height zero is
$$
\Phi' = \forall a_{00}\forall a_{01}\forall a_{10} \forall a_{11} \exists b_{00} \exists b_{01} \exists b_{10} \exists b_{11}.~\Psi_1' ~\vee~ \neg (\psi'_2).
$$
\end{example}

\begin{corollary}\label{co:np}
The complexity of $\Sigma_1 \cap \fth{\bmodel}$ is $\mathsf{NP}$-complete.
\end{corollary}

We are now ready to justify the correctness of Theorem \ref{thm:forder}:

\noindent \textbf{Proof of Theorem \ref{thm:forder}.} By Prop.~\ref{prop:kozen}, it suffices to show that the complexity of $\fth{\bmodel}$ is exactly the same as the complexity of the class atomless Boolean algebra. By Prop. \ref{prop:acomplete}, any atomless Boolean Algebra model is elementarily equivalent to $\bmodel$. Using Lemma~\ref{lm:flatten}, we can transform, in polynomial time, a tree formula of $\bmodel$ into an equivalent formula that only contains $\bullet,\circ$ as constants and thus is a Boolean formula. As a result, the Turing machine that decides the theory of atomless BA can be used to decide $\fth{\bmodel}$ and vice versa. \qed
}
}

\section{Complexity of combined structure $\lmodel \defi \struct{\treedom,\sqcup,\sqcap,\bar{\cdot},\rbow{\tau}}$}
\label{sec:combresult}
In addition to the Boolean operators in \S\ref{sec:boolean}, recall from
\S\ref{sub:overview} that tree shares also
possess a multiplicative operator $\bowtie$ that resembles the multiplication
of rational permissions. As mentioned in \S\ref{sub:treepre},
\cite{le16:complex}~showed that $\bowtie$ is isomorphic to string concatenation,
implying that the first-order theory of $ \struct{\treedom,\bowtie}$ is undecidable,
and so of course the first-order theory of $ \struct{\treedom,\sqcup,\sqcap,\bar{\cdot},\bowtie} $
is likewise undecidable.

By restricting multiplication to have only constants on
the right-hand side, however, \emph{i.e.} to the family of unary
operators $\rbow{\tau}(x) ~\defi~ x \bowtie \tau$, Le \emph{et al.} showed that
decidability of the first-order theory was restored
for the combined structure $\lmodel \defi \struct{\treedom,\sqcup,\sqcap,\bar{\cdot},\rbow{\tau}}$.
However, Le \emph{et al.} were not able to specify any particular complexity
class.  In this section, we fill in this blank by proving that the
first-order theory of $\lmodel$ is nonelementary, \emph{i.e.} that it cannot be
solved by any resource-bound (space or time) algorithm:
\begin{theorem}\label{thm:lmodel}
The first-order theory of $\lmodel$ is non-elementary.
\end{theorem}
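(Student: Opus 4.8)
The plan is to establish the lower bound by a first-order interpretation of a word structure whose theory is already known to be non-elementary. I would work with $\mathcal{W} = \struct{\{L,R\}^{*}, s_L, s_R, \preceq}$, the finite words over $\{L,R\}$ equipped with the two suffix-successor functions $s_L(w) = wL$ and $s_R(w) = wR$ together with the prefix partial order $\preceq$. This is precisely the structure obtained from the one shown \emph{elementary} in \S\ref{sec:bowresult} by replacing the prefix \emph{successor} with the full prefix \emph{relation}, and it is exactly this strengthening that pushes the theory up: the first-order theory of $\mathcal{W}$ is known to be non-elementary by the classical lower-bound technique of Stockmeyer and Meyer. It therefore suffices to give a first-order interpretation of $\mathcal{W}$ inside $\lmodel$; since such an interpretation induces an elementary (in fact polynomial) many-one reduction on sentences, non-elementariness transfers to the first-order theory of $\lmodel$.

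For the encoding I would map a word $w = c_1\cdots c_n$ to the tree share $T_w$ having a \emph{single black leaf}, placed at the path $w$; thus $T_{\epsilon} = \bullet$, $T_L = \ltree$, $T_R = \rtree$, and $T_{wL}$, $T_{wR}$ split the unique black leaf of $T_w$. Under this encoding the three primitives of $\mathcal{W}$ are realised inside $\lmodel$ without extra quantifiers. The suffix successors become right-multiplication by a constant, $T_{wL} = T_w \bowtie \ltree$ and $T_{wR} = T_w \bowtie \rtree$, i.e. the unary operators $\rbow{\ltree}$ and $\rbow{\rtree}$ already present in $\lmodel$; note these stay within the single-black-leaf trees. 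Viewing $\bmodel$ as the (countable atomless) Boolean algebra of clopen subsets of Cantor space, $T_w$ is the basic cylinder $[w]$, and $w \preceq w'$ holds exactly when $[w'] \subseteq [w]$; hence the prefix relation is realised by the \emph{reverse} of the Boolean order, $w \preceq w' \iff T_{w'} \sqsubseteq T_w$, where $a \sqsubseteq b \,\defi\, a \sqcup b = b$ is purely Boolean. The partition identity $(\tau \bowtie \ltree) \oplus (\tau \bowtie \rtree) = \tau$ from \eqref{eq:fragmentex}, a consequence of right-distributivity of $\bowtie$ over $\sqcup,\sqcap$, will be used repeatedly to verify these correspondences.

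The crux, and what I expect to be the main obstacle, is to exhibit a first-order formula $\mathrm{Sing}(x)$ over $\lmodel$ that defines exactly the single-black-leaf trees, so that the quantifiers of the interpreted formula can be relativised to this definable domain. The difficulty is that ``$x$ has exactly one black leaf'' is naturally an \emph{inductive} notion: the single-black-leaf trees are the closure of $\bullet$ under $x \mapsto x \bowtie \ltree$ and $x \mapsto x \bowtie \rtree$, and a naive definition would need a transitive closure that first-order logic cannot express. My plan is to trade this induction for a genuinely first-order condition, exploiting three facts special to $\lmodel$: (i) the operators $\rbow{\ltree}$ and $\rbow{\rtree}$ \emph{preserve} the number of black leaves, so single-black-leaf is the minimal nonzero leaf-count class; (ii) the ``parent'' map is first-order, being the inverse of the injective operations $\rbow{\ltree},\rbow{\rtree}$ (recall $\bowtie$ is injective on non-$\circ$ arguments); and (iii) connectedness of the black region---the property that distinguishes a cylinder $[w]$ from an arbitrary clopen set with the same number of leaves---can be tested against the prefix order $\sqsubseteq$. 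Combining (i)--(iii) into a single quantifier-bounded formula is the delicate step.

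Once $\mathrm{Sing}(x)$ is in hand, the remainder is routine: I would define the interpretation by relativising every quantifier of the input $\mathcal{W}$-sentence $\varphi$ to $\mathrm{Sing}$, translating $s_L, s_R$ by $\bowtie \ltree$ and $\bowtie \rtree$ and $\preceq$ by reverse $\sqsubseteq$, and then check by induction on $\varphi$ that $\mathcal{W} \models \varphi$ iff $\lmodel \models \varphi^{\ast}$. Because the translation is first-order and increases formula size only polynomially, a non-elementary decision procedure for $\mathrm{Th}(\lmodel)$ would yield one for $\mathrm{Th}(\mathcal{W})$, contradicting the lower bound for $\mathcal{W}$; this proves Theorem~\ref{thm:lmodel}. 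As a sanity check, the interpretation also explains the sharp contrast with \S\ref{sec:bowresult}: the elementary structure there carries only the prefix \emph{successor}, corresponding to a single Boolean-definable step, whereas the full prefix \emph{relation} $\sqsubseteq$ is what the Boolean operators of $\lmodel$ contribute, and it is precisely this combination that makes the theory non-elementary.
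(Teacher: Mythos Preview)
Your approach is essentially the same as the paper's: both reduce from the non-elementary word structure $\struct{\{0,1\}^{*}, S_0, S_1, \preceq}$, encode words as single-black-leaf (``unary'') trees, realise the successors by $\rbow{\ltree}$ and $\rbow{\rtree}$ and the prefix relation by the reversed Boolean order $\sqsubseteq$, and then relativise all quantifiers to a first-order definition of the unary trees. The only step you leave as a plan rather than a proof---the explicit formula $\mathrm{Sing}(x)$---is precisely what the paper supplies in Lemma~\ref{lm:express}, namely $\tau \neq \circ \,\wedge\, \forall \tau'.\,\big(\tau' \bowtie \ltree \sqsubset \tau \;\leftrightarrow\; \tau' \bowtie \rtree \sqsubset \tau\big)$, which closes the obstacle you correctly flagged as the crux.
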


\def \ltr {\ensuremath{\mathcal{L}}}
\def \rtr {\ensuremath{\mathcal{R}}}

\hide{

Having analyzed the complexity proof of the Boolean
structure~\S\ref{sec:boolean} and multiplicative
structure~\S\ref{sec:bowresult}, we are now interested in finding the
complexity of the combined structure that utilizes both Boolean and
multiplicative operators. Since $\bowtie$ is isomorphic to string
concatenation, any structure includes this operator has its first-order
theory undecidable, unless certain restrictions are enforced. One of the
results in~\cite{le16:complex} stated that, by restricting $\bowtie$ to
always have constants on the right, we recover the first-order decidability
for theory of the combined structure

, yet the exact
complexity remains unknown.
}

To prove Theorem~\ref{thm:lmodel}, we reduce the binary string structure with
prefix relation~\cite{COMPTON19901}, which is known to be nonelementary, into
$\lmodel$. Here we recall the definition and complexity result of binary
strings structure:

\begin{proposition}[\cite{COMPTON19901,meyer:thesis}]\label{prop:nonele}
Let $\tmodel = \struct{\{0,1\}^{*},S_0,S_1, \preceq}$ be the binary string structure in which $\{0,1\}^{*}$ is the set of binary strings, $S_i$ is the successor function s.t. $S_i(s) = s \cdot i$, and $\preceq$ is the binary prefix relation s.t. $x \preceq y$ iff there exists $z$ satisfies $x \cdot z = y$.
Then the first-order theory of $\tmodel$ is non-elementary.
\end{proposition}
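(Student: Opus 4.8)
The plan is to establish the non-elementary lower bound for $\fth{\tmodel}$ by a first-order interpretation of a canonical non-elementary theory into $\tmodel$, following the general lower-bound methodology of Compton and Henson. The most convenient source theory is the weak monadic second-order theory of one successor (WS1S), which is non-elementary by a classical theorem of Meyer. Recall that in WS1S one quantifies both over natural numbers and over \emph{finite} sets of natural numbers, with atomic predicates for successor and membership; because a finite set is itself a finite object, both sorts of WS1S quantifier can be simulated by ordinary first-order quantification over strings in $\tmodel$. I would represent a natural number $n$ by the unary string $0^{n}$, so that the successor $n \mapsto n+1$ is exactly $S_0$ restricted to unary strings and the order $m \le n$ is exactly $0^{m} \preceq 0^{n}$; and I would represent a finite set $X \subseteq \mathbb{N}$ by a binary string $w_X$ playing the role of its characteristic vector, so that ``$n \in X$'' becomes a statement about the bit of $w_X$ at the position whose depth equals $|0^{n}|$.

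First I would set up the basic definable vocabulary of the tree $\struct{\{0,1\}^{*},\preceq}$: the empty string (the $\preceq$-least element), the immediate-predecessor operation (the maximal proper prefix), the two children via $S_0,S_1$, the \emph{meet} $x \sqcap y$ (the longest common prefix, definable as the greatest common $\preceq$-lower bound), and the definable subset $0^{*}$ of unary strings that serves as the copy of $\mathbb{N}$. With these in hand, the number-quantifiers of WS1S relativise to $0^{*}$, the finite-set quantifiers relativise to a definable class of well-formed code strings, and the whole interpretation reduces to producing a single definable predicate $\mathrm{bit}(u,w)$ expressing ``the bit of $w$ at the depth of $u$ is $1$,'' which is exactly what encodes membership.

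The step I expect to be the main obstacle is defining this position/bit-access predicate, because its naive formulation requires comparing the \emph{lengths} of two strings, and ``$|x| = |y|$'' is not visibly first-order definable from $\preceq$, $S_0$ and $S_1$ alone: the prefix order is genuinely tree-like, and first-order logic cannot directly count depth. The way I would attack this is to avoid any global length comparison and instead build the indexing machinery level by level, in the style of Stockmeyer's inductively defined ``yardstick'' relations -- a sequence of formulas $\rho_k$ that recognise and compare positions up to scale $\exp_{k}$, with $|\rho_{k+1}|$ only linearly larger than $|\rho_k|$, using the children functions to step through an encoded index and the meet operation to align the data string with the index string. Since each added level costs only a bounded increase in formula size but an exponential increase in the range of representable positions (hence of encodable sets), a formula of size $O(k)$ forces reasoning about sets of tower-of-height-$k$ size, and translating an arbitrary WS1S sentence through this machinery yields $\tmodel$-formulas whose decision provably demands non-elementary resources. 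The remaining bookkeeping -- faithfulness of the interpretation on the two relativised sorts and control of the size blow-up so that Meyer's lower bound transfers intact -- is routine once the yardstick predicates are in place, so the entire weight of the argument rests on constructing the bit-access machinery inside the prefix structure.
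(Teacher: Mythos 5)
First, a point of comparison: the paper does not actually prove this proposition---it imports it from Compton--Henson \cite{COMPTON19901} and Meyer \cite{meyer:thesis}---so your sketch is best read as a reconstruction of the argument behind those citations. In that respect your instincts are sound: the route through a non-elementary monadic theory, the representation of numbers as $0^n$ and sets as characteristic strings, and above all the recognition that $|x| = |y|$ is the crux are all faithful to the standard method. Indeed your worry is not merely that equal length is ``not visibly'' definable: it is genuinely not first-order definable in $\struct{\{0,1\}^*, S_0, S_1, \preceq}$ (an Ehrenfeucht--Fra\"iss\'e argument on two far-apart branches works, since the subtree above every node is isomorphic to the whole structure), so the Stockmeyer-style inductive yardsticks $\rho_k$ are forced on you, not optional.

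The genuine gap is in your last step, where you claim that ``translating an arbitrary WS1S sentence through this machinery'' yields the lower bound. That translation does not exist as described: the level-$k$ yardsticks give you the predicate $\mathrm{bit}(u,w)$ only for positions below a height-$k$ tower, whereas the set quantifiers of a WS1S sentence range over \emph{all} finite sets, of unbounded size independent of the sentence's length. So relativizing set quantifiers to level-$k$-addressable codes changes the meaning of the sentence, and you have no uniform interpretation of WS1S at all. There are two standard repairs, and you must commit to one. (a) Do what Stockmeyer and Compton--Henson actually do: bypass WS1S and reduce directly from acceptance of Turing machines running in time $\exp_k(n)$, encoding computation histories as strings; then only tower-bounded positions ever need to be addressed, and choosing the yardstick level proportional to the input size makes the reduction size-controlled and correct. (b) Keep WS1S as the source but first prove a bounded-witness lemma---via B\"uchi's automaton translation, truth of a WS1S sentence of size $n$ is preserved when all set quantifiers are relativized to subsets of an initial segment of tower-in-$n$ size---which licenses the relativized translation; but this lemma is itself a substantive step whose tower height must be coordinated with the yardstick levels, and your sketch neither states nor proves it. Without (a) or (b), the final sentence of your argument is a non sequitur, even though everything before it is the right machinery.
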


Before going into the technical detail, we briefly explain the many-one
reduction from $\tmodel$ into $\lmodel$.  The key idea is that the set of
binary strings $\{0,1\}^*$ can be bijectively mapped into the set of
\emph{unary trees} $\utree$, trees that have exactly one black leaf,
\emph{e.g.} $\{\bullet, \Tree [ $\bullet$ $\circ$ ], \Tree [ $\circ$
$\bullet$ ], \Tree[ $\circ$ [ $\bullet$ $\circ$ ] ], \cdots \}$. For
convenience, we use the symbol $\ltr$ to represent the left tree $\ltree$ and
$\rtr$ for the right tree $\rtree$. Then:

\begin{lemma}\label{lm:domain}
Let $\embed$ map $\struct{\{0,1\}^{*},S_0,S_1, \preceq}$ into $ \struct{\treedom,\sqcup,\sqcap,\bar{\cdot},\rbow{\tau}}$ such that:
\begin{enumerate}
\item $\embed(\epsilon) = \bullet$, $\embed(0) = \ltr$, $\embed(1) = \rtr$.
\item $\embed(b_1\ldots b_n) = \embed(b_1)\bowtie \ldots \bowtie \embed(b_n), b_i \in \{0,1\}$.
\item $g(S_0) = \lambda s.\bowtie_\ltr(g(s))$, $g(S_1) = \lambda s.\bowtie_\rtr(g(s))$.
\item $\embed(x \preceq y) = \embed(y) \sqsubseteq \embed(x)$ where $\tau_1 \sqsubseteq \tau_2 \defi \tau_1 \sqcup \tau_2 = \tau_2$.
\end{enumerate}

Then $\embed$ is a bijection from $\{0,1\}^{*}$ to $\utree$, and $x \preceq
y$ iff $\embed(y) \sqsubseteq \embed(x)$.
\end{lemma}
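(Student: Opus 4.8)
The plan is to route the whole argument through an explicit description of $\embed(w)$. For a word $w=b_1\cdots b_n\in\{0,1\}^*$ let $u_w\in\utree$ denote the unary tree whose single black leaf sits at the end of the root-to-leaf path spelled by $w$, so that $u_\epsilon=\bullet$, $u_0=\ltr$ and $u_1=\rtr$. First I would observe that $w\mapsto u_w$ is a bijection onto $\utree$, for the trivial reason that a unary tree is completely determined by, and determines, the address of its unique black leaf. Canonicity needs only a one-line check: along that path every internal node has one $\circ$-child and one child still containing the black leaf, so no node ever acquires two equal leaf children and no folding is triggered.

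The key step is a multiplicativity lemma for the black-leaf address under $\bowtie$: if $u,v\in\utree$ have black-leaf addresses $p$ and $q$, then $u\bowtie v\in\utree$ has address $p\cdot q$. This is immediate from the definition of $\bowtie$, which replaces the unique black leaf of $u$ (at address $p$) by a copy of $v$, so the single black leaf of $v$ (at address $q$ inside the grafted copy) now sits at address $p\cdot q$. Using associativity of $\bowtie$ and an induction on $n$, clause~(2) of the definition then gives $\embed(b_1\cdots b_n)=\embed(b_1)\bowtie\cdots\bowtie\embed(b_n)=u_{b_1\cdots b_n}$. Hence $\embed$ coincides with $w\mapsto u_w$ and is a bijection from $\{0,1\}^*$ onto $\utree$, which is the first claim; as a byproduct this also yields $\embed(S_i(s))=\rbow{\embed(i)}(\embed(s))$, matching the definitions of $g(S_0),g(S_1)$ that the subsequent reduction needs.

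For the prefix claim I would pass to the standard identification of a tree share with a clopen subset of Cantor space: send $\tau$ to the set $B(\tau)\subseteq\{0,1\}^{\omega}$ of infinite branches that reach a $\bullet$-leaf, under which $\sqcup,\sqcap,\bar{\cdot}$ become $\cup,\cap$ and complement, so that $\tau_1\sqsubseteq\tau_2$ becomes $B(\tau_1)\subseteq B(\tau_2)$. A unary tree maps to a basic cone, $B(u_w)=w\cdot\{0,1\}^{\omega}$, and $w\cdot\{0,1\}^{\omega}\subseteq w'\cdot\{0,1\}^{\omega}$ holds exactly when $w'$ is a prefix of $w$. Combining these, $\embed(y)\sqsubseteq\embed(x)$ iff $B(u_y)\subseteq B(u_x)$ iff $x$ is a prefix of $y$ iff $x\preceq y$, which is the second claim.

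The main obstacle I anticipate is getting the order-reversal in the prefix step exactly right: a longer address gives a \emph{smaller} cone, so the prefix order $x\preceq y$ on strings must correspond to the reversed inclusion $\embed(y)\sqsubseteq\embed(x)$ on shares, and it is easy to flip this by accident. I expect the cleanest way to make it airtight is precisely the Cantor-space encoding, which converts the leafwise fold/unfold definition of $\sqcup$ into honest set inclusion and removes any hand-waving about unfolding the $\bullet$-leaf of $u_x$ into an all-black subtree below address $x$. Arguing directly on trees is also possible but would require a separate small lemma stating that $u_x$ is black exactly on the branches extending $x$ --- which is just the cone description in disguise.
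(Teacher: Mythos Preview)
Your argument is correct. The bijection half essentially matches the paper's: both identify $\embed(w)$ with the unary tree whose black leaf sits at address $w$, though you spell out the multiplicativity of addresses under $\bowtie$ more explicitly than the paper's one-line ``routine induction on the string length.''

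For the prefix claim the two routes genuinely diverge. The paper argues algebraically: for unary trees $\tau_1,\tau_2$ one has $\tau_1\sqsubseteq\tau_2$ iff there exists a unary $\tau_3$ with $\tau_2\bowtie\tau_3=\tau_1$; since this mirrors the defining clause $x\preceq y\Leftrightarrow\exists z.\,xz=y$, the equivalence falls out of the already-established fact that $\embed$ turns concatenation into $\bowtie$. Your Cantor-space encoding replaces that $\bowtie$-factorisation observation with the Stone-style picture $\tau\mapsto B(\tau)$ and the cone identity $B(u_w)=w\cdot\{0,1\}^{\omega}$. What your route buys is exactly what you advertise: the order reversal is visibly forced by cone inclusion, and you never have to unfold $\sqcup$ by hand. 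What the paper's route buys is economy and self-containment --- it stays inside the signature of the lemma, needs no external semantic model, and recycles the concatenation-to-$\bowtie$ correspondence already on the table rather than importing a second structural fact.
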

\begin{proof}
The routine proof that $\embed$ is bijective is done by induction on the
string length. Intuitively, the binary string $s$ corresponds to the path
from the tree root in $\embed(s)$ to its single black leaf, where $0$ means
`go left' and $1$ means `go right'. For example, the tree $\embed(110) = \rtr
\bowtie \rtr \bowtie \ltr = \rtree \bowtie \rtree \bowtie \ltree = \Tree [
$\circ$ [ $\circ$ [ $\bullet$ $\circ$ ] ] ]$ corresponds to the path
right$\rightarrow$right$\rightarrow$left.

Now observe that if $\tau_1,\tau_2$ are unary trees then $\tau_1 \sqsubseteq
\tau_2$ (\emph{i.e.} the black-leaf path in $\tau_2$ is a sub-path of the
black-leaf path in $\tau_1$) iff there exists a unary tree $\tau_3$ such that
$\tau_2 \bowtie \tau_3 = \tau_1$ (intuitively, $\tau_3$ represents the
difference path between $\tau_2$ and $\tau_1$). Thus $x \preceq y$ iff there
exists $z$ such that $~xz = y$, iff $\embed(x) \bowtie \embed(z) =
\embed(y)$, which is equivalent to $\embed(y) \sqsubseteq \embed(x)$ by the
above observation. \qed
\end{proof}

In order for the reduction to work, we need to express the type of unary
trees using operators from $\lmodel$. The below lemma shows that the type of
$\utree$ is expressible via a universal formula in $\lmodel$:

\begin{lemma}\label{lm:express}
A tree $\tau$ is unary iff it satisfies the following $\forall$-formula:

$$
\tau \neq \circ ~\wedge~ \big(\forall \tau'.~\tau' \bowtie \ltr \sqsubset \tau ~\leftrightarrow~ \tau' \bowtie \rtr \sqsubset \tau\big).
$$

where $\tau_1 \sqsubset \tau_2 \defi \tau_1 \sqcup \tau_2 = \tau_2 \wedge \tau_1 \neq \tau_2$.
\end{lemma}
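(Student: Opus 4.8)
The plan is to prove both directions by identifying a non-$\circ$ tree with its finite antichain of black-leaf paths and reading $\sqsubseteq$ as containment of the owned (black) region, exactly in the spirit of the proof of Lemma~\ref{lm:domain}. Recall from there that a unary tree is determined by the single path from its root to its black leaf, that post-composing with $\bowtie\ltr$ appends a $0$ to every black-leaf path while $\bowtie\rtr$ appends a $1$, and that for unary trees $\tau_1\sqsubseteq\tau_2$ holds iff the path of $\tau_2$ is a prefix of the path of $\tau_1$. I will treat the first conjunct $\tau\neq\circ$ as disposing of the degenerate case (for $\tau=\circ$ the formula is false and $\circ$ is not unary), and concentrate on the second conjunct.

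For the forward direction I would fix a unary $\tau$ with black-leaf path $p$ and an arbitrary $\tau'$ with black leaves $w_1,\dots,w_k$ (possibly none), and show that \emph{both} $\tau'\bowtie\ltr\sqsubset\tau$ and $\tau'\bowtie\rtr\sqsubset\tau$ are equivalent to the single symmetric condition that $p$ is a prefix of every $w_i$; since that condition mentions neither $\ltr$ nor $\rtr$, the biconditional follows for all $\tau'$. The computation is that $\tau'\bowtie\ltr$ owns exactly the union of the subtrees rooted at $w_10,\dots,w_k0$, so $\tau'\bowtie\ltr\sqsubseteq\tau$ amounts to $p\preceq w_i0$ for every $i$. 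Strictness then removes the boundary case: such a union of $0$-branching subtrees can fill the whole subtree below $p$ only when some $w_i0=p$ (the path $p1^{\omega}$ forces this), and excluding that equality upgrades each $p\preceq w_i0$ to the strict prefix $p\prec w_i0$, which since $|p|\le|w_i|$ is just $p\preceq w_i$. Running the identical argument with $1$ in place of $0$ yields the same condition $p\preceq w_i$, which is the heart of this direction.

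For the converse I argue by contraposition: assume $\tau\neq\circ$ is not unary, so it has at least two black leaves and hence none of them is the root. Writing one black leaf as $u\,b$ with $b\in\{0,1\}$, canonicality forbids a node whose entire subtree on both children is black, so the sibling $u\bar{b}$ is not entirely black; thus of the two children $u0,u1$ exactly one roots a subtree fully contained in $\tau$. Taking $\tau'$ to be the unary tree with black-leaf path $u$, the trees $\tau'\bowtie\ltr$ and $\tau'\bowtie\rtr$ are the unary trees at $u0$ and $u1$, so one of them is a strict sub-share of $\tau$ (strict because $\tau$, being non-unary, is not itself a single subtree cylinder) while the other is not even a sub-share — breaking the biconditional and completing the contrapositive. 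The main obstacle I anticipate is the strictness bookkeeping in the forward direction, namely justifying cleanly that a union of $0$-branching subtrees equals the subtree below $p$ only in the degenerate case $w_i0=p$, together with the appeal to canonical form in the converse guaranteeing that every non-unary tree really does possess such an asymmetric node $u$.
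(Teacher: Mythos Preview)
Your proposal is correct. The backward direction is essentially the paper's argument repackaged: both pick a black leaf $ub$, take $\tau'$ to be the unary tree at its parent $u$, and invoke canonicality of $\tau$ at $u$. The paper phrases this as a contradiction (assume the biconditional, combine the two sides via $\tau'=\tau'\bowtie(\ltr\sqcup\rtr)\sqsubseteq\tau$, and contradict canonicality), whereas you directly exhibit the instance that breaks the biconditional; these are the same idea in different dress.

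The forward direction diverges more. The paper proceeds by induction on the height of $\tau$ together with an algebraic observation about $\bowtie$: from $\tau'\bowtie\ltr\sqsubset\tau$ with $\ltr,\tau$ unary one obtains $\tau'\sqsubseteq\tau$, whence $\tau'\bowtie\rtr\sqsubseteq\tau'\sqsubseteq\tau$, strict since $\rtr\neq\bullet$. You instead unfold everything into the path/antichain picture and compute directly that both $\tau'\bowtie\ltr\sqsubset\tau$ and $\tau'\bowtie\rtr\sqsubset\tau$ reduce to the single symmetric condition ``$p\preceq w_i$ for every black leaf $w_i$ of $\tau'$'', which is exactly $\tau'\sqsubseteq\tau$. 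Your route is more explicit, avoids the induction, and makes the $\ltr/\rtr$ symmetry manifest; the paper's stays inside the algebra of $\bowtie$ and $\sqsubseteq$ without unpacking the path semantics. The strictness bookkeeping you flagged as the main obstacle is fine: equality of the owned regions forces some $w_i0=p$ (witnessed by the path $p1^\omega$), and excluding that single case indeed upgrades every $p\preceq w_i0$ to $p\preceq w_i$.
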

\begin{proof}

The $\Rightarrow$ direction is proved by induction on the height of $\tau$.
The key observation is that if $\tau_1 \bowtie \tau_2 \sqsubset \tau_3$ and
$\tau_2,\tau_3$ are unary then $\tau_1$ is also unary, $\tau_1 \sqsubseteq
\tau_3$ and thus $\tau_1 \bowtie \overbar{\tau_2} \sqsubset \tau_3$. Note
that both $\ltr, \rtr$ are unary and $\overbar{\ltr} = \rtr$, hence the
result follows.

For $\Leftarrow$, assume $\tau$ is not unary. As $\tau \neq \circ$, it
follows that $\tau$ contains at least two black leaves in its representation.
Let $\tau_1$ be the tree that represents the path to one of the black leaves
in $\tau$, we have $\tau_1 \sqsubset \tau$ and for any unary tree $\tau_2$,
if $\tau_1 \sqsubset \tau_2$ then $\tau_2 \not \sqsubseteq \tau$. As $\tau_1$
is unary, we can rewrite $\tau_1$ as either $\tau'_1 \bowtie \ltr$ or
$\tau'_1 \bowtie \rtr$ for some unary tree $\tau'_1$. The latter disjunction
together with the equivalence in the premise give us both $\tau'_1 \bowtie
\ltr \sqsubset \tau$ and $\tau'_1 \bowtie \rtr \sqsubset \tau$. Also, we have
$\tau_1 \sqsubset \tau'_1$ and thus $\tau'_1 \not \sqsubseteq \tau$ by the
aforementioned observation. Hence $\tau'_1 = \tau_1 \bowtie \bullet = \tau'_1
\bowtie (\ltr \sqcup \rtr) \sqsubseteq \tau$ which is a contradiction. \qed
\end{proof}

\noindent \textbf{Proof of Theorem~\ref{thm:lmodel}.} We employ the reduction
technique in~\cite{Gradel1990235} where formulae in $\tmodel$ are interpreted
using the operators from $\lmodel$. The interpretation of constants and
operators is previously mentioned and justified in Lemma~\ref{lm:domain}. We
then replace each sub-formula $\exists x.~\Phi$ with $\exists x.~x \in \utree
\wedge \Phi$ and $\forall x.~\Phi$ with $\forall x.~x \in \utree \rightarrow
\Phi$ using the formula in Lemma~\ref{lm:express}. It follows that the
first-order complexity of $\lmodel$ is bounded below by the first-order
complexity of $\tmodel$. Hence by Prop.~\ref{prop:nonele}, the first-order
complexity of $\lmodel$ is nonelementary. \qed

\section{Causes of, and mitigants to, the nonelementary bound}
\label{sec:bowresult}
Having proven the nonelementary lower bound for the combined theory in \S\ref{sec:combresult},
we discuss causes and mitigants.  In \S\ref{subsec:complexmult} we show that the
nonelementary behavior of $\lmodel$ comes from the combination of both the additive
and multiplicative theories by proving an elementary upper bound on a generalization
of the multiplicative theory, and in \S\ref{subsec:combformpractice} we discuss
why we believe that verification tools in practice will avoid the nonelementary lower bound.

\subsection{Complexity of multiplicative structure $\brmodel \defi \struct{\treedom,\lbow{\tau},\rbow{\tau}}$}
\label{subsec:complexmult}

Since the first-order theory over $ \struct{\treedom, \bowtie} $ is
undecidable, it may seem plausible that the nonelementary behaviour of
$\lmodel$ comes from the $\rbow{\tau}$ subtheory rather than the ``simpler''
Boolean subtheory $\bmodel$, even though the specific proof of the lower bound
given in \S\ref{sec:combresult} used both the additive and multiplicative
theories (\emph{e.g.} in Lemma~\ref{lm:express}). This intuition, however, is mistaken.  In
fact, even if we generalize the theory to allow multiplication by constants
on either side---\emph{i.e.}, by adding $\lbow{\tau\,}(x) \defi \tau \bowtie
x$ to the language---the restricted multiplicative theory $\brmodel \defi
\struct{\treedom,\lbow{\tau},\rbow{\tau}}$ is elementary. Specifically, we
will prove that the first-order theory of $\brmodel$ is
$\mathsf{STA}(\ast,2^{O(n)},n)$-complete and thus elementarily decidable:
\begin{theorem}\label{thm:rmodel}
The first-order theory of $\brmodel$ is $\le_{\text{log-lin}}$-complete for $\mathsf{STA}(\ast,2^{O(n)},n)$.
\end{theorem}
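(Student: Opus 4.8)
The plan is to prove both bounds through an efficient (that is, $\le_{\text{log-lin}}$) isomorphism between $\brmodel$ and a purely word-theoretic structure $\mathcal{W}$ whose operations are prefix- and suffix-successors, and then to transfer the complexity of $\mathcal{W}$ back to $\brmodel$.

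For the encoding I would represent each tree share by the finite set of root-to-leaf paths leading to its black leaves, serialised (in a fixed canonical order, with a separator) into a single word over $\{0,1\}$ plus punctuation; recall from Lemma~\ref{lm:domain} that for a unary tree this set is a singleton and the word is exactly the path to its black leaf. Under this encoding the four generators act as successors: $\rbow{\ltr}$ and $\rbow{\rtr}$ append the bit $0$, resp.\ $1$, to every recorded path and so become suffix-successors, while $\lbow{\ltr}$ and $\lbow{\rtr}$ prepend a bit and become prefix-successors. I would then show that multiplication by an arbitrary constant $\tau$ is captured by these successors: since $\rbow{\tau}(x)=x\bowtie\tau$ replaces each black leaf of $x$ by a copy of $\tau$, its effect on the path-set is the concatenation product with the fixed finite black-leaf set of $\tau$, and dually for $\lbow{\tau}$. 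Packaging this faithfully and with only a linear size increase---in particular, accounting for the fact that a constant with several black leaves multiplies the number of recorded paths---is the step I expect to be the main obstacle, since naive serialisation turns such a duplication into a non-synchronous (hence non-automatic) relation; the intended remedy is to keep the reduction at the level of formulae, introducing auxiliary quantified words and equalities so that each general multiplication unfolds into a bounded composition of generator-successors.

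For the upper bound I would argue that $\mathcal{W}$ admits an automatic presentation in which the graphs of the successors and of equality are recognised by automata of size linear in the input formula. Its first-order theory is then decided by an alternating Turing machine that, in the style used for atomless Boolean algebras, processes the quantifier prefix block-by-block---branching existentially to guess a witness word and universally to range over all witness words---while maintaining exactly one alternation per quantifier alternation of the input and verifying the quantifier-free matrix by simulating the linear-size automata. Because in an automatic structure with linear-size atomic automata a satisfied quantified formula always has a witness of length $2^{O(n)}$, the machine runs in time $2^{O(n)}$ with $n$ alternations, placing the theory in $\mathsf{STA}(\ast,2^{O(n)},n)$; composing with the isomorphism of the previous paragraph gives the upper bound.

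For the lower bound I would give a $\le_{\text{log-lin}}$ reduction from a problem already known to be $\mathsf{STA}(\ast,2^{O(n)},n)$-hard over words with successors, relativising every quantifier to the word-image of the domain inside $\brmodel$. Concretely, the map $g$ of Lemma~\ref{lm:domain} identifies the successor word structure with the unary trees $\utree$ under the generator operations, so it suffices to relativise quantifiers to a formula defining $\utree$ using only $\rbow{\tau}$, $\lbow{\tau}$ and equality (the Boolean-free analogue of Lemma~\ref{lm:express}); producing such a definition without the Boolean operators, and keeping the whole translation linear in size and log-space computable, is the delicate point on this side. Matching the upper and lower bounds then yields $\le_{\text{log-lin}}$-completeness for $\mathsf{STA}(\ast,2^{O(n)},n)$, as claimed in Theorem~\ref{thm:rmodel}.
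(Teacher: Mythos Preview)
Your plan contains a genuine gap on the encoding side that affects both directions.

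The path-set encoding cannot support the operations of $\brmodel$. The signature contains $\rbow{\tau}$ and $\lbow{\tau}$ for \emph{every} constant $\tau$, not only for $\ltr$ and $\rtr$. Your four ``generators'' $\rbow{\ltr},\rbow{\rtr},\lbow{\ltr},\lbow{\rtr}$ each preserve the number of black leaves, so under $\bowtie$ they generate exactly the unary trees $\utree$; but there are prime trees with several black leaves (for instance $\Tree[ $\bullet$ [ $\bullet$ $\circ$ ] ]$), and multiplication by such a $\tau$ multiplies the number of black-leaf paths. Hence $y=\rbow{\tau}(x)$ for non-unary $\tau$ is \emph{not} expressible as a bounded composition of your generator-successors and equalities, even with auxiliary variables: no formula built only from those four unary functions can relate $x$ to a $y$ with strictly more black leaves. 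Your acknowledged ``main obstacle'' is therefore not a packaging issue but a structural one, and the proposed remedy does not work.

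The idea the paper uses instead---and which you are missing---is the \emph{unique prime factorisation} of tree shares (Prop.~\ref{prop:uniquetree}): $(\treedom^{+},\bowtie)$ is a free monoid over the countable alphabet $\mathbb{P}$ of prime trees. One first strips off $\bullet$ and $\circ$ (reducing $\brmodel$ to $\brmodel^{+}$ by a guessing argument, since $\bullet$ is the unit and $\circ$ is absorbing), and then encodes each tree as the word of its prime factors. Under this encoding, $\rbow{\tau}$ and $\lbow{\tau}$ become suffix- and prefix-successors \emph{by a fixed word}, regardless of how many black leaves $\tau$ has; the paper then invokes the known $\mathsf{STA}(\ast,2^{O(n)},n)$-completeness of the string structure with prefix/suffix successors (Lemma~\ref{lm:queue}). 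The only care needed is to choose the letter-encoding of the finitely many primes appearing in the input formula on the fly so that string length stays linear in tree size.

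This also dissolves your lower-bound worry: there is no need to define $\utree$ inside $\brmodel$ without Boolean operators. Hardness is inherited directly through the prime-factorisation isomorphism from the string structure, by mapping each letter to a distinct prime tree; no relativisation to unary trees is required.
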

Therefore, the nonelementary behavior of $\lmodel$ arises precisely because of
the combination of both the additive and multiplicative subtheories.


We prove Theorem~\ref{thm:rmodel} by solving a similar problem in which two
tree shares $\{\bullet,\circ\}$ are excluded from the tree domain $\treedom$.
That is, let $\treedom^+ = \treedom \backslash \{\bullet,\circ\}$ and
$\brmodel^+ = \struct{\treedom^+,\lbow{\tau},\rbow{\tau}}$, we want:

\begin{lemma}\label{lm:rmodel}
The complexity of $\fth{\brmodel^+}$ is $\le_{\text{log-lin}}$-complete for $\mathsf{STA}(\ast,2^{O(n)},n)$.
\end{lemma}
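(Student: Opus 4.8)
The plan is to show that $\brmodel^+$ is, up to an efficiently computable isomorphism, a string structure equipped with prefix and suffix successor functions, and then to transfer that structure's complexity. The algebraic backbone comes from \S\ref{sub:overview} and \S\ref{sub:treepre}: $\bowtie$ is associative with unit $\bullet$, is injective over non-$\circ$ arguments and cancellative, and $\struct{\treedom \setminus \{\circ\},\bowtie,\bullet}$ is a \emph{free monoid} (this is what underlies the isomorphism with word equations), whose free generators are the multiplicatively \emph{prime} trees --- those $\tau \neq \bullet$ with no factorization $\tau = \tau_1 \bowtie \tau_2$, $\tau_1,\tau_2 \neq \bullet$. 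Deleting the unit $\bullet$ and the absorbing zero $\circ$ (exactly the reason we pass to $\treedom^+$) leaves the free semigroup $\Sigma^{+}$ on the set $\Sigma$ of primes. I would fix the bijection sending a tree to the word of its prime factors, and observe that under it $\rbow{\tau}(x) = x \bowtie \tau$ becomes ``append the fixed word $w_\tau$'' (the factorization of $\tau$) and $\lbow{\tau}(x) = \tau \bowtie x$ becomes ``prepend $w_\tau$''. Since prepending/appending a fixed word is a composition of single-letter prefix/suffix successors, this identifies $\brmodel^+$ with the string structure $\mathcal{W}$ over $\Sigma$ whose operations are the prefix successors $w \mapsto a\,w$ and suffix successors $w \mapsto w\,a$ for $a \in \Sigma$.

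Next I would make this identification effective in both directions with only linear blow-up, so that it is a $\le_{\text{log-lin}}$ bi-reduction between $\fth{\brmodel^+}$ and $\fth{\mathcal{W}}$. In one direction, a constant $\tau$ occurring in a $\brmodel^+$-sentence has a prime factorization of length $O(|\tau|)$, so each occurrence of $\lbow{\tau}$ or $\rbow{\tau}$ expands into linearly many single-letter successors; in the other direction each single-letter successor is literally $\lbow{a}$ or $\rbow{a}$ for a prime constant $a$. Both signatures are infinite (one operator per constant, respectively per letter) but every sentence names only finitely many operators, so the translation is a syntactic renaming composed with the bounded expansion above, hence log-linear and log-space computable.

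The complexity of $\fth{\mathcal{W}}$ then supplies both bounds. For membership, the $\STAcomp$ upper bound for the first-order theory of strings with prefix and suffix successors --- note that $\mathcal{W}$ carries the successor \emph{functions} but \emph{not} the prefix \emph{relation} $\preceq$ of Proposition~\ref{prop:nonele}, which is precisely what keeps it elementary rather than nonelementary --- places $\fth{\brmodel^+}$ in $\STAcomp$ through the reduction above. For hardness, the same theory is $\le_{\text{log-lin}}$-hard for $\STAcomp$, and the inverse reduction transports this hardness to $\fth{\brmodel^+}$; completeness follows.

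I expect the main obstacle to be the complexity analysis of $\mathcal{W}$ itself: establishing the $\STAcomp$ upper bound for strings with prefix and suffix successors, presumably via a quantifier-elimination or Ehrenfeucht--Fra\"{\i}ss\'e argument that bounds the number of game rounds and thereby yields an alternating exponential-time, linearly-alternating decision procedure, together with a matching lower bound that must be pinned down without accidentally invoking the prefix relation (which would push the complexity up to nonelementary, as in \S\ref{sec:combresult}). A secondary technical point is verifying that the prime factorization underlying the isomorphism is genuinely log-space computable and of linear length, and confirming that it is exactly the exclusion of $\bullet$ and $\circ$ --- rather than their retention as unit and zero --- that makes the free-semigroup picture, and hence the clean correspondence with $\Sigma^{+}$, go through.
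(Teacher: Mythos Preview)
Your high-level plan---identify $\brmodel^+$ with a string-with-successors structure via prime factorization and transfer the $\STAcomp$ bound---matches the paper's. But there is a real gap where you write ``Both signatures are infinite \ldots\ but every sentence names only finitely many operators, so the translation is a syntactic renaming.'' The alphabet $\Sigma$ of prime trees is \emph{countably infinite}, and the result you want to invoke for $\fth{\mathcal W}$ (the Rybina--Voronkov bound the paper cites as Proposition~5.3) is stated for strings over a \emph{finite} alphabet. Naming only finitely many successor operators in a sentence does not change the fact that the quantifiers in $\mathcal W$ range over $\Sigma^{+}$, i.e.\ strings that may contain arbitrarily many unmentioned primes; you would still need an argument (say, collapsing all unmentioned letters to a single fresh symbol and showing truth is preserved) before any finite-alphabet result applies, and you have not supplied one.

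The paper confronts exactly this point: ``Unfortunately, the set of prime trees $\mathbb P$ are infinite while Lemma~\ref{lm:queue} requires a finite alphabet.'' Its fix is not to work over $\Sigma$ at all, but to encode primes as binary strings via a bijection $I:\mathbb P\to\{0,1\}^*$ and separate consecutive factors with the delimiter $2$, obtaining an isomorphism $\hat I:\treedom^+\to\{0,1,2\}^*$ onto the \emph{fixed} ternary-alphabet successor structure $\mathcal S'=\struct{\{0,1,2\}^*,P_{s2},S_{2s}}$ (Lemmas~\ref{lm:bi} and~\ref{lm:iso}). The remaining work is making $\hat I$ size-linear in both directions: for the upper bound the paper chooses $I$ \emph{after} seeing the input formula, assigning the $i$th prime occurring (in size order) the $i$th shortlex binary string; for the lower bound it sends a binary string $s$ to an explicit skewed prime of size $O(|s|)$. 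Your secondary worry about log-space factorization is handled by the paper's citation; the issue you did not anticipate---and the one that actually drives the proof---is getting down to a finite alphabet without blowing up sizes.
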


By using Lemma~\ref{lm:rmodel}, the proof for the main theorem is straightforward:

\noindent \textbf{Proof of Theorem~\ref{thm:rmodel}}. The hardness proof is
direct from the fact that membership constraint in $\brmodel^{+}$ can be
expressed using membership constraint in $\brmodel$:

$$
\tau \in \brmodel^{+} ~~\text{iff}~~ \tau \in \brmodel \wedge \tau \neq \circ \wedge \tau \neq \bullet .
$$

As a result, any sentence from $\brmodel^{+}$ can be transformed into equivalent sentence in $\brmodel$ by rewriting each $\forall v. \Phi$ with $\forall v.(v \neq \circ \wedge v \neq \bullet) \rightarrow \Phi$ and each $\exists v. \Phi$ with $\exists v. v \neq \circ \wedge v \neq \bullet \wedge \Phi$.

To prove the upper bound, we use the guessing technique as in~\cite{le16:complex}. In detail, we partition the domain $\treedom$ into three disjoint sets:
 $$
 S_1 = \{\circ\} \quad\quad\quad S_2 = \{\bullet\} \quad\quad\quad S_3 = \treedom^+ .
 $$

 Suppose the input formula contains $n$ variables, we then use a ternary vector of length $n$ to guess the partition domain of these variables, \emph{e.g.}, if a variable $v$ is guessed with the value $i\in \{1,2,3 \}$ then $v$ is assigned to the domain $S_i$. In particular, if $v$ is assigned to $S_1$ or $S_2$, we substitute $v$ for $\circ$ or $\bullet$ respectively. Next, each bowtie term $\rbow{\tau}(a)$ or $\lbow{\tau\,}(a)$ that contains tree share constants $\bullet$ or $\circ$ is simplified using the following identities:
$$
\tau \bowtie \bullet ~=~ \bullet \bowtie \tau ~=~ \tau \quad\quad\quad
\tau \bowtie \circ ~=~ \circ \bowtie \tau ~=~ \circ.
$$

After this step, all the atomic sub-formulae that contain $\circ$ or $\bullet$ are reduced into either variable equalities $v_1 = v_2, v = \tau$ or trivial constant equalities such as $ \bullet = \bullet, \Tree [ $\bullet$ $\circ$ ] = \circ$ that can be replaced by either $\top$ or $\bot$. As a result, the new equivalent formula is free of tree share constants $\{\bullet,\circ\}$ whilst all variables are quantified over the domain $\treedom^+$. Such formula can be solved using the Turing machine that decides $\fth{\brmodel^+}$. The whole guessing process can be integrated into the alternating Turing machine without increasing the formula size or number of quantifiers (\emph{i.e.} the alternating Turing machine only needs to make two extra guesses $\bullet$ and $\circ$ for each variable and the simplification only takes linear time). Hence this justifies the upper bound. \qed

The rest of this section is dedicated to the proof of Lemma~\ref{lm:rmodel}. To prove the complexity $\fth{\brmodel^{+}}$, we construct an efficient isomorphism from $\brmodel^+$ to the structure of ternary strings in $\{0,1,2\}^\ast$ with prefix and suffix successors. The existence of such isomorphism will ensure the complexity matching between the tree structure and the string structure. Here we recall a result from~\cite{Rybina2003} about the first-order complexity of the string structure with successors:
\begin{proposition}[\cite{Rybina2003}]
Let $\mathcal{S} = \struct{\{0,1\}^\ast,P_0,P_1,S_0,S_1}$ be the structure of binary strings with prefix successors $P_0,P_1$ and suffix successors $S_0,S_1$ such that:
$$
P_0(s) = 0 \cdot s \quad\quad P_1(s) = 1\cdot s \quad\quad S_0(s) = s\cdot 0 \quad\quad S_1(s) = s\cdot 1.
$$

Then the first-order theory of $\mathcal{S}$ is $\le_{\text{log-lin}}$-complete for $\mathsf{STA}(\ast,2^{O(n)},n)$.
\end{proposition}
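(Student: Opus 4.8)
The plan is to prove matching bounds: that $\fth{\rmodel}$ lies in \STAcomp\ and is $\le_{\text{log-lin}}$-hard for it. The conceptual key is to isolate \emph{why} $\rmodel$ is elementary while the closely related $\tmodel$ of Prop.~\ref{prop:nonele} is not: $\rmodel$ carries only the four successor functions, but \emph{not} the prefix relation $\preceq$. From the successors alone, the only terms one can build over a variable $x$ have the shape $u\cdot x\cdot w$ for fixed bit-strings $u,w$, and there is no way to quantify over an arbitrary split point of a string (indeed $\preceq$, and even length order on $0^{\ast}$, fail to be first-order definable). Consequently a formula can inspect bits only near the two \emph{ends} of a string and cannot binary-search into its middle the way $\preceq$ would allow. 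This locality is precisely what collapses the nonelementary tower down to a single exponential.

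For the upper bound I would run a Ferrante--Rackoff / Ehrenfeucht--Fra\"iss\'e composition argument. I attach to every tuple $\bar a=(a_1,\dots,a_k)$ a \emph{$q$-type} recording only bounded data: the length-$T_q$ prefix and suffix of each $a_i$, the length $|a_i|$ compared to the threshold and pairwise to the $|a_j|$ up to $T_q$, and the pairwise longest-common-prefix and longest-common-suffix lengths truncated at $T_q$, where $T_q=2^{O(q)}$. The EF game then shows that $\mathcal{S}\models\phi(\bar a)$ for any rank-$q$ formula depends only on this type: the duplicator matches ends exactly within the threshold and exploits that the homogeneous middle of any sufficiently long string is invisible. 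Deciding a sentence becomes an alternating procedure that, at each quantifier, guesses or universally ranges over a witness whose relevant end-data has single-exponential size and checks consistency of the induced type; the $n$ quantifier alternations of the input become $n$ machine alternations running in time $2^{O(n)}$, landing in \STAcomp. Via Lemma~\ref{lm:rmodel} this is exactly the bound that transfers back to $\brmodel^{+}$ and hence $\brmodel$.

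For the $\le_{\text{log-lin}}$ lower bound I would reduce from a canonical \STAcomp-complete theory---e.g.\ atomless Boolean algebras (Prop.~\ref{prop:caba}) or real addition---by an efficient interpretation into $\rmodel$. The gadget must be a family of formulas of size $O(k)$ that define, inside $\rmodel$, an addressable indexed structure of magnitude $2^{\Omega(k)}$ together with enough relational apparatus to carry the source theory, using strings marked with a single $1$ as scratch space to compensate for the missing prefix relation. Keeping the blow-up strictly \emph{linear} is essential, since \STAcomp\ is not robust under polynomial size increase, so only a linear reduction yields completeness rather than hardness for a strictly larger class.

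The main obstacle I anticipate is on the lower-bound side: designing this linear-size encoding of exponentially large objects \emph{without} a native $\preceq$. One must extract, from the interplay of prefix and suffix successors alone, just enough ordering/indexing power to match \STAcomp, while provably never recovering full $\preceq$ (which would overshoot to the nonelementary regime of $\tmodel$ and of $\lmodel$ in Theorem~\ref{thm:lmodel}). Dually, the subtle point in the upper bound is proving that truncating all length and common-affix measurements at $T_q=2^{O(q)}$ loses no rank-$q$ information; obtaining the recurrence $T_q=O(T_{q-1})$ rather than $T_q=2^{O(T_{q-1})}$ is exactly what separates the elementary bound here from the nonelementary behaviour next door.
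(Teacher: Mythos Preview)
The paper does not prove this proposition at all: it is quoted verbatim from \cite{Rybina2003} and used as an external black box. The paper's own work begins only at Lemma~\ref{lm:queue}, where the cited result is extended from binary to $k$-ary alphabets and to arbitrary string successors $P_s,S_s$, and then at Lemma~\ref{lm:rmodel}, where an efficient isomorphism carries the bound over to $\brmodel^{+}$. So there is no in-paper proof to compare your proposal against; you have set yourself a strictly harder task than the paper does.

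That said, a few remarks on what you wrote. Your upper-bound plan---bounded $q$-types recording end-segments and truncated common-affix data, combined with an EF argument---is the right genre and is essentially how results of this kind (including, presumably, the one in \cite{Rybina2003}) are obtained; the intuition that the absence of $\preceq$ confines definability to bounded neighbourhoods of the two ends is exactly the point. Your aside that the bound ``transfers back to $\brmodel^{+}$ via Lemma~\ref{lm:rmodel}'' has the dependency inverted: Lemma~\ref{lm:rmodel} \emph{consumes} this proposition (through Lemma~\ref{lm:queue}), it does not feed it. On the lower bound you are candid that you do not yet have a construction, and indeed what you sketch---a linear-size interpretation of an $\mathsf{STA}(\ast,2^{O(n)},n)$-complete theory using only the four successors---is the genuinely delicate part of the result in \cite{Rybina2003}; ``an addressable indexed structure of magnitude $2^{\Omega(k)}$'' is a specification, not a gadget, and without it you have at most a plausibility argument rather than a proof.
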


The above result cannot be used immediately to prove our main theorem. Instead, we use it to infer a more general result where successors are not only restricted to $0$ and $1$, but also allowed to be any string $s$ in a finite alphabet:

\begin{lemma}\label{lm:queue}
Let $\Sigma$ be a finite alphabet of size $k \geq 2$ and $\mathcal{S}' = \struct{\Sigma^\ast,P_s,S_s}$ the structure of $k$-ary strings with infinitely many prefix successors $P_s$ and suffix successors $S_s$ where $s \in \Sigma^\ast$ such that:
$$
P_s(s') = s \cdot s' \quad\quad\quad\quad S_s(s') = s' \cdot s.
$$

Then the first-order theory of $\mathcal{S}'$ is $\le_{\text{log-lin}}$-complete for $\mathsf{STA}(\ast,2^{O(n)},n)$.
\end{lemma}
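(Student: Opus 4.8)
The plan is to show that the theory of $\mathcal{S}'$ and the theory of the binary single-letter structure $\mathcal{S}$ of the preceding proposition are $\le_{\text{log-lin}}$-interreducible, so that completeness for $\mathsf{STA}(\ast,2^{O(n)},n)$ transfers in both directions. The argument factors through the intermediate structure $\mathcal{S}_\Sigma = \struct{\Sigma^\ast,(P_a)_{a\in\Sigma},(S_a)_{a\in\Sigma}}$ that keeps only the single-letter prefix and suffix successors.

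First I would eliminate the infinitely many arbitrary-string successors. For a string $s = a_1 \cdots a_m$ over $\Sigma$, the operator $P_s$ is exactly the composition $P_{a_1} \circ \cdots \circ P_{a_m}$ and $S_s$ is $S_{a_m} \circ \cdots \circ S_{a_1}$, since prepending (resp.\ appending) $s$ is the same as prepending (resp.\ appending) its letters one at a time. Hence every term $P_s(t)$ or $S_s(t)$ occurring in a sentence of $\mathcal{S}'$ can be rewritten in place as a nested application of single-letter successors. This rewriting introduces no new quantifiers, so it preserves the number of quantifier alternations, and it enlarges the sentence only by the total length of the successor indices $s$ that appear, which is bounded by the input size; thus it is a $\le_{\text{log-lin}}$ reduction from $\fth{\mathcal{S}'}$ to $\fth{\mathcal{S}_\Sigma}$. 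Conversely $\mathcal{S}_\Sigma$ is a reduct of $\mathcal{S}'$ (each single letter is in particular a string), so the identity map on sentences reduces $\fth{\mathcal{S}_\Sigma}$ to $\fth{\mathcal{S}'}$. It therefore suffices to prove that $\fth{\mathcal{S}_\Sigma}$ is $\le_{\text{log-lin}}$-complete for $\mathsf{STA}(\ast,2^{O(n)},n)$.

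For this last step I would observe that the cited proposition is the special case $\Sigma = \{0,1\}$ of exactly this claim, and that its completeness proof depends only on $\Sigma$ being a fixed finite alphabet rather than on $|\Sigma| = 2$: both the alternating-time upper bound and the hardness reduction go through verbatim with the constant $|\Sigma| = k$ absorbed into the $O(\cdot)$. The $k = 2$ instance anchors both membership and hardness, and the general finite-alphabet version then yields the two bounds for $\mathcal{S}_\Sigma$, hence for $\mathcal{S}'$.

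The step I expect to be the main obstacle is precisely the passage from a binary to a $k$-letter alphabet. The tempting shortcut---encode each of the $k$ letters by a fixed-width binary block and relativise quantifiers to the set of block-aligned binary strings---does \emph{not} work, because that sub-domain requires a condition such as ``length divisible by the block width,'' and modular length conditions are not first-order definable from successor operations alone (first-order logic over the infinite binary tree cannot express parity of depth). For this reason the argument should re-run the construction of the proposition for a general fixed alphabet rather than attempt an alphabet reduction by coding; the only genuinely new ingredient beyond the proposition is the successor-composition reduction above, while the alphabet generalisation is a robustness observation about the cited proof.
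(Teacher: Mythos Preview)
Your proposal is correct and follows essentially the same route as the paper: decompose each arbitrary-string successor $P_s$, $S_s$ as a quantifier-free composition of single-letter successors (a linear-size, alternation-preserving reduction to $\mathcal{S}_\Sigma$), and then appeal to the fact that the cited proof of the proposition goes through unchanged for any fixed finite alphabet of size $k\geq 2$. Your write-up is in fact more careful than the paper's, which compresses both steps into three sentences; your additional remark on why a block-coding alphabet reduction fails is a nice clarification but not needed for the argument.
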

\begin{proof}
Although the proof in~\cite{Rybina2003} only considers binary alphabet, the same result still holds even for finite alphabet $\Sigma$ of size $k \geq 2$ with $k$ prefix and suffix successors. Let $s = a_1\ldots a_n$ where $a_i \in \Sigma$, the successors $P_s$ and $S_s$ can be defined in linear size from successors in $\mathcal{S}$ as follows:
$$
P_s ~\defi~ \lambda s'.~ P_{a_1}(\ldots P_{a_n}(s')) \quad\quad\quad S_s ~\defi~ \lambda s'.~ S_{a_n}(\ldots S_{a_1}(s')).
$$
These definitions are quantifier-free and thus the result follows. \qed
\end{proof}

Next, we recall some key results from~\cite{le16:complex} that establishes the fundamental connection between trees and strings in word equation:

\begin{proposition}[\cite{le16:complex}]\label{prop:uniquetree} We call a tree $\tau$ in $\treedom^+$ prime if $\tau = \tau_1 \bowtie \tau_2$ implies either $\tau_1 = \bullet$ or $\tau_2 = \bullet$. Then for each tree $\tau$ in $\treedom^+$, there exists a unique sequence of prime trees $\{\tau_i\}_{i=1}^n$ such that $\tau = \tau_1 \bowtie \cdots \bowtie \tau_n$. As a result, each tree in $\treedom^+$ can be treated as a string in a word equation in which the alphabet is $\mathbb{P}$, the countably infinite set of prime trees, and $\bowtie$ is the string concatenation.
\end{proposition}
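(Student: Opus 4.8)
The plan is to prove Proposition~\ref{prop:uniquetree} by showing that $\struct{\treedom^{+}\cup\{\bullet\},\bowtie,\bullet}$ is a \emph{free monoid} on the set $\mathbb{P}$ of primes, since unique prime factorization is exactly freeness and the word-equation corollary is then immediate (identify $\treedom^{+}$ with the nonempty words over $\mathbb{P}$ and $\bowtie$ with concatenation). I would rely on the algebra of $\bowtie$ recalled in \S\ref{sub:overview} and \cite{dockins09:sa}: $\bowtie$ is associative, has unit $\bullet$, and is two-sided cancellative on non-$\circ$ arguments. By Levi's theorem it then suffices to supply (i) an additive length function and (ii) the equidivisibility (Levi) property; existence and uniqueness of factorizations both fall out of these.

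For the length function I would use the tree height. The key observation is that in a \emph{canonical} tree the maximal depth is always attained by a black leaf (a pair of white siblings at maximal depth would fold away), so the height equals the depth of the deepest black leaf. Since $\tau_1\bowtie\tau_2$ replaces every black leaf of $\tau_1$ by a copy of $\tau_2$, the deepest black leaf of the product sits at (deepest black depth of $\tau_1$) $+$ (deepest black depth of $\tau_2$), giving $\mathrm{ht}(\tau_1\bowtie\tau_2)=\mathrm{ht}(\tau_1)+\mathrm{ht}(\tau_2)$ with $\mathrm{ht}(\tau)=0$ iff $\tau=\bullet$. This additive, conical length function makes the factorization process well-founded: a tree that is neither $\bullet$ nor prime splits as $\tau_1\bowtie\tau_2$ with $\tau_1,\tau_2\neq\bullet$ (and both in $\treedom^{+}$, as $\circ$ is absorbing), hence strictly shorter, so existence of a prime factorization follows by induction on height.

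For uniqueness I would establish equidivisibility: if $a\bowtie b=c\bowtie d$ then one of $a,c$ left-$\bowtie$-divides the other. Granting this, uniqueness is routine --- comparing two factorizations $p_1\bowtie\cdots\bowtie p_n=q_1\bowtie\cdots\bowtie q_m$, equidivisibility makes $p_1,q_1$ comparable, primality forces $p_1=q_1$, left-cancellation strips them, and induction on height finishes. For the countable infinitude of $\mathbb{P}$ I would exhibit the family $\tree{\bullet}{X}$ with $X\neq\bullet$: in any factorization $\tree{\bullet}{X}=a\bowtie b$ the depth-one black leaf can only survive substitution if $b=\bullet$, so every such tree is prime, and there are infinitely many of them; being a subset of the countable set $\treedom$, $\mathbb{P}$ is countably infinite.

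The hard part will be equidivisibility, precisely because $\bowtie$ is tree substitution rather than string concatenation. When $a\bowtie b=c\bowtie d$, the two factorizations cut the common tree along two different black-leaf \emph{frontiers}, and the difficulty is to prove that the shallower frontier lies entirely above the deeper one so that the shallower factor is a genuine left divisor. I would attack this by induction on the common tree, tracking how the hung copies of $b$ and of $d$ must coincide wherever the frontiers overlap, while taking care that canonical folding does not silently merge a substituted subtree with surrounding white structure; getting this alignment argument right (rather than the bookkeeping of the length function or the final cancellation induction) is the crux of the proof.
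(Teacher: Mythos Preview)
The paper does not give its own proof of this proposition; it is quoted verbatim from \cite{le16:complex} and used as a black box, so there is no in-paper argument to compare against. Your plan via Levi's theorem (additive length $+$ equidivisibility $\Rightarrow$ free monoid) is a clean and correct route to the result, and your identification of equidivisibility as the crux is accurate.

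Two remarks on the details. First, your height-additivity argument is right but silently uses that the naive substitution $\tau_1\bowtie\tau_2$ of two canonical trees with $\tau_1,\tau_2\neq\circ$ is already canonical, so no post-hoc folding can shrink the height; this deserves one line (any foldable node in the product would force a foldable node already present in $\tau_1$ or in $\tau_2$). Second, for the hard step there is a shortcut that avoids tracking two frontiers at once: fix a black leaf $\ell$ of maximal depth in $\tau$ with path $p$; in any factorization $\tau=a\bowtie b$ the unique black-leaf position of $a$ above $\ell$ must lie at depth exactly $\mathrm{ht}(a)$ on $p$ (otherwise $\mathrm{ht}(\tau)<\mathrm{ht}(a)+\mathrm{ht}(b)$), hence $b=\tau|_{p[1..\mathrm{ht}(a)]}$ is determined by $\mathrm{ht}(a)$ alone. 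This immediately gives ``equal-height left factors coincide'', and full equidivisibility then reduces to showing that the deeper right factor $b_2$ right-divides the shallower one $b_1$, which follows by rerunning the same deepest-leaf argument inside the copy of $b_1$. That is essentially your frontier-alignment idea, but anchoring to a single deepest leaf sidesteps the worry you raise about canonical folding interfering with the comparison.
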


For example, the factorization of $\Tree [ [ [ $\circ$ $\bullet$ ] $\circ$ ] [ $\circ$ [ [ $\circ$ $\bullet$ ] $\circ$ ] ] ]$ is $\Tree [ $\bullet$  [ $\circ$ $\bullet$ ] ] \bowtie \Tree [ $\bullet$ $\circ$ ] \bowtie \Tree [ $\circ$ $\bullet$ ]$, which is unique. Prop.~\ref{prop:uniquetree} asserts that by factorizing tree shares into prime trees, we can effectively transform multiplicative tree share constraints into equivalent word equations. Ideally, if we can represent each prime tree as a unique letter in the alphabet then Lemma~\ref{lm:rmodel} would follow from Lemma~\ref{lm:queue}. Unfortunately, the set of prime trees $\mathbb{P}$ are infinite~{\cite{le16:complex} while Lemma~\ref{lm:queue} requires a finite alphabet. As a result, our tree encoding needs to be more sophisticated than the na\"{i}ve way. The key observation here is that, as $\mathbb{P}$ is countably infinite, there must be a bijective \emph{encoding function} $I:\mathbb{P} \mapsto \{0,1\}^\ast$ that encodes each prime tree into binary string, including the empty string $\epsilon$. We need not to know the construction of $I$ in advance, but it is important to keep in mind that $I$ exists and the delay of its construction is intentional. We then extend $I$ into $\hat{I}$ that maps tree shares in $\treedom^+$ into ternary string in $\{0,1,2\}^*$ where the letter $2$ purposely represents the delimiter between two consecutive prime trees:

\begin{lemma}\label{lm:bi}
Let $\hat{I}: \treedom^+ \mapsto \{0,1,2\}^*$ be the mapping  from tree shares into ternary strings such that for prime trees $\tau_i \in \mathbb{P}$ where $i \in \{1,\ldots,n\}$, we have:

$$
\hat{I}(\tau_1 \bowtie \ldots \bowtie \tau_n) =  I(\tau_1)\cdot 2\ldots 2 \cdot I(\tau_n).
$$

By Prop.~\ref{prop:uniquetree}, $\hat{I}$ is bijective. Furthermore, let $\tau_1,\tau_2 \in \treedom^+$ then:
$$
\hat{I}(\tau_1 \bowtie \tau_2) = \hat{I}(\tau_1) \cdot 2 \cdot \hat{I}(\tau_2) .
$$
\end{lemma}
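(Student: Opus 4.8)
The plan is to verify the two claims of the lemma—that $\hat{I}$ is a bijection and that it sends $\bowtie$ to concatenation with a $2$-delimiter—by leaning on the unique prime factorization of Prop.~\ref{prop:uniquetree} together with the bijectivity of the auxiliary encoding $I \colon \mathbb{P} \to \{0,1\}^*$. The homomorphism identity is the easier half, so I would dispatch it first. Given $\tau_1,\tau_2 \in \treedom^+$ with prime factorizations $\tau_1 = p_1 \bowtie \cdots \bowtie p_m$ and $\tau_2 = q_1 \bowtie \cdots \bowtie q_k$, associativity of $\bowtie$ makes $p_1 \bowtie \cdots \bowtie p_m \bowtie q_1 \bowtie \cdots \bowtie q_k$ a factorization of $\tau_1 \bowtie \tau_2$ into primes, and by the uniqueness clause of Prop.~\ref{prop:uniquetree} it \emph{is} the factorization. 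Reading off the definition of $\hat{I}$ on both sides and comparing the $2$-separated blocks then yields $\hat{I}(\tau_1 \bowtie \tau_2) = \hat{I}(\tau_1)\cdot 2 \cdot \hat{I}(\tau_2)$ directly.

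For bijectivity I would isolate the single structural observation that does all the work: since $I$ lands in $\{0,1\}^*$, the letter $2$ never occurs inside an encoded prime, so it acts as an unambiguous delimiter. Concretely, any ternary string $s \in \{0,1,2\}^*$ admits a unique decomposition $s = w_1 \cdot 2 \cdot w_2 \cdot 2 \cdots 2 \cdot w_n$ with each $w_i \in \{0,1\}^*$, obtained by splitting $s$ at its occurrences of $2$ (here $n-1$ is the number of $2$'s). Surjectivity then follows by setting $\tau_i = I^{-1}(w_i) \in \mathbb{P}$ and $\tau = \tau_1 \bowtie \cdots \bowtie \tau_n \in \treedom^+$, so that $\hat{I}(\tau) = s$. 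Injectivity follows symmetrically: if $\hat{I}(\sigma) = \hat{I}(\tau)$, the two prime factorizations produce the same ternary string, whose unique block decomposition forces the factorizations to have equal length and equal blocks; bijectivity of $I$ upgrades equal blocks to equal primes, and Prop.~\ref{prop:uniquetree} then gives $\sigma = \tau$.

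The one point that needs genuine care—and the step I expect to be the main obstacle—is the boundary behaviour of empty blocks, since $I$ is permitted to map some prime (necessarily $I^{-1}(\epsilon)$) to the empty string. I would check that splitting at the $2$'s remains well-defined and unique even in the presence of empty $w_i$ (for instance, the string $2$ parses as $w_1 = w_2 = \epsilon$, matching the length-two factorization $I^{-1}(\epsilon) \bowtie I^{-1}(\epsilon)$), and that every ternary string yields a factorization of length $n \geq 1$, so that its preimage genuinely lies in $\treedom^+$ rather than collapsing to a unit. Once this bookkeeping is settled, both surjectivity and injectivity drop out of the unique-parsing observation with no further calculation.
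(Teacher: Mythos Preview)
Your proposal is correct and follows exactly the route the paper intends: the paper gives no separate proof of this lemma at all, merely embedding the phrase ``By Prop.~\ref{prop:uniquetree}, $\hat{I}$ is bijective'' in the statement and treating both claims as immediate from unique prime factorization together with the bijectivity of $I$. Your argument is a faithful and careful unpacking of that one-line justification, including the delimiter-parsing observation and the empty-block edge case that the paper leaves entirely implicit.
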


Having the core encoding function $\hat{I}$ defined, it is now routine to establish the isomorphism from the tree structure $\brmodel^{+}$ to the string structure $\mathcal{S}'$:

\begin{lemma}\label{lm:iso}
Let $f$ be a function that maps the tree structure $\struct{\treedom^+,_{\tau}{\bowtie},\bowtie_{\tau}}$ into the string structure $\struct{\{0,1,2\},P_{s2},S_{2s}}$ such that:
\begin{enumerate}
\item For each tree $\tau \in \treedom^+$, we let $f(\tau) ~\defi~ \hat{I}(\tau)$.
\item For each function $_{\tau}{\bowtie}$, we let $f(_{\tau}{\bowtie}) ~\defi~ P_{\hat{I}(\tau)2}$.
\item For each function $\rbow{\tau}$, we let $f(\rbow{\tau}) ~\defi~ S_{2\hat{I}(\tau)}$.
\end{enumerate}

Then $f$ is an isomorphism from $\brmodel^{+}$ to $\mathcal{S}'$.
\end{lemma}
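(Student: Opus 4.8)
The plan is to treat $f$ as the carrier map $\hat{I}$ together with the prescribed relabelling of operator symbols, and to check the three things required of an isomorphism between structures whose only non-constant symbols are total unary operations: that the carrier map is a bijection, that it commutes with every operation under the relabelling, and that the relabelling is itself a bijection between the two operator families. For such structures a bijective homomorphism is automatically an isomorphism (the inverse map is then forced to be a homomorphism), so I need not separately verify anything about $f^{-1}$. The bijectivity of $f=\hat{I}$ from $\treedom^+$ onto $\{0,1,2\}^\ast$ is exactly the bijectivity asserted in Lemma~\ref{lm:bi}. I would also note in passing that $\brmodel^+$ is a legitimate structure, i.e. its operators are closed on $\treedom^+$: for $\tau,a\in\treedom^+$ one has $\tau\bowtie a\neq\circ$ (as neither factor is $\circ$) and $\tau\bowtie a\neq\bullet$ (since $\tau\bowtie a=\bullet$ forces $\tau=a=\bullet$), so $\lbow{\tau}$ and $\rbow{\tau}$ map $\treedom^+$ into $\treedom^+$.

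The heart of the verification is the two commutation identities, each of which reduces to the single algebraic property $\hat{I}(\tau_1\bowtie\tau_2)=\hat{I}(\tau_1)\cdot 2\cdot\hat{I}(\tau_2)$ supplied by Lemma~\ref{lm:bi}. For left multiplication, using $\lbow{\tau}(a)=\tau\bowtie a$ and that $P_{\hat{I}(\tau)2}$ prepends the string $\hat{I}(\tau)2$, I would compute
\[
f\big(\lbow{\tau}(a)\big)=\hat{I}(\tau\bowtie a)=\hat{I}(\tau)\cdot 2\cdot\hat{I}(a)=P_{\hat{I}(\tau)2}\big(\hat{I}(a)\big)=f\big(\lbow{\tau}\big)\big(f(a)\big),
\]
and symmetrically, since $\rbow{\tau}(a)=a\bowtie\tau$ and $S_{2\hat{I}(\tau)}$ appends $2\hat{I}(\tau)$,
\[
f\big(\rbow{\tau}(a)\big)=\hat{I}(a\bowtie\tau)=\hat{I}(a)\cdot 2\cdot\hat{I}(\tau)=S_{2\hat{I}(\tau)}\big(\hat{I}(a)\big)=f\big(\rbow{\tau}\big)\big(f(a)\big).
\]
It remains to see that $\lbow{\tau}\mapsto P_{\hat{I}(\tau)2}$ and $\rbow{\tau}\mapsto S_{2\hat{I}(\tau)}$ give a bijection of signatures. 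Because $\hat{I}$ is onto, $\hat{I}(\tau)$ runs over all of $\{0,1,2\}^\ast$ as $\tau$ runs over $\treedom^+$, so $\hat{I}(\tau)2$ runs over exactly the strings ending in $2$ (the prefix-successor index set of $\mathcal{S}'$) and $2\hat{I}(\tau)$ over exactly those beginning with $2$ (its suffix-successor index set); injectivity of $\hat{I}$ makes each assignment one-to-one, and prefix versus suffix keeps the two families disjoint. Combining the carrier bijection, the two identities, and this operator bijection yields that $f$ is an isomorphism.

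I expect the genuine difficulty to lie not in this lemma but in Lemma~\ref{lm:bi}, which I am permitted to assume: the present argument is purely mechanical once that encoding is in hand. The substantive content there is that reading each tree of $\treedom^+$ through its unique prime factorisation (Prop.~\ref{prop:uniquetree}) and interleaving the prime codes $I(\tau_i)$ with the fresh delimiter $2$ produces a bijection onto $\{0,1,2\}^\ast$ satisfying the concatenation law above. The delimiter is what makes this work: it ensures that gluing two ternary codes never merges adjacent prime-tree boundaries nor creates an ambiguous decoding, so that concatenation across a separating $2$ faithfully mirrors $\bowtie$ and $\hat{I}$ stays injective across factorisation boundaries. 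The main bookkeeping hazard in the present proof is simply keeping the signature correspondence straight—verifying that $f$ lands in the restricted successor structure $\struct{\{0,1,2\}^\ast,P_{s2},S_{2s}}$ with its index sets matched exactly—rather than any nontrivial computation.
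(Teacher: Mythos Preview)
Your proof is correct and follows exactly the route the paper intends: the paper states just before the lemma that ``it is now routine to establish the isomorphism'' given Lemma~\ref{lm:bi}, and provides no separate proof. Your mechanical verification of the carrier bijection, the two commutation identities via $\hat{I}(\tau_1\bowtie\tau_2)=\hat{I}(\tau_1)\cdot 2\cdot\hat{I}(\tau_2)$, and the signature match is precisely what ``routine'' means here, and is more explicit than anything the paper supplies.
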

\textbf{Proof of Lemma~\ref{lm:rmodel}}.
For the upper bound, observe that the function $f$ in Lemma~\ref{lm:iso} can be used to transform tree share formulae in $\brmodel^{+}$ to string formulae in $\mathcal{S}'$. It remains to ensure that the size of the string formula is not exponentially exploded. In particular, it suffices to construct $\hat{I}$ such that if a tree $\tau \in \treedom^+$ has size $n$, its corresponding string $\hat{I}(\tau)$ has linear size $O(n)$. Recall that $\hat{I}$ is extended from $I$ which can be constructed in many different ways. Thus to avoid the size explosion, we choose to specify the encoding function $I$ on the fly \emph{after observing the input tree share formula}. To be precise, given a formula $\Phi$ in $\brmodel$, we first factorize all its tree constants into prime trees, which can be done in log-space~\cite{le16:complex}. Suppose the formula has $n$ prime trees $\{\tau_i\}_{i=1}^{n}$ sorted in the ascending order of their sizes, we choose the most efficient binary encoding by letting $I(\tau_i) = s_i$ where $s_i$ is the $i^{\text{th}}$ string in length-lexicographic (shortlex) order of $\{0,1\}^\ast$ ,\emph{i.e.} $\{\epsilon,0,1,00,01,\ldots\}$. This encoding ensures that the size of $\tau_i$ and the length of $s_i$ only differ by a constant factor. Given the fact that a tree share in its factorized form $\tau_1 \bowtie \ldots \bowtie \tau_n$ only requires $O(\sum_{i=1}^n\hat{I}(\tau_i))$ bits to represent, we infer that its size and the length of its string counterpart $\hat{I}(\tau)$ also differ by a constant factor. Hence, the upper bound complexity is justified.

To prove the lower bound, we need to construct the inverse function $f^{-1}$ that maps the string structure $\mathcal{S}'$ into the tree share structure $\brmodel$. Although the existence of $f^{-1}$ is guaranteed since $f$ is isomorphism, we also need to take care of the size explosion problem. It boils down to construct an efficient mapping $I^{-1}$ from binary strings to prime trees by observing the input string formula $\Phi$. For each string constant $s_12\ldots 2s_n$ in $\Phi$ where $s_i \in \{0,1\}^*$, we extract all of the binary strings $s_i$. We then maps each distinct binary string $s_i$ to a unique prime tree $\tau_i$ as follows. Let $k(0) = \Tree [ $\bullet$ $\circ$ ]$, $k(1) = \Tree [ $\circ$ $\bullet$ ]$ and assume $s_i = a_0 \ldots a_m$ for $a_i \in \{0,1\}$, we compute $\tau = k(a_0) \bowtie \ldots \bowtie k(a_m)$. Then the mapped tree share for the string $s_i$ is constructed as $\tau_i = \Tree [ $\bullet$ $\tau$ ]$ (if $s_i = \epsilon$ then $\tau_i = \Tree [ $\bullet$ $\circ$ ]$ ). It follows that $\tau_i$ is prime and this skewed tree has size $O(n)$ where $n$ is the length of $s_i$. Thus the result follows. \qed

\begin{example}
Consider the tree formula $\forall a \exists b \exists c.~a = b \bowtie \Tree [ [ $\circ$ $\bullet$ ] $\circ$ ] \wedge b = \Tree [ $\circ$ [ $\bullet$ $\circ$ ] ] \bowtie c $. This formula contains two constants whose factorizations are below:
$$
c_1 = \Tree [ [ $\circ$ $\bullet$ ] $\circ$ ] ~=~ \ltree ~\bowtie~ \rtree \quad\quad\quad\quad
c_2 = \Tree [ $\circ$ [ $\bullet$ $\circ$ ] ] ~=~ \rtree ~\bowtie~ \ltree~.
$$

We choose $I$ such that $I(\ltree) = \epsilon$ and $I(\rtree) = 0$. Our encoding gives $s_1 = 20$ and $s_2 = 02$. This results in the string formula $ \forall a \exists b \exists c.~a = S_{220}(b) \wedge b = P_{022}(c)$ whose explicit form is $ \forall a \exists b \exists c.~a = b220 \wedge b = 022c$.

Now suppose that we want to transform the above string formula into equivalent tree formula. Following the proof of Lemma~\ref{lm:rmodel}, we extract from the formula two binary strings $s_1 = \epsilon$ and $s_2 = 0$ which are mapped to the prime trees $\tau_1 = \Tree [ $\bullet$ $\circ$ ]$ and $ \tau_2 = \Tree [ $\bullet$ [ $\bullet$ $\circ$ ] ]$ respectively. Hence the equivalent tree share formula is $ \forall a \exists b \exists c. a= \bowtie_{\tau_1 \bowtie \tau_2}(b) \wedge b = _{\tau_2 \bowtie \tau_1}\bowtie(c)$. It is worth noticing the difference between this tree formula and the original tree formula, which suggests the fact that the representation of the alphabet (\emph{i.e.} prime trees) is not important.
\end{example}

\hide{
The rest of this section is devoted to the proof of Lemma~\ref{lm:rmodel}. To prove the complexity $\fth{\brmodel^{+}}$, we construct a polynomial-time reduction from $\fth{\blmodel}$ to the structure $\mathcal{T}$ of binary trees with prefix and suffix successors. As shown in Prop.~\ref{prop:btree}, the complexity of $\fth{\mathcal{T}}$ is $\mathsf{STA}(\ast,2^{O(n)},n)$-complete which becomes the complexity for $\fth{\brmodel}$ as well. To begin with, we recall some key results from~\cite{le16:complex} about the construction of an isomorphism between trees and strings in word equation:

\begin{definition}[Prime trees~\cite{le16:complex}]
A tree $\tau$ in $\treedom \backslash \{\circ,\bullet\}$ is prime if $\tau = \tau_1 \bowtie \tau_2$ implies either $\tau_1 = \bullet$ or $\tau_2 = \bullet$.
\end{definition}

\begin{proposition}[Unique factorization~\cite{le16:complex}]\label{prop:uniquetree}
Each tree in $\treedom \backslash \{\circ,\bullet\}$ is uniquely represented as a sequence of prime trees $\{\tau_i\}_{i=1}^n$ s.t. $\tau = \tau_1 \bowtie \cdots \bowtie \tau_n$.

As a result, each tree in $\treedom \backslash \{\circ,\bullet\}$ can be treated as a string in a word equation in which the alphabet is $\mathbb{P}$, the countably infinite set of prime trees, and $\bowtie$ is the string concatenation.
\end{proposition}

We show how to encode trees using binary strings. Since $\mathbb{P}$ is countably infinite, we can find a bijective \emph{index function} $I:\mathbb{P} \mapsto \mathbb{N}$ that maps each prime tree to a natural. The mapping $\hat{I}$ from $\treedom^+$ to $\{0,1\}^*$ is constructed from $I$ by:

\begin{definition}
Let $\hat{I}: \treedom^+ \mapsto \{0,1\}^*$ be a mapping s.t.:
\begin{enumerate}
\item For each prime tree $\tau$, $\hat{I}(\tau) = 1^{I(\tau)}$ (with $1^0 = \epsilon$, the empty string).
\item For any tree $\tau \in \treedom^+$ s.t. $\tau = \tau_1 \bowtie \ldots \bowtie \tau_n$ for $\tau_i \in \mathbb{P}$, we represent $\tau$ with $I(\tau_1)0\ldots 0I(\tau_n)$, \emph{i.e.}, 0 is the delimiter between two prime trees.
\end{enumerate}
\end{definition}

\begin{lemma}\label{lm:bb}
The mapping $\hat{I}$ is bijective and for two trees $\tau_1,\tau_2 \in \treedom^+$, we have:
$$
\hat{I}(\tau_1 \bowtie \tau_2) = \hat{I}(\tau_1)0\hat{I}(\tau_2).
$$
\end{lemma}
We now consider the structure that is isomorphic to $\brmodel$. Let $\mathcal{T} = \struct{\{0,1\}^{*},S_0,S_1,P_0,P_1}$ be the structure of binary strings with two prefix and suffix successors, \emph{i.e.}, the universe is the set of binary strings and for each string $s \in \{0,1\}^*$, $S_0(s) = s0$, $S_1(s) = s1$, $P_0(s) = 0s$ and $P_1(s) = 1s$. We recall a complexity result for the structure $\mathcal{T}$:
\begin{proposition}[\cite{Rybina2003}]\label{prop:btree}
$\fth{\mathcal{T}}$ is $\mathsf{STA}(\ast,2^{O(n)},n)$-complete.
\end{proposition}

Now let us extend $\mathcal{T}$ to handle fixed constant prefixes and suffixes, as follows:
\begin{definition}
Let $s_1,\ldots,s_n \in \{0,1\}$ and $s = s_1\ldots s_n$ a binary string, the $s$-prefix function $P_s$ and $s$-suffix function $S_s$ are definable in $\mathcal{T}$ with linear size:
\begin{enumerate}
\item $P_s(s') \defi \exists v_1\ldots\exists v_n.~v_1 = s_{n} \wedge (\bigwedge_{i=1}^{n-1} v_{i+1} = P_{s_{n - i}}(v_i)) \wedge s' = v_n$.
\item $S_s(s') \defi \exists v_1\ldots\exists v_n.~v_1 = s_1 \wedge (\bigwedge_{i=1}^{n-1} v_{i+1} = S_{s_{i+1}}(v_i)) \wedge s' = v_n$.
\end{enumerate}

Furthermore, let $\hat{\mathcal{T}} = \struct{\{0,1\}^*,P_s,S_s}$ be the extended structure of $\mathcal{T}$
that contains infinitely many $s$-prefix functions $P_s$ and $s$-suffix functions $S_s$.
Then, since the definitions of $P_s$ and $S_s$ are linear in the size of $s$,
the complexity of $\fth{\hat{\mathcal{T}}}$ is the same as $\fth{\mathcal{T}}$,
which is $\mathsf{STA}(\ast,2^{O(n)},n)$-complete.
\end{definition}

From the above result, we construct the isomorphism from $\brmodel^{+}$ to $\hat{\mathcal{T}}$ which uses the extended index function $\hat{I}$ to guide the mapping from trees to binary strings. The correctness of the isomorphism is direct from Lemma~\ref{lm:bb}:

\begin{lemma}
Let $M:\struct{\treedom^+,_{\tau}{\bowtie},\bowtie_{\tau}} \mapsto \struct{\{0,1\}^*,P_s,S_s}$ s.t.:
\begin{enumerate}
\item For each tree $\tau \in \treedom^+$, we let $M(\tau) ~\defi~ \hat{I}(\tau)$.
\item For each function $_{\tau}{\bowtie}$, we let $M(_{\tau}{\bowtie}) ~\defi~ P_{\hat{I}(\tau)}$.
\item For each function $\rbow{\tau}$, we let $M(\rbow{\tau}) ~\defi~ S_{\hat{I}(\tau)}$.
\end{enumerate}

Then $M$ is an isomorphism.
\end{lemma}

There is one technical issue left: the mapped binary string should not have exponential length with respect to the size of the input tree. This can be done by constructing the index function $I$ \emph{after observing the input formula}. To be precise, given a formula $\Phi$ of $\brmodel$, we first factorize all its tree constants into prime trees, which is in $\mathsf{PTIME}$ as shown in~\cite{le16:complex}. Suppose the formula contains $n$ prime trees $\{\tau_i\}_{i=0}^{n-1}$ then we use the most efficient indexing, \emph{e.g.}, we let $I(\tau_i) = i$. Thus the size $\size{\tau_i}$ and the length of $1^{I(\tau_i)}$ differs by a factor of $O(n)$. Since a tree $\tau_1 \bowtie \ldots \bowtie \tau_n$ only needs $O(\sum_{i=1}^n\hat{I}(\tau_i))$ to represent, its size $\size{\tau}$ and the length of $\hat{I}(\tau)$ also differs by a factor of $O(n)$. Hence, the result follows.

\begin{example}
Consider the formula $\Phi ~\defi~ \forall a \exists b \exists c.~a ~=~ \rbow{\Tree [ [ $\circ$ $\bullet$ ] $\circ$ ]} (b) ~\wedge~ b ~=~ \lbow{\Tree [ $\circ$ [ $\bullet$ $\circ$ ] ]}(c)$. First we factorize all tree constants in $\Phi$:
$$
\Tree [ [ $\circ$ $\bullet$ ] $\circ$ ] ~=~ \ltree ~\bowtie~ \rtree \quad\quad\quad
\Tree [ $\circ$ [ $\bullet$ $\circ$ ] ] ~=~ \rtree ~\bowtie~ \ltree
$$

Let $I(\ltree) = 0$ and $I(\rtree) = 1$ then $\hat{I}(\ltree) = 1^0 = \epsilon$ and $\hat{I}(\rtree) = 1^1 = 1$. Hence the equivalent formula in $\hat{\mathcal{T}}$ is:
$$
\forall a \exists b \exists c.~a ~=~ S_{01}(b) ~\wedge~ b ~=~ P_{10}(c).
$$
\end{example}

\begin{corollary}
The first order theory over word equations in which the concatenation operation is made unary in the equivalent way, \emph{i.e.} $\cdot_k (w) \defi w \cdot k$ and $_k \! \cdot (w) \defi k \cdot w$ where $k$ is a constant, is decidable with an upper bound on its complexity of $\mathsf{STA}(\ast,2^{O(n)},n)$-complete.
\end{corollary}
}

\subsection{Combined $\lmodel$ formulae in practice}
\label{subsec:combformpractice}

The source of the nonelementary behavior comes from two factors. First, as
proven just above, it comes from the combination of both the additive and
multiplicative operations of tree shares.  Second, it comes from the number
of quantifier alternations in the formula being analyzed, due to the
encoding of $\lmodel$ in tree automata~\cite{le16:complex} and the resulting
upper bound (the transformed automata of first-order formulae of tree automatic
structures have sizes bounded by a tower of exponentials whose height is the
number of quantifier alternations~\cite{blumensath04:as,blumensath99:as}).

Happily, in typical verifications, especially in highly-automated verifications
such as those done by tools like HIP/SLEEK~\cite{le17:certproc}, the number
of quantifier alternations in formulae is small, even when carrying out complex
verifications or inference.  For example, consider the following biabduction
problem (a separation-logic-based inference procedure) handled by the ShareInfer
tool from~\cite{le17:logic}:
\[
a \stackrel{\pi}{\mapsto} (b,c,d) \star \Tree [ $\bullet$ $\circ$ ] \cdot \pi \cdot \mathsf{tree}(c) \star \null
\Tree [ $\circ$ $\bullet$ ] \cdot \pi \cdot \mathsf{tree}(d) \star [??] \vdash \null
\Tree [ $\bullet$ $\circ$ ] \cdot \pi \cdot \mathsf{tree}(a) \star [??]
\]
ShareInfer will calculate $\Tree [ $\bullet$ $\circ$ ] \cdot \pi \cdot \mathsf{tree}(d)$
for the antiframe and
$a \xmapsto{\pi ~ \bowtie ~ \Tree [ $\circ$ $\bullet$ ] } (b,c,d) \star \null
\Tree [ $\circ$ $\bullet$ ] \cdot \pi \cdot \mathsf{tree}(d)$ for the inference
frame.  Although these guesses are a bit sophisticated,
verifing them depends on~\cite{cris:thesis} the following
quantifier-alternation-free $\lmodel$ sentence: $\forall \pi, \pi'.~ \pi = \pi' ~ \Rightarrow
~ \rbow{\Tree [ $\bullet$ $\circ$ ] } \! (\pi) \oplus \rbow{\Tree [ $\bullet$ $\circ$ ] } \! (\pi) = \pi'$.
Even with complex loop invariants, more than one alternation would be surprising
because \emph{e.g.} verification tools tend to maintain formulae in well-chosen canonical forms.

Moreover, because tree automata are closely connected to other well-studied domains,
we can take advantage of existing tools such as MONA~\cite{monamanual2001}.
As an experiment we have hand-translated $\lmodel$ formulae
into WS2S, the language of MONA, using the techniques of~\cite{DBLP:journals/lmcs/ColcombetL07}. The technical details of the translation are provided in appendix~\S\ref{sec:formula}. 
For the above formula, MONA reported 205 DAG hits and 145 nodes, with essentially a 0ms running time.

Lastly, heuristics are well-justified both because of the restricted
problem formats we expect in practice as well as because of the nonelementary
worst-case lower bound we proved in \S\ref{sec:combresult}, opening the door
to newer techniques like antichain/simulation~\cite{simulation}.

\hide{
Lastly, by imposing certain restrictions, we can express
first-order formulae of $\lmodel$ using only right bowtie operators in
$\brmodel = \struct{\mathbb{T},_{\tau}\bowtie,\bowtie_{\tau}}$
(\S\ref{sec:bowresult}) and thus obtain a first-order fragment with
elementary complexity. First, we observe that the main purpose of Boolean
operators is to define the additive operator $\oplus$ and sub-share relation
$\sqsubseteq$ for permission reasoning, as demonstrated in
\S\ref{sec:boolean}. Second, by bounding the number of permission splits, we
can effectively express both $\oplus$ and $\sqsubseteq$ using the right
bowtie $\bowtie_\tau$ by finite enumeration method. The justification for
such bound is that permission splitting only finitely occurs at certain
places in the program such as recursive calls or functions calls, which can
be predetermined before the verification. As a result, we can define
operators in $\struct{\mathbb{T},\oplus',\sqsubseteq',\bowtie_\tau}$ using
operators in $\struct{\mathbb{T},_{\tau}\bowtie,\bowtie_{\tau}}$
(\S\ref{sec:bowresult}) as follows:
\[
\begin{array}{cccl}
a \oplus' b = c &~~\defi~~&\bigvee_{\text{height}(\tau_1) \leq n \wedge \text{height}(\tau_2) \leq n \wedge \tau_1 \oplus \tau_2 = \bullet} & a = c \bowtie \tau_1 \wedge b = c \bowtie \tau_2 \\
a \sqsubseteq' b &~~\defi~~& \bigvee_{ \text{height}(\tau) \leq n} & a = b \bowtie \tau
\end{array}
\]
where $n$ is an upper bound for permission splitting. By
Theorem~\ref{thm:rmodel}, we know that the restrictive first-order
fragment of $\struct{\mathbb{T},\oplus',\sqsubseteq',\bowtie_\tau}$ is in
$\mathsf{STA}(\ast,2^{O(n)},n)$.
We can use the reduction proposed in the proof of Lemma~\ref{lm:rmodel} to derive
an efficient method to transform tree share formulae
in $\brmodel^{+}$, into equivalent string formulae, which can then be handled
by string solvers
like Z3~\cite{Z3} and CVC4~\cite{Liang:2014}.
}

\hide{
\begin{algorithm}[t]
\begin{algorithmic}[1]
\caption{Transform a tree formula in $\brmodel^{+}$ into equivalent string formula}\label{al:transform}
\Function{\texttt{transform}}{$\Phi$}
\State $S \leftarrow \emptyset$
\For {each tree constant $\tau$ in $\Phi$}
\State factorize $\tau$ into prime trees $\tau_1 \bowtie \ldots \bowtie \tau_n$
\State $S \leftarrow S \cup \{\tau_1,\ldots,\tau_n\}$
\EndFor
\State sort $S$ in ascending order of tree size, \emph{i.e.} $S = \{\tau_1,\ldots , \tau_k\}$
\For {$i$ from $1$ to $k$}
\State assign $\tau_i$ to the smallest unassigned binary string $s_i$ using shortlex order
\EndFor
\State replace each term $\tau_{1} \bowtie \ldots \bowtie \tau_{n}$ in $\Phi$ with $s_12 \ldots 2s_n$
\EndFunction
\State \Return{$\Phi$}
\end{algorithmic}
\end{algorithm}

We use the reduction proposed in the proof of Lemma~\ref{lm:rmodel} to derive
an efficient method $\texttt{transform}$ (Alg.~\ref{al:transform}) that helps
transform tree share formulae in $\brmodel^{+}$, \emph{i.e.} when  trivial
solutions $\{\bullet,\circ\}$ are excluded, into equivalent string formulae.
From an engineering perspective, this method is useful to establish a solver
for multiplicative tree share constraints. In detail, the solver uses
$\texttt{transform}$ to translate the formula from the tree domain into a
more well-known string domain which can be solved by existing string solvers
such as Z3~\cite{Z3} and CVC4~\cite{Liang:2014}.
}

\section{Future work and conclusion}
\label{sec:conclude}
We have developed a tighter understanding of the complexity of the tree share
model.  As Boolean Algebras, their first-order theory is
$\mathsf{STA}(\ast,2^{n^{O(1)}},n)$-complete, even with arbitrary tree
constants in the formulas.  Although the first-order theory over tree
multiplication is undecidable~\cite{le16:complex}, we have found that by
restricting multiplication to be by a constant (on both the left
$\lbow{\tau}$ and right $\rbow{\tau}$ sides) we obtain a substructure
$\brmodel$ whose first-order theory is
$\mathsf{STA}(\ast,2^{O(n)},n)$-complete.  Accordingly, we have two
structures whose first-order theory has elementary complexity.
Interestingly, their combined theory is still decidable but nonelementary, even if we only allow
multiplication by a constant on the right $\rbow{\tau}$.

We have several directions for future work. It is natural to investigate the
precise complexity of the existential theory with the Boolean operators
and right-sided multiplication $\rbow{\tau}$ (structure $\lmodel$). The encoding
into tree-automatic structures from \cite{le16:complex} provides only an
exponential-time upper bound (because of the result for the corresponding
fragment in tree-automatic structures, e.g., see \cite{anthony-thesis}), and 
there is the obvious NP lower bound that comes
from propositional logic satisfiability.
We do not know if the Boolean operators $(\sqcup, \sqcap, \bar{\cdot})$ in
combination with the left-sided multiplication $\lbow{\tau}$ is decidable
(existential or first order, with or without the right-sided multiplication
$\rbow{\tau}$). Determining if the existential theory with the Boolean
operators and \emph{unrestricted}
multiplication $\bowtie$ is decidable also
seems challenging.
We would also like to know if the monadic second-order
theory over these structures is decidable.
\\\\
\textbf{Acknowledgement.} We would like to thank anonymous referees for their constructive reviews. Le and Lin are partially supported by the European Research Council (ERC) under the European Union’s Horizon 2020 research and innovation programme (grant agreement no 759969). Le and Hobor are partially supported under Yale-NUS College grant R-607-265-322-121.

\bibliography{share_complex}
\appendix
\section{Appendix}\label{sec:formula}
\begin{figure}
\begin{verbatim}
ws2s;
pred ant(var2 Y) = 
 all1 x,y: (x~=y & x in Y & y in Y) => (~(x<=y) & ~(y<=x));
pred maxt(var2 X,var2 Y) = 
  X sub Y & ex1 r:all1 x: x in X =>
  (r <= x & all1 z: r <= z => ex1 x': x' in X & (z <= x' | x' <= z));
pred roott(var1 x,var2 X) = 
  all1 y: y in X & x <= y & all1 z:all1 y':y' in X & z <= y' => x <= z;
pred subt(var2 X, var2 Y) = 
  all1 x1:all2 X':(maxt(X',X) & roott(x1,X')) =>
  (ex2 Y':maxt(Y',Y) => roott(x1,Y'));
pred eqt(var2 X, var2 Y) = 
  subt(X,Y) & subt(Y,X);
pred singleton(var2 X) = 
  ex1 x: x in X & (all1 y: y in X => x = y);
pred uniont(var2 X,var2 Y,var2 Z) = 
  Z = X union Y & empty(X inter Y);
pred mint(var2 X) = 
  all2 Y: maxt(Y,X) => singleton(Y);
pred sub0(var2 X, var2 X0) = 
  all1 x:x in X <=> x.0 in X0;
pred sub1(var2 X, var2 X0) = 
  all1 x:x in X <=> x.1 in X0;
pred leftMul(var2 X,var2 X') = 
  all2 Y:(eqt(X,Y) & mint(Y)) => sub0(Y,X');
pred rightMul(var2 X,var2 X') = 
  all2 Y:(eqt(X,Y) & mint(Y)) => sub1(Y,X');

all2 X,X',XL,XR,XU:
  (ant(X) & ant(X') & ant(XL) & ant(XR) & ant(XU) & eqt(X,X') & 
   leftMul(X,XL) & rightMul(X,XR) & uniont(XL,XR,XU)) => (eqt(XU,X'));
\end{verbatim}
\caption{The transformation of tree share formula in \S\ref{subsec:combformpractice} into equivalent WS2S formula.}\label{fig:formula}
\end{figure}

Fig.~\ref{fig:formula} contains the MONA WS2S encoding of the following tree share formula
\[
\forall \pi, \pi'.~ \pi = \pi' ~ \Rightarrow
~ (\pi \bowtie \Tree [ $\circ$ $\bullet$ ])   \oplus (\pi \bowtie \Tree [ $\bullet$  $\circ$ ]) = \pi' .
\]
where lower case letters are for variables of binary strings and upper case
letters are for second-order monadic predicates. The last three lines in the
code are the formulas with a number of macros defined in the previous lines.
Essentially, each tree share is
represented by a second-order variable whose elements are \emph{antichains} that
describes a single path to one of its black leaves. Roughly speaking, the $\texttt{eqt}$ predicate checks whether two tree shares are equal, $\texttt{leftMul}$ and $\texttt{rightMul}$ correspond to the multiplicative predicates $\bowtie_{\Tree [ $\bullet$ $\circ$ ]}$ and $\bowtie_{\Tree [ $\circ$ $\bullet$ ]}$ respectively, and $\texttt{uniont}$ computes the additive operator $\oplus$. Other additional predicates are necessary for the consistent representation of the tree shares. In detail, $\texttt{singleton(X)}$ means that $\texttt{X}$ has exactly one element, $\texttt{ant}$ makes sure any two antichains in the same tree are neither prefix of the other, $\texttt{maxt(X,Y)}$ enforces that $\texttt{X}$ is the maximal antichain of $\texttt{Y}$, $\texttt{roott(x,X)}$ asserts $\texttt{x}$ is the root of $\texttt{X}$, $\texttt{subt}$ is a subset-like relation betweens two trees, while $\texttt{mint}$ specifies the canonical form. Lastly, we have $\texttt{sub0}$ and $\texttt{sub1}$ as the intermediate predicates for the multiplicative predicates.

\end{document}